\theoremstyle{plain}
\newtheorem{thm}{Theorem}[section]
\newtheorem{lem}[thm]{Lemma}
\newtheorem{prop}[thm]{Proposition}
\theoremstyle{definition}
\newtheorem{asp}[thm]{Assumption}
\theoremstyle{remark}
\newtheorem{rem}{Remark}
\newcommand{\pder}[2][]{\frac{\partial#1}{\partial#2}}
\DeclareMathOperator*{\esssup}{ess\,sup}
\newcommand{\calA}{\mathcal{A}}
\newcommand{\calC}{\mathcal{C}}
\newcommand{\calF}{\mathcal{F}}
\newcommand{\calL}{\mathcal{L}}
\newcommand{\calS}{\mathcal{S}}
\newcommand{\scrF}{\mathscr{F}}
\newcommand{\scrL}{\mathscr{L}}
\newcommand{\E}{\mathbb{E}}				
\newcommand{\prob}{\mathbb{P}}		
\newcommand{\Q}{\mathbb{Q}}
\newcommand{\vm}[1]{\mathbf{#1}}
\begin{document}


\title{A Put-Call Transformation of the Exchange Option Problem under Stochastic Volatility and Jump-Diffusion Dynamics}

\author{
\name{Len Patrick Dominic M. Garces\textsuperscript{a,b}\thanks{Corresponding Author; E-mail: len\_patrick\_dominic.garces@mymail.unisa.edu.au; ORCiD: \url{https://orcid.org/0000-0002-2737-7348
}} and Gerald H. L. Cheang\textsuperscript{a}\thanks{E-mail: Gerald.Cheang@unisa.edu.au; ORCiD: \url{https://orcid.org/0000-0003-3786-0285}}}
\affil{\textsuperscript{a}Centre for Industrial and Applied Mathematics, School of Information Technology and Mathematical Sciences, University of South Australia, Mawson Lakes SA 5095, Australia; \textsuperscript{b}Department of Mathematics, School of Science and Engineering, Ateneo de Manila University, Quezon City 1108, Metro Manila, Philippines}
}

\maketitle

\begin{abstract}
We price European and American exchange options where the underlying asset prices are modelled using a \citet{Merton-1976} jump-diffusion with a common \citet{Heston-1993} stochastic volatility process. Pricing is performed under an equivalent martingale measure obtained by setting the second asset yield process as the num\'eraire asset, as suggested by \citet{Bjerskund-1993}. Such a choice for the num\'eraire reduces the exchange option pricing problem, a two-dimensional problem, to pricing a call option written on the ratio of the yield processes of the two assets, a one-dimensional problem. The joint transition density function of the asset yield ratio process and the instantaneous variance process is then determined from the corresponding Kolmogorov backward equation via integral transforms. We then determine integral representations for the European exchange option price and the early exercise premium and state a linked system of integral equations that characterizes the American exchange option price and the associated early exercise boundary. Properties of the early exercise boundary near maturity are also discussed.
\end{abstract}

\begin{keywords}
Exchange options; change-of-num\'eraire; jump diffusion processes; put-call transformation; stochastic volatility
\end{keywords}

\section{Introduction}

We investigate the pricing of European and American exchange options written on assets with prices driven by stochastic volatility and jump-diffusion (SVJD) dynamics. The earliest analysis of European exchange options was that of \citet{Margrabe-1978} who assumed that the underlying non-dividend-paying stocks are modelled with correlated geometric Brownian motions. Assuming that the European exchange option price is linear homogeneous in the stock prices, \citet{Margrabe-1978} transformed the problem to the classical European call option pricing problem and computed the exchange option price using the solution of \citet{BlackScholes-1973}. \citet{Fischer-1978} considered a closely related problem of determining call option prices when the exercise price is also a diffusion process. \citet{Bjerskund-1993} considered the pricing of American exchange options as an optimal stopping problem in a pure diffusion setting. They suggested that by choosing one of the stocks as the num\'eraire and by a change of measure to the corresponding equivalent martingale measure, the American exchange option pricing problem may be simplified to the problem of pricing an American call or put option. \citet{Bjerskund-1993} refer to this technique as the ``put-call transformation.''

With well-established evidence pointing to the deficiencies of the geometric Brownian motion in accurately modelling asset price returns, there has since been a movement to study option prices (including exchange options) under alternative asset price models.\footnote{The empirical literature addressing the limitations of the \citet{BlackScholes-1973} is extremely rich and will not be reviewed in its totality here. Instead, we invite the reader to consult \citet{Bakshi-1997}, \citet{Duffie-2000}, \citet{Cont-2001}, \citet{Andersen-2002}, \citet{Chernov-2003}, \citet{Eraker-2003}, \citet{Kou-2008}, and the references therein.} \citet{CheangChiarellaZiogas-2006} used the put-call transformation technique to price European exchange options where the underlying assets are modelled using correlated \citet{Merton-1976} jump-diffusion models. With the same asset price model specification, \citet{CheangChiarella-2011} and \citet{Caldana-2015} studied European and American exchange option prices using the risk-neutral approach. \citet{PetroniSabino-2018} priced European exchange options assuming that underlying asset price jumps are correlated and considered applications in energy markets. \citet{Antonelli-2010}, \citet{Alos-2017}, and \citet{KimPark-2017} priced European exchange options where underlying assets are driven by stochastic volatility models. Notably, \citet{Alos-2017} also employed the put-call transformation and discussed hedging under the resulting martingale measure. \citet{Fajardo-2006} also used a similar transformation, which they called the ``dual market method'', to price options (including perpetual exchange options) when underlying prices are driven by L\'evy processes. \citet{Ma-2020} analyzed European exchange options when asset prices are modelled using Hawkes jump-diffusion processes, allowing for jump-contagion in individual assets and jump interdependence among multiple assets. More recently, \citet{CheangGarces-2019}, derived analytical representations for the European and American exchange option prices assuming that stock prices are modelled using a pair of \citet{Bates-1996} stochastic volatility and jump-diffusion dynamics. 

In this paper, we derive integral representations of the price of European and American exchange options when underlying asset prices are modelled using \citet{Merton-1976} jump-diffusion dynamics with a common underlying \citet{Heston-1993} stochastic volatility process. To do so, we follow the suggested approach of \citet{Bjerskund-1993} and assign the yield process of the second stock as the num\'eraire asset, in contrast to using the money market account as the num\'eraire. Doing so simplifies the original two-dimensional problem to a one-dimensional problem of pricing an ordinary call option written on the ratio of the yield processes of the two underlying assets.\footnote{While the title of this paper mentions a ``put-call transformation'', our main analysis consists of transforming the exchange option pricing problem to the problem of pricing a call option on the asset yield ratio by choosing the second asset yield process as the num\'eraire. The exchange option problem can be reduced to a put option on the asset yield ratio if one alternatively chooses the yield process of the first asset as the num\'eraire.} This allows us to follow the techniques of \citet{CheangChiarellaZiogas-2013} who analyzed single-asset American call options under the \citet{Bates-1996} stochastic volatility and jump-diffusion dynamics. 

Our main contribution is to extend the \citet{Bjerskund-1993} strategy for valuing American exchange options in a pure-diffusion setting into the SJVD framework. In contrast to \citet{Margrabe-1978} and \citet{Alos-2017}, we place significant added attention to pricing American exchange options using the put-call transformation. Our analysis also extends the study of \citet{CheangChiarellaZiogas-2006} and \citet{CheangChiarella-2011} to consider American exchange options and European exchange options when assets are driven by a stochastic volatility \emph{and} jump-diffusion model. 

In this analysis, pricing takes place under the equivalent martingale measure $\hat{\Q}$ corresponding to setting the second asset yield process as the num\'eraire. Under $\hat{\Q}$, we find that the the no-arbitrage price of the European exchange option can be written as a function of only the asset yield ratio $\tilde{s}$ and the instantaneous variance $v$. Furthermore, we verify an early exercise representation of the discounted American exchange option price, which can also be written as a function of only $\tilde{s}$ and $v$. 

To evaluate expectations under $\hat{\Q}$, we require the joint transition density function of $\tilde{s}$ and $v$. To do so, we determine the Kolmogorov equation for the transition density and solve this using Fourier and Laplace transforms.\footnote{This method has been used by \citet{ChiarellaZiogas-2009} to price American call options under jump-diffusion dynamics, by \citet{ChiarellaZiogasZiveyi-2010} to price American call options under stochastic volatility, by \citet{CheangChiarellaZiogas-2013} to price American call options under SVJD dynamics, and by \citet{ChiarellaZiveyi-2014} to price American spread call options under pure diffusion dynamics.} With the joint transition density function, we then provide integral representations for the European exchange option price and the early exercise premium and present a linked system of integral equations that characterize the price of the American exchange option and the unknown early exercise boundary. We also note that the joint transition density function we obtain in this analysis enables us to price any European-type option written on the two assets provided that its payoff function can be written in terms of the terminal asset yield ratio and instantaneous variance.

Our analysis also serves as an alternative of that of \citet{CheangGarces-2019} in the following aspects. 
\begin{enumerate}
	\item Our SVJD model specification allows us to incorporate the possibility of correlation between the asset price returns and between each individual asset price process and the instantaneous variance process. 
	\item We take one of the asset yield processes as the num\'eraire instead of the money market account used by \citet{CheangGarces-2019}. 
	\item We express option prices in terms of the transition density function of the underlying stochastic processes under $\hat{\Q}$.
	\item We provide a more in-depth analysis of the early exercise boundary (particularly its behavior near maturity) and the early exercise premium for the American exchange option, and thus extending the earlier work of \citet{ChiarellaZiogas-2009} who investigated the limit of the earliy exercise boundary for American call options under jump-diffusion dynamics.
\end{enumerate}

We are primarily concerned with obtaining analytical representations for exchange option prices under our SVJD model. As such, it is not the goal of this paper to discuss the numerical solution of the option pricing problem or the calibration of model parameters of observable market data as these matters warrant their own dedicated exposition.\footnote{There is considerable extant work on calibrating the parameters of stochastic volatility and/or jump-diffusion models in respect to option pricing. These results are discussed in the works cited in the first footnote of this paper.} However, we discuss how our results set the stage for numerical implementation.

Note that while the succeeding analysis focuses on exchange options written on stocks, one may consider exchange options written on other assets such as indices and foreign currencies. For foreign currencies, in particular, the dividend yields are replaced by risk-free interest rates in the domestic and foreign money markets. \citet{Siegel-1995} explains how exchange options can be used to estimate the ``implicit beta'' between an underlying stock and a given market index. The exchange option framework may be adapted to investigate real options \citep{Kensinger-1988, Carr-1995}, outperformance options \citep{CheangChiarella-2011}\footnote{\citet{CheangChiarella-2011} assumed that only one asset price process had jumps while the other was modelled as a pure-diffusion process. \citet{QuittardPinon-2010} discuss in greater detail the European exchange option pricing problem under a similar model specification.}, energy market options \citep[surveyed in][]{Benth-2015}, and the option to enter/exit an emerging market \citep{Miller-2012}, among others. \citet{Ma-2020} provide additional examples of financial contracts which can be priced under the exchange option framework.

The rest of the paper is organized as follows. Section \ref{sec-PutCall-SVJDModel} specifies the stochastic volatility and jump-diffusion model for underlying asset prices, discusses the construction of the measure $\hat{\Q}$, and presents the dynamics of the asset yield ratio and variance processes under $\hat{\Q}$. Section \ref{sec-PutCall-ExchangeOptionIPDE} discusses the integro-partial differential equation (IPDE) for the discounted European exchange option price and the free-boundary IPDE for the American exchange option. Section \ref{sec-PutCall-EER} uses probabilistic arguments to verify the early exercise representation of the American exchange option price and to express the American exchange option price as a solution of an inhomogeneous IPDE to be solved over a domain unrestricted by the early exercise boundary. Section \ref{sec-PutCall-EEBLimit} discusses some properties of the early exercise boundary near the maturity of the option. Section \ref{sec-PutCall-TransitionDensityFunction} shows the solution of the Kolmogorov equation for the joint transition density function using integral transform methods. With the transition density function, Sections \ref{sec-PutCall-EuExcOpPrice} and \ref{sec-PutCall-EEP} present the integral representations for the European and American exchange option prices, respectively. Section \ref{sec-PutCall-Conclusion} concludes the paper. Proofs of some results which involve lengthy, but otherwise rather elementary, calculations are given as appendices.

\section{A Stochastic Volatility Jump-Diffusion Model}
\sectionmark{Underlying Asset Price Model}
\label{sec-PutCall-SVJDModel}

In this section, we discuss the model specification for the underlying stock prices. Let $(\Omega,\calF,\prob)$ be a probability space equipped with a filtration $\{\calF_t\}_{0\leq t\leq T}$ satisfying the usual conditions. Here, $T>0$ represents the maturity date of the exchange option. Let $\{W_1(t)\}$, $\{W_2(t)\}$, and $\{Z(t)\}$ be standard $\prob$-Brownian motions with instantaneous correlations given by $\dif W_1(t)\dif W_2(t) = \rho_w \dif t$ and $\dif W_j(t)\dif Z(t) = \rho_j\dif t$, for $j=1,2.$ Denote by $\bm{\Sigma}$ the correlation matrix of the random vector $\vm{B}(t) = (W_1(t),W_2(t),Z(t))^\top$. Let $p(\dif y_j,\dif t)$ ($j=1,2$) be the counting measure associated to a marked Poisson process with $\prob$-local characteristics $(\lambda_j,m_\prob(\dif y_j))$.\footnote{See \citet{Runggaldier-2003} for more details.} Underlying $p(\dif y_j,\dif t)$ is a sequence of ordered pairs $\{(T_{i,n},Y_{i,n})\}$ where $Y_{i,n}$ is the ``mark'' of the $n$th occurrence of an event that occurs at a non-explosive time $T_{i,n}$. The marks $Y_{j,1},Y_{j,2},\dots$ are i.i.d. real-valued random variables with non-atomic density $m_\prob(\dif y_j)$. Associated to the event times, we define a Poisson counting process $\{N_j(t)\}$ given by $N_j(t) = \sum_{n=1}^\infty \vm{1}(T_{j,n}\leq t)\vm{1}(Y_{j,n}\in\mathbb{R}),$ where $\vm{1}(\cdot)$ is the indicator function. For simplicity, we assume that the intensities $\lambda_j$ and jump-size densities $m_\prob(\dif y_j)$ are constant through time, although the subsequent analysis can be extended to the case where the intensities and densities are deterministic functions of time.

We assume that the counting measures are independent of the Brownian motions and of each other. Henceforth, we assume that $\{\calF_t\}$ is the natural filtration generated by the Brownian motions and the counting measures, augmented with the collection of $\prob$-null sets.

Denote by $\{S_1(t)\}$ and $\{S_2(t)\}$ the price processes of two assets that pay a constant dividend yield of $q_1$ and $q_2$, respectively, per annum. As stock prices may jump, we let $S_1(t)$ and $S_2(t)$ denote the stock prices \emph{prior to any jumps occurring at time $t$}. Let $\{v(t)\}$ be the instantaneous variance process that governs the volatility of both stock price processes. We assume that the dynamics of the stock prices and the instantaneous variance satisfy the stochastic differential equations
\begin{align}
\label{eqn-PutCall-StockPriceSDE-P}
\frac{\dif S_j(t)}{S_j(t)}	& = (\mu_j-\lambda_j\kappa_j)\dif t+\sigma_j \sqrt{v(t)}\dif W_j(t)+\int_{\mathbb{R}}\left(e^{y_j}-1\right)p(\dif y_j,\dif t), \qquad j=1,2,\\
\label{eqn-PutCall-VarianceSDE-P}
\dif v(t) & = \xi\left(\eta-v(t)\right)\dif t + \omega\sqrt{v(t)}\dif Z(t).
\end{align}
Here, $\kappa_j \equiv \E_\prob[e^{Y_j}-1] = \int_{\mathbb{R}}(e^{y_j}-1)m_\prob(\dif y_j)$ is the mean jump size of the price of asset $j$ under $\prob$, and $\mu_j$, $\sigma_j$, $\xi$, $\eta$, and $\omega$ are positive constants. It is also assumed that the initial values of these stochastic processes are positive. We refer to this model as the \emph{proportional stochastic volatility and jump-diffusion (SVJD) model.}

As described above, the model features a common instantaneous variance process and independent jump terms for each asset. The individual jump processes may be taken to model idiosyncratic risk factors in each asset that cause sudden changes in returns.\footnote{In contrast, \citet{CheangChiarella-2011} introduced an additional compound Poisson process appearing in both asset return processes which capture macroeconomic shocks or systematic risk factors which may introduce sudden jumps in returns.} Although extremely rare, it is possible that jumps for both stocks arrive at the same time, representing market shocks or sudden events that may affect both assets. In addition, the common variance process models systematic market volatility or volatility at the macroeconomic level. As such, individual asset prices may provide feedback to each other via the correlation between the diffusion components and the dependence on a common stochastic volatility.

We also assume the existence of a money market account whose value process is denoted by $\{M(t)\}$, with $M(t)=e^{rt}$ for $t\geq 0$, where $r>0$ is the (constant) risk-free interest rate. We require the following assumption on the parameters of the variance process and the correlation parameters to ensure that $\{v(t)\}$ remains strictly positive and finite for all $0\leq t\leq T$ under $\prob$ and any other probability measure equivalent to $\prob$ \citep{AndersenPiterbarg-2007}.

\begin{asp}
\label{asp-PutCall-VarianceParameterAssumptions}
The parameters $\xi$, $\eta$, and $\omega$ and the correlation coefficients $\rho_1$ and $\rho_2$ satisfy $2\xi\eta\geq\omega^2$ and $-1<\rho_j<\min\left\{\xi/\omega,1\right\}$, $j=1,2$.
\end{asp}

Straightforward calculations using It\^o's Lemma for jump-diffusions show that equation \eqref{eqn-PutCall-StockPriceSDE-P} admits a solution of the form $$S_j(t) = S_j(0)\exp\Bigg\{(\mu_j-\lambda_j\kappa_j)t-\frac{1}{2}\sigma_j^2\int_0^t v(s)\dif s+\sigma_2\int_0^t \sqrt{v(s)}\dif W_j(s)+\sum_{n=1}^{N_j(t)}Y_{j,n}\Bigg\},$$ for $0<t\leq T$. Assumption \ref {asp-PutCall-VarianceParameterAssumptions} and the non-explosion assumption on the point processes imply that the integrals and summation that appear above are well-defined. It also follows that $S_j(t)>0$ $\prob$-a.s. for all $t\in[0,T]$, and hence either asset can be used as a num\'eraire. 

Instead of the money market account, we take $\{S_2(t)e^{q_2 t}\}$, the second asset yield process, as the num\'eraire and define the probability measure $\hat{\Q}$, equivalent to $\prob$, such that the first asset yield process and the money market account, when discounted by $S_2(t)e^{q_2 t}$, are martingales under $\hat{\Q}$. With the second asset yield process as the num\'eraire, the \emph{discounted price} of any other asset with price process $\{X(t)\}$ is defined by $\tilde{X}(t) = X(t)/(S_2(t)e^{q_2 t})$. In the absence of arbitrage opportunities, the appropriate discounting factor for the period $[t,t']$ for any $t'\in[t,T]$ is given by $DF(t,t') = S_2(t)e^{q_2 t}/(S_2(t')e^{q_2 t'})$.

Next, we discuss the construction of the equivalent probability measure $\hat{\Q}$. The following standard proposition specifies the form of the Radon-Nikod\'ym derivative $\frac{\dif\hat{\Q}}{\dif\prob}$.

\begin{prop}
\label{prop-PutCall-ChangeofMeasure}
Suppose $\bm{\theta}(t) = \left(\psi_{1}(t),\psi_{2}(t),\zeta(t)\right)^\top$ is a vector of $\calF_t$-adapted processes and let $\gamma_1,\gamma_2,\nu_1,\nu_2$ be constants. Define the process $\{L_t\}$ by 
\begin{align}
\begin{split}
\label{eqn-PutCall-RNDerivative}
L(t)	& = \exp\left\{-\int_0^t\left(\bm{\Sigma}^{-1}\bm{\theta}(s)\right)^\top\dif\vm{B}(s)-\frac{1}{2}\int_0^t\bm{\theta}(s)^\top\bm{\Sigma}^{-1}\bm{\theta}(s)\dif s\right\}\\
		& \qquad \times\exp\left\{\sum_{n=1}^{N_{1}(t)}(\gamma_1 Y_{1,n}+\nu_1)-\lambda_1 t\left(e^{\nu_1}\E_\prob(e^{\gamma_1 Y_{1}})-1\right)\right\}\\
		& \qquad \times\exp\left\{\sum_{n=1}^{N_{2}(t)}(\gamma_2 Y_{2,n}+\nu_2)-\lambda_2 t\left(e^{\nu_2}\E_\prob(e^{\gamma_2 Y_{2}})-1\right)\right\}
\end{split}
\end{align}
and suppose that $\{L(t)\}$ is a strictly positive $\prob$-martingale such that $\E_\prob[L(t)]=1$ for all $t\in[0,T]$. Then $L(T)$ is the Radon-Nikod\'ym derivative of some probability measure $\hat{\Q}$ equivalent to $\prob$ and the following hold:
\begin{enumerate}
	\item Under $\hat{\Q}$, the vector process $\vm{B}(t)$ has drift $-\bm{\theta}(t)$;
	\item The Poisson process $N_j(t)$ has a new intensity $\tilde{\lambda}_j = \lambda_j e^{\nu_j}\E_\prob[e^{\gamma_j Y_j}]$, $j=1,2$	under $\hat{\Q}$; and
	\item The moment generating function of jump sizes random variable $Y_j$ under $\hat{\Q}$ is given by $M_{\hat{\Q},Y_j}(u) = M_{\prob,Y_j}(u+\gamma_j)/M_{\prob,Y_j}(\gamma_j)$, $j=1,2.$
\end{enumerate}
\end{prop}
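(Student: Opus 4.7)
The plan is to proceed by a Girsanov-type argument, factoring the candidate density $L(t)$ into a Brownian piece and two independent jump pieces, and then reading off the new dynamics from the corresponding compensator transformations.

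Since $L$ is assumed to be a strictly positive $\prob$-martingale with $\E_\prob[L(t)]=1$, the Radon-Nikod\'ym theorem produces an equivalent probability measure $\hat{\Q}$ on $\calF_T$ with $L(T)$ as its density. Using the independence of the Brownian motions from the marked Poisson processes, and of the two Poisson processes from each other, I would factor $L(t)=L^{W}(t)L^{J_1}(t)L^{J_2}(t)$, where $L^{W}$ and $L^{J_j}$ are the three exponential factors appearing in \eqref{eqn-PutCall-RNDerivative}; each factor is itself a $\prob$-martingale.

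For part 1, set $\bm{\phi}(s)=\bm{\Sigma}^{-1}\bm{\theta}(s)$. Applying It\^o's lemma to $L^{W}$ gives $dL^{W}/L^{W}=-\bm{\phi}(t)^\top d\vm{B}(t)$, and using $d\vm{B}(t)\,d\vm{B}(t)^\top=\bm{\Sigma}\,dt$, the vector cross-variation becomes
\begin{equation*}
d[L^{W},\vm{B}]_t \;=\; -L^{W}(t)\,\bm{\Sigma}\,\bm{\phi}(t)\,dt \;=\; -L^{W}(t)\,\bm{\theta}(t)\,dt.
\end{equation*}
The multivariate Girsanov theorem then shows that $\vm{B}(t)+\int_0^t\bm{\theta}(s)\,ds$ is a continuous $\hat{\Q}$-local martingale whose covariation matrix remains $\bm{\Sigma}\,dt$, so under $\hat{\Q}$ the process $\vm{B}$ acquires drift $-\bm{\theta}$, which is the claim of part 1.

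For parts 2 and 3, I would rewrite each jump factor $L^{J_j}$ as the Dol\'eans-Dade exponential of the compensated stochastic integral
\begin{equation*}
\int_0^t\!\!\int_{\mathbb{R}}\bigl(e^{\gamma_j y_j+\nu_j}-1\bigr)\bigl(p(dy_j,ds)-\lambda_j m_\prob(dy_j)\,ds\bigr),
\end{equation*}
whose logarithm coincides with the jump exponent in \eqref{eqn-PutCall-RNDerivative} upon expanding the summation as an integral against the counting measure. The Girsanov theorem for random measures then identifies the $\hat{\Q}$-predictable compensator of $p(dy_j,ds)$ as $e^{\gamma_j y_j+\nu_j}\lambda_j m_\prob(dy_j)\,ds$. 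Integrating out $y_j$ yields the new intensity $\tilde{\lambda}_j=\lambda_j e^{\nu_j}\E_\prob[e^{\gamma_j Y_j}]$ of part 2, and normalizing the compensator produces the Esscher-transformed mark density $m_{\hat{\Q}}(dy_j)=e^{\gamma_j y_j}m_\prob(dy_j)/\E_\prob[e^{\gamma_j Y_j}]$; substituting this into the definition of the MGF immediately gives $M_{\hat{\Q},Y_j}(u)=M_{\prob,Y_j}(u+\gamma_j)/M_{\prob,Y_j}(\gamma_j)$, which is part 3.

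The main obstacle is conceptual rather than computational: one must treat the vector Girsanov transformation correctly in the presence of correlated Brownian motions (the matrix $\bm{\Sigma}^{-1}$ in the density is precisely what makes the cross-variation collapse to $-\bm{\theta}\,dt$), and one must justify the Girsanov theorem for random measures on each jump factor separately. The latter is ultimately secured by the hypothesis that $L$ itself is a true $\prob$-martingale with unit expectation, which subsumes the usual Novikov-type integrability conditions that would otherwise need to be checked factor-by-factor.
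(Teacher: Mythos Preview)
Your argument is correct and is precisely the standard Girsanov-for-semimartingales proof: factor the density into a diffusion piece and independent jump pieces, apply the vector Girsanov theorem for the correlated Brownian part (your observation that $\bm{\Sigma}^{-1}$ is exactly what makes the cross-variation collapse to $-\bm{\theta}\,dt$ is the key point there), and apply the Girsanov theorem for random measures to each jump factor to extract the new compensator and hence the Esscher-transformed intensity and mark density.

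The paper itself does not give a proof at all; it simply refers the reader to \citet[Theorem 2.4]{Runggaldier-2003} and \citet[Theorem 1]{CheangTeh-2014}. Your proposal essentially reconstructs the content of those references, so in that sense you have supplied what the paper omits rather than taken a different route.
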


\begin{proof}
See e.g. \citet[Theorem 2.4]{Runggaldier-2003} and \citet[Theorem 1]{CheangTeh-2014}.
\end{proof}

The Radon-Nikod\'ym derivative $L(T) = \frac{\dif\hat{\Q}}{\dif\prob}$ can be used to characterize any probability measure $\hat{\Q}$ equivalent to $\prob$ as parameterized by the vector process $\{\bm{\theta}(t)\}$ and the constants $\gamma_1,\gamma_2,\nu_1,\nu_2$. We assume that $\gamma_1,\gamma_2,\nu_1,\nu_2$ are constant to preserve the time-homogeneity of the intensity and the jump size distribution. As the market under the SVJD is generally incomplete, one can construct multiple equivalent martingale measures consistent with the no-arbitrage assumption.

We now specify the parameters of $L(T)$ so that $\hat{\Q}$ becomes an equivalent martingale measure corresponding to the num\'eraire $\{S_2(t)e^{q_2 t}\}$. Let $\{\tilde{S}(t)\}$ and $\{\tilde{M}(t)\}$, where $\tilde{S}(t) = S_1(t)e^{q_1 t}/(S_2(t)e^{q_2 t})$ and $\tilde{M}(t) = e^{rt}/(S_2(t)e^{q_2 t}),$ be the first asset yield process and the money market account when discounted using the second stock's yield process. In particular, we will refer to $\{\tilde{S}(t)\}$ as the \emph{asset yield ratio process}. If we choose $\{\psi_1(t)\}$, $\{\psi_2(t)\}$, and $\{\zeta(t)\}$ as
\begin{align}
\label{eqn-PutCall-RiskPremium-1}
\psi_1(t) & = \frac{\mu_1+q_1-r-\rho_w\sigma_1\sigma_1 v(t)-\lambda_1\kappa_1+\tilde{\lambda}_1\tilde{\kappa}_1}{\sigma_1\sqrt{v(t)}}\\
\label{eqn-PutCall-RiskPremium-2}
\psi_2(t) & = \frac{\mu_2+q_2-r-\sigma_2^2 v(t)-\lambda_2\kappa_2-\tilde{\lambda}_2\tilde{\kappa}_2^-}{\sigma_2\sqrt{v(t)}}\\
\label{eqn-PutCall-MarketPriceofVolRisk}
\zeta(t) & = \frac{\Lambda}{\omega}\sqrt{v(t)} \qquad \text{for some constant $\Lambda\geq 0$},
\end{align} 
where $\tilde{\kappa}_1 = \E_{\hat{\Q}}[e^{Y_1}-1]$ and $\tilde{\kappa}_2^- = \E_{\hat{\Q}}[e^{-Y_2}-1]$, then $\{\tilde{S}(t)\}$ and $\{\tilde{M}(t)\}$ are $\hat{\Q}$-martingales on $[0,T]$.\footnote{This assertion can be proved using It\^o's Lemma on $\tilde{S}(t)$ and $\tilde{M}(t)$ and eliminating the resulting drift term as required by the martingale representation for jump-diffusion processes \citep[see][Theorem 2.3]{Runggaldier-2003}.}

With this choice of parameters for $\hat{\Q}$, the dynamics of the instantaneous variance becomes
\begin{equation}
\label{eqn-PutCall-VarianceSDE-Q}
\dif v(t) = \left[\xi\eta-(\xi+\Lambda)v(t)\right]\dif t+\omega\sqrt{v(t)}\dif\bar{Z}(t).
\end{equation}
where $\{\bar{Z}(t)\}$ is a $\hat{\Q}$-Wiener process. The choice of $\zeta(t)$ preserves the structure of the instantaneous variance as a square-root process. Assumption \ref{asp-PutCall-VarianceParameterAssumptions} ensures that this process is strictly positive and finite $\hat{\Q}$-a.s.

Under $\hat{\Q}$, $\tilde{S}(t)$ satisfies the equation
\begin{align}
\begin{split}
\label{eqn-PutCall-YieldRatioSDE-Q}
\dif\tilde{S}(t)
	& = -\tilde{S}(t)\left(\tilde{\lambda}_1\tilde{\kappa}_1+\tilde{\lambda}_2\tilde{\kappa}_2^-\right)\dif t + \sigma\sqrt{v(t)}\tilde{S}(t)\dif\bar{W}(t)\\
	& \qquad + \int_{\mathbb{R}} \left(e^{y_1}-1\right)\tilde{S}(t) p(\dif y_1,\dif t) + \int_{\mathbb{R}}\left(e^{-y_2}-1\right)\tilde{S}(t)p(\dif y_2,\dif t).
\end{split}
\end{align}
where we define $\sigma\dif\bar{W}(t) \equiv \sigma_1\dif\bar{W}_1(t)-\sigma_2\dif\bar{W}_2(t)$ with standard $\hat{\Q}$-Wiener processes $\{\bar{W}_1(t)\}$ and $\{\bar{W}_2(t)\}$ and $\sigma^2 = \sigma_1^2+\sigma_2^2-2\rho_w\sigma_1\sigma_2$.\footnote{In view of Proposition \ref{prop-PutCall-ChangeofMeasure}, we note that $\dif\bar{W}_j(t) = \psi_j(t)\dif t+\dif W_j(t)$.} This equation admits a solution $\tilde{S}(t)$ given by
\begin{align*}
\tilde{S}(t)
	& = \tilde{S}(0)\exp\Bigg\{-(\tilde{\lambda}_1\tilde{\kappa}_1+\tilde{\lambda}_2\tilde{\kappa}_2^-)t - \frac{1}{2}\sigma^2\int_0^t v(s)\dif s+\sigma\int_0^t\sqrt{v(s)}\dif\bar{W}(s)\\
	& \qquad \qquad \qquad +\sum_{m=1}^{N_1(t)}Y_{1,m}-\sum_{n=1}^{N_2(t)}Y_{2,n}\Bigg\}
\end{align*}

Lastly, we note that the instantaneous correlation between the $\hat{\Q}$-Brownian motions $\{\bar{W}(t)\}$ and $\{\bar{Z}(t)\}$ is given by $$\E_{\hat{\Q}}\left[\dif\bar{W}(t)\dif\bar{Z}(t)\right] = \frac{1}{\sigma}\left(\sigma_1 \rho_1-\sigma_2\rho_2\right)\dif t = \frac{\sigma_1 \rho_1-\sigma_2\rho_2}{\sqrt{\sigma_1^2+\sigma_2^2-2\rho_w\sigma_1\sigma_2}}\dif t.$$ 

\section{The Exchange Option Pricing IPDE}
\sectionmark{Exchange Option IPDE}
\label{sec-PutCall-ExchangeOptionIPDE}

Now we derive the integro-partial differential equation (IPDE) for the price of an exchange option written on $S_1$ and $S_2$. Denote by $C(t,S_1(t),S_2(t),v(t))$ the price of a European exchange option whose terminal payoff is given by $C\left(T,S_1(T),S_2(T),v(T)\right) = \left(S_1(T)-S_2(T)\right)^+,$ where $x^+\equiv\max\{x,0\}$. A rearrangement of terms expresses the discounted terminal payoff as
\begin{equation*}
\frac{C\left(T,S_1(T),S_2(T),v(T)\right)}{S_2(T)e^{q_2 T}} = e^{-q_1 T}\left(\tilde{S}(T)-e^{(q_1-q_2)T}\right)^+.
\end{equation*}
Let $\tilde{C}(t,S_1(t),S_2(t),v(t)) \equiv C\left(t,S_1(t),S_2(t),v(t)\right)/\left(S_2(t)e^{q_2 t}\right)$ denote the discounted European exchange option price. Then, assuming that no arbitrage opportunities exist, $\tilde{C}(t,S_1(t),S_2(t),v(t))$ is given by
\begin{align}
\begin{split}
\label{eqn-PutCall-DiscountedExcOpPrice}
\tilde{C}\left(t,S_1(t),S_2(t),v(t)\right)
	& = \E_{\hat{\Q}}\left[\left.\tilde{C}\left(T,S_1(T),S_2(T),v(T)\right)\right|\calF_t\right]\\
	& = e^{-q_1 T}\E_{\hat{\Q}}\left[\left.\left(\tilde{S}(T)-e^{(q_1-q_2)T}\right)^+\right|\calF_t\right].
\end{split}
\end{align}
In other words, the price at any time $t<T$ of the European exchange option measured in units of the second asset yield process is the expected value, under the probability measure $\hat{\Q}$, of the terminal payoff measured in units of the second asset yield process \citep{Geman-1995}. From the last equation, we also note that the terminal payoff is variable only in the asset yield ratio $\tilde{S}(t)$. Thus, we assume that the discounted European exchange option price is represented by the process $\tilde{V}(t,\tilde{S}(t),v(t)) \equiv \tilde{C}\left(t,S_1(t),S_2(t),v(t)\right),$ and so
\begin{equation}
\label{eqn-PutCall-DiscountedExcOpPrice2}
\tilde{V}(t,\tilde{S}(t),v(t)) = e^{-q_1 T}\E_{\hat{\Q}}\left[\left.\left(\tilde{S}(T)-e^{(q_1-q_2)T}\right)^+\right|\calF_t\right].
\end{equation}

At this point, we have shown that, by taking the second stock's yield process as the num\'eraire asset, \emph{the exchange option pricing problem is equivalent to pricing a European call option on the asset yield price ratio $\tilde{S}(t)$ with maturity date $T$ and strike price $e^{(q_1-q_2)T}$}. In the succeeding analysis, we shall take advantage of this simplification and employ techniques in pricing European call options under stochastic volatility and jump-diffusion dynamics \citep[e.g.][]{Bates-1996, CheangChiarellaZiogas-2013}.

\begin{rem}
If we choose the first asset yield process $\{S_1(t)e^{q_1 t}\}$ as the num\'eraire, then the exchange option pricing problem simplifies to the valuation of a \emph{put option} written on the asset yield ratio $(S_{2}(t)e^{q_2 t})/(S_1(t)e^{q_1 t})$.
\end{rem}

The following technical assumption is required to implement It\^o's formula for jump-diffusion processes.

\begin{asp}
\label{asp-PutCall-Differentiability}
For $t\in[0,T]$, $\tilde{V}(t,\tilde{s},v)$ is (at least) twice-differentiable in $\tilde{s}$ and $v$ and differentiable in $t$ with continuous partial derivatives.
\end{asp}

In the following proposition, we derive the IPDE that characterizes the discounted European exchange option price.

\begin{prop}
\label{prop-PutCall-DiscountedEuExcOpPrice}
The price at time $t\in[0,T)$ of the European exchange option is given by
\begin{equation}
C(t,S_1(t),S_2(t),v(t)) = S_2(t)e^{q_2 t}\tilde{V}(t,\tilde{S}(t),v(t)),
\end{equation}
where $\tilde{V}$, satisfying Assumption \ref{asp-PutCall-Differentiability}, is the solution of the terminal value problem
\begin{align}
\label{eqn-PutCall-IPDE-Vtilde}
0 & = \pder[\tilde{V}]{t}+\calL_{\tilde{s},v}\left[\tilde{V}(t,\tilde{S}(t),v(t))\right], \qquad (t,\tilde{S}(t),\tilde{v}(t))\in[0,T]\times\mathbb{R}_+^2\\
\label{eqn-PutCall-IPDE-Vtilde-TerminalCondition}
\tilde{V}(T) & = e^{-q_1 T}\left(\tilde{S}(T)-e^{(q_1-q_2)T}\right)^+,
\end{align}
with $\mathbb{R}_+^2 = (0,\infty)\times(0,\infty)$ and the IPDE operator $\calL_{\tilde{s},v}$ defined as 
\begin{align}
\begin{split}
\label{eqn-PutCall-IPDEOperator}
\calL_{\tilde{s},v}\left[\tilde{V}(t,\tilde{S},v)\right] 
	& = -\tilde{S}\left(\tilde{\lambda}_1\tilde{\kappa}_1+\tilde{\lambda}_2\tilde{\kappa}_2^-\right)\pder[\tilde{V}]{\tilde{s}}+\left[\xi\eta-(\xi+\Lambda)v\right]\pder[\tilde{V}]{v}\\
	& \qquad + \frac{1}{2}\sigma^2 v\tilde{S}^2\pder[^2\tilde{V}]{\tilde{s}^2}+\frac{1}{2}\omega^2 v\pder[^2\tilde{V}]{v^2}+\omega(\sigma_1 \rho_1-\sigma_2\rho_2)v\tilde{S}\pder[^2\tilde{V}]{\tilde{s}\partial v}\\
	& \qquad + \tilde{\lambda_1}\E_{\hat{\Q}}^{Y_1}\left[\tilde{V}\left(t,\tilde{S}e^{Y_1},v\right)-\tilde{V}(t,\tilde{S},v)\right]\\
	& \qquad + \tilde{\lambda}_2\E_{\hat{\Q}}^{Y_2}\left[\tilde{V}\left(t,\tilde{S}e^{-Y_2},v\right)-\tilde{V}(t,\tilde{S},v)\right],
\end{split}
\end{align}
where $\E_{\hat{\Q}}^{Y_i}$ is the expectation with respect to the r.v. $Y_i$ ($i=1,2$) under the measure $\hat{\Q}$. Note that all partial derivatives are evaluated at $(t,\tilde{S}(t),v(t))$.
\end{prop}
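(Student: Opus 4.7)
The plan is to exploit the fact that the discounted option price, by construction as a conditional expectation, is a $\hat{\Q}$-martingale, and to extract the IPDE by computing its Itô--jump decomposition and equating the finite-variation drift to zero. The terminal condition \eqref{eqn-PutCall-IPDE-Vtilde-TerminalCondition} and the relation $C(t,S_1(t),S_2(t),v(t)) = S_2(t)e^{q_2 t}\tilde{V}(t,\tilde{S}(t),v(t))$ are immediate from the definition $\tilde{V} \equiv \tilde{C}$ together with the representation \eqref{eqn-PutCall-DiscountedExcOpPrice}; applying the tower property to \eqref{eqn-PutCall-DiscountedExcOpPrice2} then shows that $\{\tilde{V}(t,\tilde{S}(t),v(t))\}_{t\in[0,T]}$ is a $\hat{\Q}$-martingale.

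Next, under Assumption \ref{asp-PutCall-Differentiability}, I would apply Itô's formula for jump-diffusions to $\tilde{V}(t,\tilde{S}(t),v(t))$, using the $\hat{\Q}$-dynamics \eqref{eqn-PutCall-YieldRatioSDE-Q} for $\tilde{S}$ and \eqref{eqn-PutCall-VarianceSDE-Q} for $v$. The continuous part produces the first-order drift terms $-\tilde{S}(\tilde{\lambda}_1\tilde{\kappa}_1+\tilde{\lambda}_2\tilde{\kappa}_2^-)\pder[\tilde{V}]{\tilde{s}}$ and $[\xi\eta-(\xi+\Lambda)v]\pder[\tilde{V}]{v}$, together with the quadratic variation and covariation terms, where $\dif\bar{W}^2=\dif\bar{Z}^2=\dif t$ and $\E_{\hat{\Q}}[\dif\bar{W}\dif\bar{Z}]=\sigma^{-1}(\sigma_1\rho_1-\sigma_2\rho_2)\dif t$ were computed at the end of Section \ref{sec-PutCall-SVJDModel}; these reproduce the diffusion terms of $\calL_{\tilde{s},v}$. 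The jump component of $\tilde{S}$ in \eqref{eqn-PutCall-YieldRatioSDE-Q} contributes increments of the form $[\tilde{V}(t,\tilde{S}(t^-)e^{y_1},v(t))-\tilde{V}(t,\tilde{S}(t^-),v(t))]\,p(\dif y_1,\dif t)$, together with an analogous $N_2$ term involving $e^{-y_2}$. Compensating each Poisson random measure by its $\hat{\Q}$-intensity and density $\tilde{\lambda}_j\,m_{\hat{\Q}}(\dif y_j)\dif t$ (as afforded by Proposition \ref{prop-PutCall-ChangeofMeasure}) splits each jump integral into a $\hat{\Q}$-local martingale plus a drift $\tilde{\lambda}_j\E_{\hat{\Q}}^{Y_j}[\tilde{V}(t,\tilde{S}e^{\pm Y_j},v)-\tilde{V}(t,\tilde{S},v)]\dif t$. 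The aggregate drift then equals $(\pder[\tilde{V}]{t}+\calL_{\tilde{s},v}[\tilde{V}])\dif t$; by uniqueness of the special semimartingale decomposition, the martingale property forces this drift to vanish identically, yielding the IPDE \eqref{eqn-PutCall-IPDE-Vtilde}.

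The main obstacle is the careful bookkeeping of the jump terms: one must correctly compensate two independent Poisson random measures under the new measure $\hat{\Q}$, preserve the sign convention $e^{-y_2}$ in the $N_2$ jump term that was inherited from the num\'eraire change, and justify that the residual local-martingale part is in fact a true martingale (via standard localization, with integrability supplied by Assumption \ref{asp-PutCall-VarianceParameterAssumptions} and appropriate growth estimates on $\tilde{V}$ and its partial derivatives inherited from the payoff bound $\tilde{V}(t,\tilde{s},v)\leq e^{-q_1 T}\tilde{s}$). The continuous part of the Itô calculation is essentially routine once the correlation structure between $\bar{W}$ and $\bar{Z}$ has been identified.
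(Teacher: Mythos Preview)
Your proposal is correct and follows essentially the same route as the paper: invoke the tower property to obtain the $\hat{\Q}$-martingale property of $\tilde{V}$, apply It\^o's formula for jump-diffusions using the dynamics \eqref{eqn-PutCall-YieldRatioSDE-Q} and \eqref{eqn-PutCall-VarianceSDE-Q}, and set the resulting drift to zero. Your account is in fact more explicit than the paper's in separating the compensation of the two Poisson random measures and in flagging the localization/integrability issue, but the underlying argument is identical.
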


\begin{proof}
The tower property for conditional expectations imply that $\{\tilde{V}(t)\}$ is a $\hat{\Q}$-martingale, with integrability guaranteed by Assumption \ref{asp-PutCall-Differentiability}. With equations \eqref{eqn-PutCall-VarianceSDE-Q} and \eqref{eqn-PutCall-YieldRatioSDE-Q} in mind, an application of It\^o's formula shows that $\tilde{V}(t)$ satisfies the stochastic differential equation
\begin{align}
\begin{split}
\label{eqn-PutCall-SDE-V-tilde}
\dif\tilde{V}(t)
	& = \left\{\pder[\tilde{V}]{t}+\calL_{\tilde{s},v}\left[\tilde{V}(t,\tilde{S}(t),v(t))\right]\right\}\dif t\\
	& \qquad + \sigma\sqrt{v(t)}\tilde{S}(t)\pder[\tilde{V}]{\tilde{s}}\dif\bar{W}(t)+\omega\sqrt{v(t)}\pder[\tilde{V}]{v}\dif\bar{Z}(t)\\
	& \qquad + \int_{\mathbb{R}}\left[\tilde{V}\left(t,\tilde{S}(t)e^{y_1},v(t)\right)-\tilde{V}(t,\tilde{S}(t),v(t))\right]q(\dif y_1,\dif t)\\
	& \qquad + \int_{\mathbb{R}}\left[\tilde{V}\left(t,\tilde{S}(t)e^{-y_2},v(t)\right)-\tilde{V}(t,\tilde{S}(t),v(t))\right]q(\dif y_2,\dif t),
\end{split}
\end{align}
where $\calL_{\tilde{s},v}$ is the IPDE operator defined by equation \eqref{eqn-PutCall-IPDEOperator}. Since $\{\tilde{V}(t)\}$ is a $\hat{\Q}$-martingale, the drift must be equal to zero, giving us the equation \eqref{eqn-PutCall-IPDE-Vtilde}. Terminal condition \eqref{eqn-PutCall-IPDE-Vtilde-TerminalCondition} follows from the discussion at the start of this section.
\end{proof}

Let $C^A(t,S_1(t),S_2(t),v(t))$ be the price at time $t$ of an American exchange option written on $S_1$ and $S_2$. After a rearrangement of terms, standard theory on American option pricing \citep[see e.g.][]{Myneni-1992} dictates that the discounted American exchange option price $\tilde{V}^A(t,\tilde{S}(t),v(t))$ is given by
\begin{align}
\begin{split}
\tilde{V}^A(t,\tilde{S}(t),v(t)) 
	& \equiv \frac{C^A(t,S_1(t),S_2(t),v(t)}{S_2(t)e^{q_2 t}}\\
	& = \esssup_{u\in[t,T]} e^{-q_1 u}\E_{\hat{\Q}}\left[\left.\left(\tilde{S}(u)-e^{(q_1-q_2)u}\right)^+\right|\calF_t\right],
\end{split}
\end{align}
where the supremum is taken over all $\hat{\Q}$-stopping times $u\in[t,T]$. \emph{From here, we see that the change of num\'eraire reduces the problem to pricing an American call option on the asset yield price ratio $\tilde{S}(t)$ with maturity date $T$ and strike price $e^{(q_1-q_2)T}$,} similar to our observation for the European exchange option. The price of the American exchange option also hedges against the exchange option payoff in the sense that
\begin{align*}
\tilde{V}^A(t,\tilde{S}(t),v(t)) & \geq e^{-q_1 t}\left(\tilde{S}(t)-e^{(q_1-q_2)t}\right)^+ \qquad \forall t\in [0,T)\\
\tilde{V}^A(T,\tilde{S}(T),v(T)) & = e^{-q_1 T}\left(\tilde{S}(T)-e^{(q_1-q_2)T}\right)^+.
\end{align*}

Before prescribing additional boundary conditions to IPDE \eqref{eqn-PutCall-IPDE-Vtilde} for the American exchange option, we first define the continuation and stopping regions, denoted by $\calC$ and $\calS$, respectively, that divide the domain $[0,T]\times\mathbb{R}_+^2$ of IPDE \eqref{eqn-PutCall-IPDE-Vtilde}. These regions are given by
\begin{align}
\begin{split}
\label{eqn-PutCall-StoppingContinuationRegions}
\calS & = \left\{(t,\tilde{S},v)\in[0,T]\times\mathbb{R}_+^2: \tilde{V}^A(t,\tilde{S},v) = e^{-q_1 t}\left(\tilde{S}-e^{(q_1-q_2)t}\right)^+\right\}\\
\calC & = \left\{(t,\tilde{S},v)\in[0,T]\times\mathbb{R}_+^2: \tilde{V}^A(t,\tilde{S},v) > e^{-q_1 t}\left(\tilde{S}-e^{(q_1-q_2)t}\right)^+\right\}.
\end{split}
\end{align}
Denote by $\calS(t)$ and $\calC(t)$ the stopping and continuation regions at a fixed $t\in[0,T]$.

From \citet{BroadieDetemple-1997}, there exists a critical stock price ratio $B(t,v)\geq 1$, dependent on the current variance level \citep{Touzi-1999}, such that the stopping and continuation regions can be written as
\begin{align}
\begin{split}
\label{eqn-PutCall-StoppingContinuationRegions2}
\calS & = \left\{(t,\tilde{S},v)\in[0,T]\times\mathbb{R}_+^2: \tilde{S}\geq B(t,v)e^{(q_1-q_2)t}\right\}\\
\calC & = \left\{(t,\tilde{S},v)\in[0,T]\times\mathbb{R}_+^2: \tilde{S}< B(t,v)e^{(q_1-q_2)t}\right\}.
\end{split}
\end{align}
The line $s_1 = B(t,v)s_2$ on the $s_1 s_2$-plane is known as the early exercise boundary.\footnote{\citet{Mishura-2009} analyzed, in further detail, the properties of the exercise region of the finite-maturity American exchange option in a pure diffusion setting. In the same setting, \citet{Villeneuve-1999} established the nonemptiness of exercise regions of American rainbow options, which include spread and exchange options as special cases.} For a fixed $t\in[0,T]$ and $v\in(0,\infty)$, the early exercise boundary and the continuation and stopping regions are illustrated in Figure \ref{fig-EarlyExerciseBoundaryOptionPrice}. It is known that in the continuation region the American exchange option behaves like its live European counterpart, and so $\tilde{V}^A$ satisfies IPDE \eqref{eqn-PutCall-IPDE-Vtilde} for $(t,\tilde{S},v)\in\calC$.

\begin{figure}
\centering
\includegraphics[width = 5in]{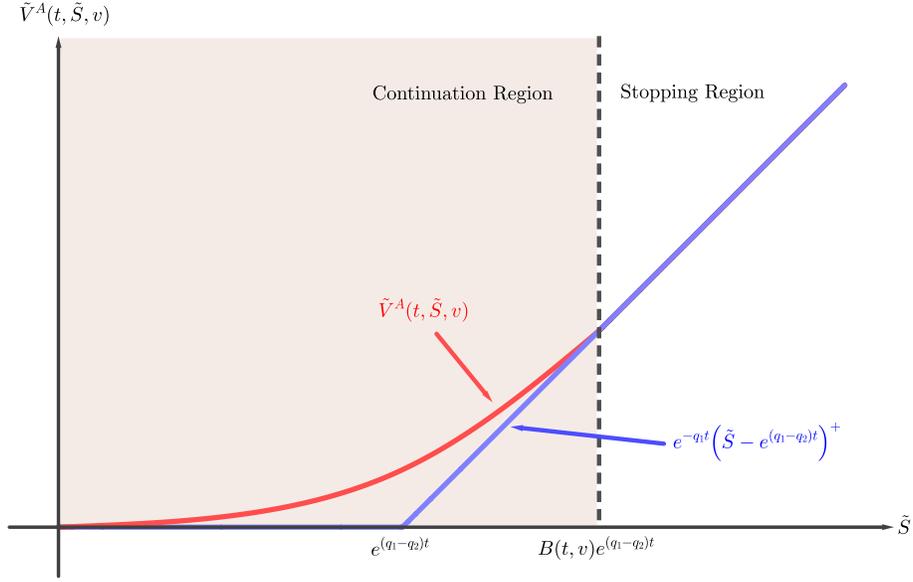}
\caption{The early exercise boundary and the continuation and stopping regions for the transformed American exchange option.}
\label{fig-EarlyExerciseBoundaryOptionPrice}
\end{figure}

We require value-matching and smooth-pasting conditions on IPDE \eqref{eqn-PutCall-IPDE-Vtilde} to enforce the no-arbitrage assumption and to ensure that the discounted exchange option price $\tilde{V}^A$ and its partial derivative $\partial\tilde{V}^A/\partial\tilde{s}$ are both continuous across the early exercise boundary $A(t,v)\equiv B(t,v)e^{(q_1-q_2)t}$. Specifically, the required value-matching condition is
\begin{equation}
\label{eqn-PutCall-ValueMatchingCondition-Vtilde}
\tilde{V}^A(t,A(t,v),v(t)) = e^{-q_1 t}\left(A(t,v)-e^{(q_1-q_2)t}\right),
\end{equation}
and the smooth-pasting conditions are
\begin{align}
\begin{split}
\label{eqn-PutCall-SmoothPastingCondition-Vtilde}
\lim_{\tilde{S}\to A(t,v)} \pder[\tilde{V}^A]{\tilde{s}}(t,\tilde{S}(t),v(t)) & = e^{-q_1 t}\\
\lim_{\tilde{S}\to A(t,v)} \pder[\tilde{V}^A]{v}(t,\tilde{S}(t),v(t)) & = 0\\
\lim_{\tilde{S}\to A(t,v)} \pder[\tilde{V}^A]{t}(t,\tilde{S}(t),v(t)) & = -q_1 e^{-q_1 t}\tilde{S}(t)+q_2 e^{-q_2 t}.
\end{split}
\end{align}
Therefore, the discounted American exchange option price is a solution to IPDE \eqref{eqn-PutCall-IPDE-Vtilde} over the domain $0\leq t\leq T$, $0<\tilde{S}<A(t,v)$, $0<v<\infty$. The IPDE has terminal and boundary conditions
\begin{align}
\begin{split}
\label{eqn-PutCall-BoundaryConditions-Vtilde}
\tilde{V}(T,\tilde{S}(T),v(T)) & = e^{-q_1 T}\left(\tilde{S}(T)-e^{(q_1-q_2)T}\right)^+\\
\tilde{V}(t,0,v(t)) & = 0,
\end{split}
\end{align}
value-matching condition \eqref{eqn-PutCall-ValueMatchingCondition-Vtilde} and smooth-pasting condition \eqref{eqn-PutCall-SmoothPastingCondition-Vtilde}.

\section{An Early Exercise Representation}
\sectionmark{Early Exercise Representation}
\label{sec-PutCall-EER}

In this section, we show that $\tilde{V}^A(t,\tilde{S}(t),v(t))$ can be decomposed into the sum of the discounted European exchange option price $\tilde{V}(t,\tilde{S}(t),v(t))$ and an early exercise premium.

\begin{prop}
\label{prop-PutCall-EarlyExerciseRepresentation}
Suppose Assumption \ref{asp-PutCall-Differentiability} also holds for $\tilde{V}^A(t,\tilde{S},v)$. Assume further that the smooth pasting conditions \eqref{eqn-PutCall-SmoothPastingCondition-Vtilde} across the early exercise boundary hold. 
Then $\tilde{V}^A(t,\tilde{S}(t),v(t))$ can be expressed as
\begin{equation}
\label{eqn-PutCall-EarlyExerciseRepresentation}
\tilde{V}^A(t,\tilde{S}(t),v(t)) = \tilde{V}(t,\tilde{S}(t),v(t)) + \tilde{V}^P(t,\tilde{S}(t),v(t)),
\end{equation}
where $\tilde{V}(t,\tilde{S}(t),v(t))$ is the discounted European exchange option price given by equation \eqref{eqn-PutCall-DiscountedExcOpPrice2} and $\tilde{V}^P(t,\tilde{S}(t),v(t))$ is the early exercise premium given by 
\begin{equation}
\label{eqn-PutCall-EarlyExercisePremium}
\tilde{V}^P(t,\tilde{S}(t),v(t)) = -\E_{\hat{\Q}}\left[\left.\int_{t}^T\left\{\pder[\tilde{V}^A]{t}+\calL_{\tilde{s},v}\left[\tilde{V}^A(s,\tilde{S}(s),v(s))\right]\right\}\dif s\right|\calF_t\right].
\end{equation}
Note that all partial derivatives in $\partial\tilde{V}^A/\partial t+\calL_{\tilde{s},v}[\tilde{V}^A(s,\tilde{S}(s),v(s))]$ are all evaluated at $(s,\tilde{S}(s),v(s))$.
\end{prop}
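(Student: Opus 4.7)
The plan is to apply Itô's formula for jump-diffusions to $\tilde{V}^A(s,\tilde{S}(s),v(s))$ on the interval $[t,T]$, take conditional expectation under $\hat{\Q}$ to eliminate the martingale terms, and then use the terminal condition to identify the European price part of the decomposition. Under the stated extension of Assumption \ref{asp-PutCall-Differentiability} to $\tilde{V}^A$, Itô's formula is directly applicable in exactly the form already used in the proof of Proposition \ref{prop-PutCall-DiscountedEuExcOpPrice}, so the representation
\begin{align*}
\tilde{V}^A(T,\tilde{S}(T),v(T))
  & = \tilde{V}^A(t,\tilde{S}(t),v(t))
    + \int_t^T\!\left\{\pder[\tilde{V}^A]{s}+\calL_{\tilde{s},v}\!\left[\tilde{V}^A(s,\tilde{S}(s),v(s))\right]\right\}\dif s\\
  & \qquad + (\text{Brownian and compensated Poisson integrals})
\end{align*}
follows. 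The smooth-pasting conditions \eqref{eqn-PutCall-SmoothPastingCondition-Vtilde} ensure that this identity remains valid along trajectories that cross the free boundary $A(t,v)$, so the drift-plus-jump-compensator term $\partial_s\tilde{V}^A+\calL_{\tilde{s},v}[\tilde{V}^A]$ is well-defined on $[t,T]$.

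Next I would take $\E_{\hat{\Q}}[\,\cdot\,|\calF_t]$ on both sides. The Brownian integrals with respect to $\bar{W}$ and $\bar{Z}$ and the compensated Poisson integrals associated with the two marked point processes are $\hat{\Q}$-martingales, so their conditional expectations vanish (integrability follows from Assumption \ref{asp-PutCall-Differentiability} together with the moment conditions on $\{v(t)\}$, $\{\tilde{S}(t)\}$ and the jump size distributions guaranteed by Assumption \ref{asp-PutCall-VarianceParameterAssumptions}). Rearranging yields
\begin{equation*}
\tilde{V}^A(t,\tilde{S}(t),v(t))
  = \E_{\hat{\Q}}\!\left[\tilde{V}^A(T,\tilde{S}(T),v(T))\,\big|\,\calF_t\right]
  - \E_{\hat{\Q}}\!\left[\int_t^T\!\left\{\pder[\tilde{V}^A]{s}+\calL_{\tilde{s},v}\!\left[\tilde{V}^A(s,\tilde{S}(s),v(s))\right]\right\}\dif s\,\bigg|\,\calF_t\right].
\end{equation*}
Finally, since the American and European payoffs coincide at maturity, $\tilde{V}^A(T)=e^{-q_1 T}(\tilde{S}(T)-e^{(q_1-q_2)T})^+$, the first conditional expectation on the right-hand side is exactly $\tilde{V}(t,\tilde{S}(t),v(t))$ by equation \eqref{eqn-PutCall-DiscountedExcOpPrice2}. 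Identifying the remaining term with $\tilde{V}^P$ as defined in \eqref{eqn-PutCall-EarlyExercisePremium} gives the decomposition \eqref{eqn-PutCall-EarlyExerciseRepresentation}.

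The main obstacle is the legitimacy of Itô's formula across the early exercise boundary, where classically $\tilde{V}^A$ is only $C^1$ (the value-matching and smooth-pasting conditions) rather than $C^2$. In a rigorous treatment, this is typically handled either by a mollification of $\tilde{V}^A$ near the boundary and a limiting argument, or by appealing to a generalized Itô/Meyer–Tanaka formula; here the hypothesis that Assumption \ref{asp-PutCall-Differentiability} extends to $\tilde{V}^A$ sweeps this technical point under the rug and allows the argument to proceed verbatim. A secondary, purely bookkeeping, step is to note that although the integrand $\partial_s\tilde{V}^A+\calL_{\tilde{s},v}[\tilde{V}^A]$ vanishes on the continuation region $\calC$ (where $\tilde{V}^A$ satisfies IPDE \eqref{eqn-PutCall-IPDE-Vtilde}), it is generally nonzero on the stopping region $\calS$ described by \eqref{eqn-PutCall-StoppingContinuationRegions2}; thus the early exercise premium $\tilde{V}^P$ collects precisely the contributions from excursions of $(\tilde{S}(s),v(s))$ into $\calS$, which is the economic content of the representation.
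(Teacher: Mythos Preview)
Your proposal is correct and follows essentially the same route as the paper: apply It\^o's formula to $\tilde{V}^A$ on $[t,T]$, take $\hat{\Q}$-conditional expectation to kill the Brownian and compensated Poisson integrals, and then use the terminal condition together with \eqref{eqn-PutCall-DiscountedExcOpPrice2} to identify the European piece, leaving the early exercise premium. Your additional remarks on the role of the smooth-pasting conditions and the $C^2$ issue at the free boundary are more explicit than the paper's treatment, but the underlying argument is the same.
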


\begin{proof}
Given that Assumption \ref{asp-PutCall-Differentiability} also applies to $\tilde{V}^A(t,\tilde{S}(t),v(t))$ and equipped with the smooth-pasting conditions discussed above, an application of It\^o's formula verifies that $\tilde{V}^A(t)\equiv\tilde{V}^A(t,\tilde{S}(t),v(t))$ satisfies equation \eqref{eqn-PutCall-SDE-V-tilde}. Integrating over $[t_0,t]$, where $0\leq t_0\leq t\leq T$, we find that
\begin{align*}
\tilde{V}^A(t)
	& = \tilde{V}^A(t_0) + \int_{t_0}^t\left\{\pder[\tilde{V}^A]{t}+\calL_{\tilde{s},v}\left[\tilde{V}^A(s,\tilde{S}(s),v(s))\right]\right\}\dif s\\
	& \qquad + \int_{t_0}^t \sigma\sqrt{v(s)}\tilde{S}(s)\pder[\tilde{V}^A]{\tilde{s}}\dif\bar{W}(s)+\int_{t_0}^t \omega\sqrt{v(s)}\pder[\tilde{V}^A]{v}\dif\bar{Z}(s)\\
	& \qquad + \int_{t_0}^t \int_{\mathbb{R}}\left[\tilde{V}^A\left(s,\tilde{S}(s)e^{y_1},v(s)\right)-\tilde{V}^A(s,\tilde{S}(s),v(s))\right]q(\dif y_1,\dif s)\\
	& \qquad + \int_{t_0}^t \int_{\mathbb{R}}\left[\tilde{V}^A\left(s,\tilde{S}(s)e^{-y_2},v(s)\right)-\tilde{V}^A(s,\tilde{S}(s),v(s))\right]q(\dif y_2,\dif s).
\end{align*}
Next, we take the $\hat{\Q}$-expectation of the above equation conditional on $\calF_{t_0}$. We note that the integrals with respect to the Wiener processes and the compensated counting measures are all independent of $\calF_{t_0}$ and that the unconditional expectation of the integrals with respect to the Wiener processes is zero. These observations, combined with the martingale representation theorem for marked point processes \citep[see][]{Bremaud-1981}, imply that
\begin{equation*}
\E_{\hat{\Q}}\left[\left.\tilde{V}^A(t)\right|\calF_{t_0}\right] = \tilde{V}^A(t_0)+\E_{\hat{\Q}}\left[\left.\int_{t_0}^t\left\{\pder[\tilde{V}^A]{t}+\calL_{\tilde{s},v}\left[\tilde{V}^A(s,\tilde{S}(s),v(s))\right]\right\}\dif s\right|\calF_{t_0}\right].
\end{equation*}
Set $t=T$ and $t_0 = t$. Terminal condition \eqref{eqn-PutCall-BoundaryConditions-Vtilde} and equation \eqref{eqn-PutCall-DiscountedExcOpPrice2} imply that the left-hand side of this equation is equal to $\tilde{V}(t,\tilde{S}(t),v(t))$, the price of the European exchange option. Rearranging yields the result stated in the proposition.
\end{proof}

For a fixed $t\in[0,T]$, let $\calA(t)$ be the event that $\tilde{S}(t)$ and $v(t)$ are in the stopping region $\calS(t)$; that is, $\calA(t) \equiv \{(\tilde{S}(t),v(t))\in\calS(t)\}.$ The complement event $\calA^c(t)$ denotes the event that $\tilde{S}(t)$ and $v(t)$ are in the continuation region $\calC(t)$.

We now seek to evaluate the expectation defining the early exercise premium in equation \eqref{eqn-PutCall-EarlyExercisePremium}. This is discussed in the next proposition.

\begin{prop}
\label{prop-PutCall-EarlyExercisePremium}
The early exercise premium $\tilde{V}^P(t,\tilde{S}(t),v(t))$ is given by
\begin{align}
\begin{split}
\label{eqn-PutCall-EarlyExercisePremium2}
& \tilde{V}^P(t,\tilde{S}(t),v(t))\\ 
& \qquad = \E_{\hat{\Q}}\left[\left.\int_t^T\left(q_1 e^{-q_1 s}\tilde{S}(s)-q_2 e^{-q_2 s}\right)\vm{1}(\calA(s))\dif s\right|\calF_t\right]\\
& \qquad \qquad -\tilde{\lambda}_1 \E_{\hat{\Q}}\left[\int_t^T \E_{\hat{\Q}}^{Y_1}\left[\left(\tilde{V}^A\left(s,\tilde{S}(s)e^{Y_1},v(s)\right)\right.\right.\right.\\
& \hspace{100pt} -\left.\left.\left.\left.\left(e^{-q_1 s}\tilde{S}(s)e^{Y_1}-e^{-q_2 s}\right)\right)\vm{1}(\calA_1(s))\right]\dif s\right|\calF_t\right]\\
& \qquad \qquad -\tilde{\lambda}_2 \E_{\hat{\Q}}\left[\int_t^T \E_{\hat{\Q}}^{Y_2}\left[\left(\tilde{V}^A\left(s,\tilde{S}(s)e^{-Y_2},v(s)\right)\right.\right.\right.\\
& \hspace{100pt} -\left.\left.\left.\left.\left(e^{-q_1 s}\tilde{S}(s)e^{-Y_2}-e^{-q_2 s}\right)\right)\vm{1}(\calA_2(s))\right]\dif s\right|\calF_t\right],
\end{split}
\end{align}
where $\vm{1}(\cdot)$ is the indicator function, $\calA(s)$ is the event $\{(\tilde{S}(t),v(t))\in\calS(t)\}$ and
\begin{align*}
\calA_1(s)	& = \left\{B(s,v)e^{(q_1-q_2)s}\leq \tilde{S}(s) < B(s,v)e^{(q_1-q_2)s}e^{-Y_1}\right\}\\
\calA_2(t)	& = \left\{B(s,v)e^{(q_1-q_2)s}\leq \tilde{S}(s) < B(s,v)e^{(q_1-q_2)s}e^{Y_2}\right\}.
\end{align*}
\end{prop}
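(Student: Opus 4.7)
The plan is to start from Proposition \ref{prop-PutCall-EarlyExerciseRepresentation}, which expresses $\tilde{V}^P$ as the negative conditional expectation of $\int_t^T(\partial_t+\calL_{\tilde{s},v})\tilde{V}^A\,\dif s$, and to evaluate the integrand region by region. On the continuation region $\calC(s)$ the American exchange option satisfies the same homogeneous IPDE as its European counterpart, so $(\partial_t+\calL_{\tilde{s},v})\tilde{V}^A=0$ and the integrand is supported only on the event $\calA(s)$. On $\calS(s)$, equation \eqref{eqn-PutCall-StoppingContinuationRegions2} together with $B(s,v)\geq 1$ implies $\tilde{S}(s)\geq e^{(q_1-q_2)s}$, so throughout $\calS$ we have $\tilde{V}^A(s,\tilde{s},v)=e^{-q_1 s}\tilde{s}-e^{-q_2 s}$, a function linear in $\tilde{s}$ and independent of $v$. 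All second derivatives and the variance-drift contribution in $\calL_{\tilde{s},v}$ therefore vanish, and the time derivative plus the $\tilde{s}$-drift contribute $-q_1 e^{-q_1 s}\tilde{S}(s)+q_2 e^{-q_2 s}-(\tilde{\lambda}_1\tilde{\kappa}_1+\tilde{\lambda}_2\tilde{\kappa}_2^-)e^{-q_1 s}\tilde{S}(s)$.

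For the two jump compensators, the delicate step is that a post-jump state may either remain in $\calS(s)$ or be transported across the early-exercise boundary into $\calC(s)$. I would use the identity
\[
\tilde{V}^A(s,\tilde{s}e^{Y_1},v)=\bigl[\tilde{V}^A(s,\tilde{s}e^{Y_1},v)-(e^{-q_1 s}\tilde{s}e^{Y_1}-e^{-q_2 s})\bigr]\vm{1}(\calA_1(s))+(e^{-q_1 s}\tilde{s}e^{Y_1}-e^{-q_2 s}),
\]
which exploits the fact that on the complement of $\calA_1(s)$ within $\calA(s)$ the post-jump point $(\tilde{s}e^{Y_1},v)$ is still in $\calS(s)$ and $\tilde{V}^A$ there equals the intrinsic form; an analogous decomposition with event $\calA_2(s)$ handles the $-Y_2$ jump. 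Taking the $\hat{\Q}$-expectation over $Y_1$ and using $\E_{\hat{\Q}}^{Y_1}[e^{Y_1}]=1+\tilde{\kappa}_1$ produces the first correction term in \eqref{eqn-PutCall-EarlyExercisePremium2} together with a drift $\tilde{\lambda}_1\tilde{\kappa}_1 e^{-q_1 s}\tilde{S}(s)$; the parallel calculation with $\E_{\hat{\Q}}^{Y_2}[e^{-Y_2}]=1+\tilde{\kappa}_2^-$ gives the second correction plus $\tilde{\lambda}_2\tilde{\kappa}_2^- e^{-q_1 s}\tilde{S}(s)$. These two drift contributions cancel exactly the $-(\tilde{\lambda}_1\tilde{\kappa}_1+\tilde{\lambda}_2\tilde{\kappa}_2^-)e^{-q_1 s}\tilde{S}(s)$ residue from the diffusion calculation.

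Combining the two regions, $(\partial_t+\calL_{\tilde{s},v})\tilde{V}^A$ equals $-(q_1 e^{-q_1 s}\tilde{S}(s)-q_2 e^{-q_2 s})$ plus the two correction expectations on $\calA(s)$ and vanishes on $\calC(s)$. Multiplying by $\vm{1}(\calA(s))$, taking the $\hat{\Q}$-conditional expectation, and applying the outer minus sign from Proposition \ref{prop-PutCall-EarlyExerciseRepresentation} delivers \eqref{eqn-PutCall-EarlyExercisePremium2}. The principal obstacle is the jump-decomposition bookkeeping: identifying $\calA_1(s)$ and $\calA_2(s)$ as precisely the events on which a jump carries the process from $\calS(s)$ into $\calC(s)$ (which already forces $Y_1<0$ and $Y_2>0$ of sufficient magnitude), and then verifying that the compensator drifts generated by $\tilde{\kappa}_1$ and $\tilde{\kappa}_2^-$ annihilate the $\tilde{s}$-drift in $\calL_{\tilde{s},v}$, so that only the dividend-rate term $q_1 e^{-q_1 s}\tilde{S}(s)-q_2 e^{-q_2 s}$ and the two genuine early-exercise jump corrections survive.
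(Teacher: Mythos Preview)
Your proposal is correct and follows essentially the same route as the paper: split the integrand over $\calC(s)$ and $\calS(s)$, use the homogeneous IPDE on $\calC(s)$, substitute the intrinsic value $e^{-q_1 s}\tilde{s}-e^{-q_2 s}$ on $\calS(s)$, and then identify $\calA_1(s),\calA_2(s)$ as the only events on which the post-jump value differs from the intrinsic value. The paper compresses your explicit cancellation of the $\tilde{\lambda}_1\tilde{\kappa}_1+\tilde{\lambda}_2\tilde{\kappa}_2^-$ drift into the single phrase ``recalling the definition of $\tilde{\kappa}_1$ and $\tilde{\kappa}_2^-$, and rearranging the terms'' when passing to equation \eqref{eqn-PutCall-IPDE-StoppingRegion}, but the computation is identical to yours.
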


\begin{proof}
Note first that for any time $s\in[t,T]$,
\begin{align*}
\pder[\tilde{V}^A]{t}+\calL_{\tilde{s},v}\left[\tilde{V}^A(s,\tilde{S}(s),v(s))\right]
	& = \left(\pder[\tilde{V}^A]{t}+\calL_{\tilde{s},v}\left[\tilde{V}^A(s,\tilde{S}(s),v(s))\right]\right)\vm{1}(\calA(t))\\
	& \qquad +\left(\pder[\tilde{V}^A]{t}+\calL_{\tilde{s},v}\left[\tilde{V}^A(s,\tilde{S}(s),v(s))\right]\right)\vm{1}(\calA^c(t))\\
	& = \left(\pder[\tilde{V}^A]{t}+\calL_{\tilde{s},v}\left[\tilde{V}^A(s,\tilde{S}(s),v(s))\right]\right)\vm{1}(\calA(t)),
\end{align*}
since in the continuation region, the American exchange option behaves like its live European counterpart and the integro-partial differential terms vanish (see equation \eqref{eqn-PutCall-IPDE-Vtilde}). In the stopping region (i.e. if $\vm{1}(\calA(s))=1$), we note that $\tilde{V}^A(s,\tilde{S}(s),v(s)) = e^{-q_1 s}\tilde{S}(s)-e^{-q_2 s}$ (see equation \eqref{eqn-PutCall-StoppingContinuationRegions}). Applying the integro-partial differential operators, recalling the definition of $\tilde{\kappa}_1$ and $\tilde{\kappa}_2^-$, and rearranging the terms, we find that
\begin{align}
\begin{split}
\label{eqn-PutCall-IPDE-StoppingRegion}
& \left(\pder[\tilde{V}^A]{t}+\calL_{\tilde{s},v}\left[\tilde{V}^A(s,\tilde{S}(s),v(s))\right]\right)\vm{1}(\calA(s))\\
& \qquad = \left[-q_1 e^{-q_1 s}\tilde{S}(s)+q_2 e^{-q_2 s}\right]\vm{1}(\calA(s))\\
& \qquad \qquad + \tilde{\lambda}_1\E_{\hat{\Q}}^{Y_1}\left[\tilde{V}^A\left(s,\tilde{S}(s)e^{Y_1},v(s)\right)-\left(e^{-q_1 s}\tilde{S}(s)e^{Y_1}-e^{-q_2 s}\right)\right]\vm{1}(\calA(s))\\
& \qquad \qquad + \tilde{\lambda}_2\E_{\hat{\Q}}^{Y_2}\left[\tilde{V}^A\left(s,\tilde{S}(s)e^{-Y_2},v(s)\right)-\left(e^{-q_1 s}\tilde{S}(s)e^{-Y_2}-e^{-q_2 s}\right)\right]\vm{1}(\calA(s)).
\end{split}
\end{align}
Observe that the expectations above contain option prices determined \emph{after the jump} in $\tilde{S}(s)$ occurring at time $s$. 

As stated in the proposition, define $\calA_1(s)$ and $\calA_2(s)$ as the events in which the asset price ratio is initially in the stopping region but is sent back into the continuation region after a jump by a factor $e^{Y_1}$ and $e^{-Y_2}$, respectively, at time $s$. From the stopping and continuation criteria in equation \eqref{eqn-PutCall-StoppingContinuationRegions}, we note that 
\begin{equation}
\label{eqn-PutCall-StoppingContinuationCriterion-Y1}
\tilde{V}^A\left(s,\tilde{S}(s)e^{Y_1},v(s)\right) \geq e^{-q_1 s}\tilde{S}(s)e^{Y_1}-e^{-q_2 s},
\end{equation} 
with the strict inequality occurring when $\calA_1(s)$ is true, and 
\begin{equation}
\label{eqn-PutCall-StoppingContinuationCriterion-Y2}
\tilde{V}^A\left(s,\tilde{S}(s)e^{-Y_2},v(s)\right) \geq e^{-q_1 s}\tilde{S}(s)e^{-Y_2}-e^{-q_2 s},
\end{equation} 
with the strict inequality occurring when $\calA_2(t)$ is true. Therefore, we have
\begin{align*}
& \left(\pder[\tilde{V}^A]{t}+\calL_{\tilde{s},v}\left[\tilde{V}^A(s,\tilde{S}(s),v(s))\right]\right)\vm{1}(\calA(s))\\
& \qquad = -\left[q_1 e^{-q_1 s}\tilde{S}(s)-q_2 e^{-q_2 s}\right]\vm{1}(\calA(s))\\
& \qquad \qquad + \tilde{\lambda}_1\E_{\hat{\Q}}^{Y_1}\left[\left(\tilde{V}^A\left(s,\tilde{S}(s)e^{Y_1},v(s)\right)-\left(e^{-q_1 s}\tilde{S}(s)e^{Y_1}-e^{-q_2 s}\right)\right)\vm{1}(\calA_1(s))\right]\\
& \qquad \qquad + \tilde{\lambda}_2\E_{\hat{\Q}}^{Y_2}\left[\left(\tilde{V}^A\left(s,\tilde{S}(s)e^{-Y_2},v(s)\right)-\left(e^{-q_1 s}\tilde{S}(s)e^{-Y_2}-e^{-q_2 s}\right)\right)\vm{1}(\calA_2(s))\right].
\end{align*}
Using the above expression in the early exercise premium given in equation \eqref{eqn-PutCall-EarlyExercisePremium}, we obtain \eqref{eqn-PutCall-EarlyExercisePremium2}.
\end{proof}

\begin{rem}
If $G_1(y)$ and $G_2(y)$ are the probability density functions (pdfs) of $Y_1$ and $Y_2$, respectively under $\hat{\Q}$, then the early exercise premium may be written as
{\small\begin{align}
\begin{split}
\label{eqn-PutCall-EarlyExercisePremium3}
& \tilde{V}^P(t,\tilde{S}(t),v(t))\\
& \qquad = \E_{\hat{\Q}}\left[\left.\int_t^T\left(q_1 e^{-q_1 s}\tilde{S}(s)-q_2 e^{-q_2 s}\right)\vm{1}(\calA(s))\dif s\right|\calF_t\right]\\
& \qquad \qquad -\tilde{\lambda}_1\E_{\hat{\Q}}\Bigg[\int_t^T\int_{-\infty}^{b(s,\tilde{S}(s),v(s))}\left(\tilde{V}^A\left(s,\tilde{S}(s)e^{y},v(s)\right)-\left(e^{-q_1 s}\tilde{S}(s)e^{y}-e^{-q_2 s}\right)\right)\\
& \hspace{100pt} \times G_1(y)\vm{1}(\calA(s))\dif y\dif s \bigg|\calF_t \Bigg]\\
& \qquad \qquad -\tilde{\lambda}_2\E_{\hat{\Q}}\Bigg[\int_t^T\int_{-b(s,\tilde{S}(s),v(s))}^\infty\left(\tilde{V}^A\left(s,\tilde{S}(s)e^{-y},v(s)\right)-\left(e^{-q_1 s}\tilde{S}(s)e^{-y}-e^{-q_2 s}\right)\right)\\
& \hspace{100pt} \times G_2(y)\vm{1}(\calA(s))\dif y\dif s \bigg|\calF_t \Bigg],
\end{split}
\end{align}}
where $b(s,\tilde{S}(s),v(s))\equiv \ln\left(B(s,v(s))e^{(q_1-q_2)s}/\tilde{S}(s)\right)$.
\end{rem}

Similar to the findings of \citet{Gukhal-2001}, \citet{ChiarellaZiogas-2004}, and \citet{CheangChiarellaZiogas-2013}, the early exercise premium \eqref{eqn-PutCall-EarlyExercisePremium} for our transformed problem can be further decomposed into a diffusion component (the positive term) and a jump component (the negative terms). However, unlike the early exercise premium derived by \citet{CheangChiarellaZiogas-2013} for an American call option under SVJD dynamics, our early exercise premium representation contains two jump terms. This is because $\tilde{S}(t)$ has two sources of jumps: the jumps in the price of the first asset (given by the counting measure $p(\dif y_1,\dif t)$) and the jumps in the num\'eraire process (given by the counting measure $p(\dif y_2,\dif t)$). Nonetheless, the interpretation remains the same: the diffusion term captures the discounted expected value of cash flows due to dividends when asset prices are in the stopping region and the jump terms capture the rebalancing costs incurred by the holder of the American exchange option when a jump instantaneously occurs in the price of either asset, causing $\tilde{S}(t)$ to jump back into the continuation region immediately after the option is exercised.\footnote{In this situation, the investor is unable to adjust the decision to exercise in response to the instantaneous jump in asset prices and is therefore vulnerable to the rebalancing cost described earlier. A similar phenomenon in the context of consumption-investment problems with transaction costs in a L\'evy-driven market is explored in greater technical detail by \citet{deValliere-2016}.} Figure \ref{fig-RebalancingCosts} illustrates the loss (captured by the difference in option value and the exercise value) incurred by the option holder when the asset yield ratio jumps back into the continuation region due to a jump, by a factor $e^{Y_1}$, in the price of the first asset. A similar graphical analysis holds if the price of the num\'eraire asset instantaneously jumps instead of the price of the first asset (in this case, the new asset yield ratio is $\tilde{S}(t)e^{-Y_2}$).

\begin{figure}
\centering
\includegraphics[width = 5in]{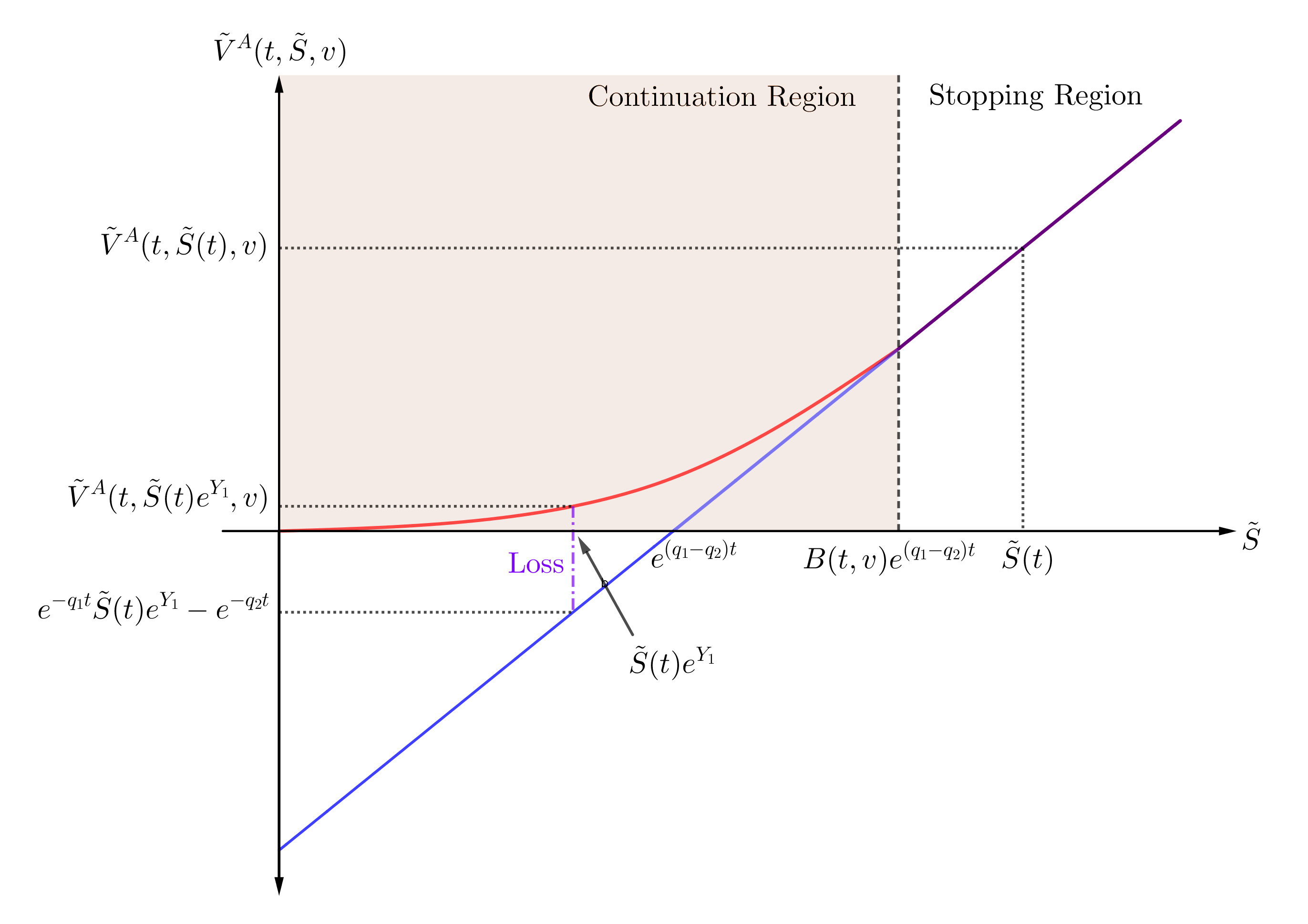}
\caption{Loss incurred by the option holder when the asset ratio instantaneously jumps from $\tilde{S}(t)$ in the stopping region to $\tilde{S}(t)e^{Y_1}$ back in the continuation region.}
\label{fig-RebalancingCosts}
\end{figure}


Recall that the discounted American exchange option $\tilde{V}^A(t,\tilde{S}(t),v(t))$ is a solution of the homogeneous IPDE $$\pder[\tilde{V}^A]{t}+\calL_{\tilde{s},v}\left[\tilde{V}^A(t,\tilde{S}(t),v(t))\right] = 0$$ over the restricted domain $0\leq t\leq T$, $0<\tilde{S}(t)<B(t,v)e^{(q_1-q_2)t}$, and $0<v<\infty$ subject to the value-matching condition \eqref{eqn-PutCall-ValueMatchingCondition-Vtilde}, the smooth-pasting condition \eqref{eqn-PutCall-SmoothPastingCondition-Vtilde}, and boundary conditions \eqref{eqn-PutCall-BoundaryConditions-Vtilde}. Following \citet{Jamshidian-1992} and \citet{ChiarellaZiogas-2004}, the restriction on the domain can be lifted by adding the appropriate inhomogeneous term to the IPDE such that the equation holds for $\tilde{S}(t)>0$.The inhomogeneous IPDE corresponding to our analysis is presented in the following proposition. This analysis requires that $\tilde{V}^A(t,\tilde{S}(t),v(t))$ and its first-order partial derivative with respect to $\tilde{S}$ are continuous, but the value-matching and smooth-pasting conditions are sufficient to meet this requirement.

\begin{prop}
\label{prop-PutCall-InhomogeneousTerm}
The discounted American exchange option price $\tilde{V}^A(t,\tilde{S},v)$ is a solution to the inhomogeneous IPDE
\begin{equation}
\label{eqn-PutCall-InhomogeneousIPDE}
0 = \pder[\tilde{V}^A]{t}+\calL_{\tilde{s},v}\left[\tilde{V}^A(t,\tilde{S}(t),v(t))\right]+\Xi(t,\tilde{S}(t),v(t)),
\end{equation}
where the inhomogeneous term $\Xi$ is given by
{\small\begin{align}
\begin{split}
\label{eqn-PutCall-InhomogeneousTerm}
& \Xi(t,\tilde{S}(t),v(t))\\
	& = \left(q_1 e^{-q_1 t}\tilde{S}(t)-q_2 e^{-q_2 t}\right)\vm{1}(\calA(t))\\
	& \quad - \tilde{\lambda}_1\vm{1}(\calA(t))\int_{-\infty}^{b(t,\tilde{S}(t),v(t))}\left[\tilde{V}^A\left(t,\tilde{S}(t)e^{y},v(t)\right)-\left(e^{-q_1 t}\tilde{S}(t)e^y-e^{-q_2 t}\right)\right]G_1(y)\dif y\\
	& \quad - \tilde{\lambda}_2\vm{1}(\calA(t))\int_{-b(t,\tilde{S}(t),v(t))}^{\infty}\left[\tilde{V}^A\left(t,\tilde{S}(t)e^{-y},v(t)\right)-\left(e^{-q_1 t}\tilde{S}(t)e^{-y}-e^{-q_2 t}\right)\right]G_2(y)\dif y,
\end{split}
\end{align}}
where $G_1$ and $G_2$ are the pdfs of $Y_1$ and $Y_2$, respectively, under $\hat{\Q}$, and $b(t,\tilde{S}(t),v(t)) \equiv \ln[B(t,v(t))e^{(q_1-q_2)t}/\tilde{S}(t)]$. This equation is to be solved for $(t,\tilde{S}(t),v(t))\in[0,T]\times\mathbb{R}_+^2$, subject to terminal and boundary conditions \eqref{eqn-PutCall-BoundaryConditions-Vtilde}.
\end{prop}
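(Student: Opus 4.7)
The plan is to verify the inhomogeneous IPDE \eqref{eqn-PutCall-InhomogeneousIPDE} by comparing the action of $\pder{t}+\calL_{\tilde{s},v}$ on $\tilde{V}^A$ separately on the continuation region $\calC$ and on the stopping region $\calS$, and then checking that the indicator-weighted term $\Xi$ exactly cancels the defect that arises on $\calS$. On $\calC$, the American exchange option behaves like its live European counterpart, so the homogeneous IPDE $\pder[\tilde{V}^A]{t}+\calL_{\tilde{s},v}[\tilde{V}^A]=0$ already holds by the argument supporting Proposition \ref{prop-PutCall-DiscountedEuExcOpPrice}, and the factor $\vm{1}(\calA(t))$ forces $\Xi$ to vanish there. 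Hence \eqref{eqn-PutCall-InhomogeneousIPDE} is automatic on $\calC$, and the substantive work lies on $\calS$.

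On $\calS$, I would substitute the closed form $\tilde{V}^A(t,\tilde{S},v)=e^{-q_1 t}\tilde{S}-e^{-q_2 t}$ directly into the definition \eqref{eqn-PutCall-IPDEOperator} of $\calL_{\tilde{s},v}$. All $v$-derivatives, second $\tilde{s}$-derivatives, and the mixed partial vanish; the only surviving first-order terms are $\pder[\tilde{V}^A]{\tilde{s}}=e^{-q_1 t}$ and $\pder[\tilde{V}^A]{t}=-q_1 e^{-q_1 t}\tilde{S}+q_2 e^{-q_2 t}$, so the drift contribution reduces to $-e^{-q_1 t}\tilde{S}(\tilde{\lambda}_1\tilde{\kappa}_1+\tilde{\lambda}_2\tilde{\kappa}_2^-)$. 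For each jump expectation I would use the algebraic identity
\begin{align*}
\tilde{V}^A(t,\tilde{S}e^{\pm Y_i},v)-\tilde{V}^A(t,\tilde{S},v) &= \bigl[\tilde{V}^A(t,\tilde{S}e^{\pm Y_i},v)-(e^{-q_1 t}\tilde{S}e^{\pm Y_i}-e^{-q_2 t})\bigr]\\
&\quad +e^{-q_1 t}\tilde{S}(e^{\pm Y_i}-1),
\end{align*}
which decomposes each jump expectation into a ``rebalancing'' part, supported on $\calA_1(t)$ and $\calA_2(t)$ by the strict versions of \eqref{eqn-PutCall-StoppingContinuationCriterion-Y1} and \eqref{eqn-PutCall-StoppingContinuationCriterion-Y2}, and a ``compensator'' part integrating (by the definitions of $\tilde{\kappa}_1$ and $\tilde{\kappa}_2^-$) to $\tilde{\lambda}_1 e^{-q_1 t}\tilde{S}\tilde{\kappa}_1$ and $\tilde{\lambda}_2 e^{-q_1 t}\tilde{S}\tilde{\kappa}_2^-$. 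The compensators cancel the drift contribution exactly, reproducing equation \eqref{eqn-PutCall-IPDE-StoppingRegion} that was already derived in the proof of Proposition \ref{prop-PutCall-EarlyExercisePremium}. Re-expressing the rebalancing expectations as integrals against the densities $G_1$ and $G_2$ then yields precisely $-\Xi$, because on $\calS$ the conditions $\tilde{S}e^{Y_1}<B(t,v)e^{(q_1-q_2)t}$ and $\tilde{S}e^{-Y_2}<B(t,v)e^{(q_1-q_2)t}$ translate into $Y_1<b(t,\tilde{S},v)$ and $Y_2>-b(t,\tilde{S},v)$, respectively.

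The main technical obstacle is the loss of classical regularity across the early exercise boundary: inside $\calS$ the derivative $\pder[^2\tilde{V}^A]{\tilde{s}^2}$ vanishes while inside $\calC$ it is in general strictly positive, so a genuine discontinuity of the second derivative occurs on the boundary. The value-matching condition \eqref{eqn-PutCall-ValueMatchingCondition-Vtilde} and the smooth-pasting conditions \eqref{eqn-PutCall-SmoothPastingCondition-Vtilde} guarantee continuity of $\tilde{V}^A$ and $\pder[\tilde{V}^A]{\tilde{s}}$ across the boundary, so the two regional computations glue into a single equation that holds on all of $[0,T]\times\mathbb{R}_+^2$, interpreted in the almost-everywhere sense (the boundary being a null set for the joint transition law of $(\tilde{S},v)$). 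A secondary subtlety is that the jump integrals in $\Xi$ are genuinely non-local, since a single jump may transport the asset yield ratio across the boundary; this non-locality is precisely what the integration limits $\pm b(t,\tilde{S},v)$ encode, and its careful bookkeeping via $\calA_1$ and $\calA_2$ is what distinguishes the SVJD case from the pure-diffusion case of \citet{Jamshidian-1992}. Finally, the terminal and boundary conditions \eqref{eqn-PutCall-BoundaryConditions-Vtilde} are inherited unchanged from the original free-boundary formulation.
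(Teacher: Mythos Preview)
Your proposal is correct and follows essentially the same approach as the paper's own proof: split into continuation and stopping regions, note the homogeneous IPDE holds on $\calC$, substitute the exercise value on $\calS$ to recover equation \eqref{eqn-PutCall-IPDE-StoppingRegion}, and rewrite the jump expectations as integrals against $G_1$ and $G_2$. The paper's proof is more compressed---it simply writes the tautology $0 = (\partial_t+\calL_{\tilde{s},v})[\tilde{V}^A] - (\partial_t+\calL_{\tilde{s},v})[\tilde{V}^A]\vm{1}(\calA(t))$ and points back to the already-derived equation \eqref{eqn-PutCall-IPDE-StoppingRegion}---while you spell out the derivative computations and the compensator cancellation explicitly and add commentary on the boundary regularity, but the substance is the same.
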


\begin{proof}
Observe that for all $(t,\tilde{S}(t),v(t))\in[0,T]\times\mathbb{R}_+^2$, the equation $$0 = \pder[\tilde{V}^A]{t}+\calL_{\tilde{s},v}\left[\tilde{V}^A(t,\tilde{S}(t),v(t))\right] - \left(\pder[\tilde{V}^A]{t}+\calL_{\tilde{s},v}\left[\tilde{V}^A(t,\tilde{S}(t),v(t))\right]\right)\vm{1}(\calA(t))$$ holds. Equation \eqref{eqn-PutCall-InhomogeneousTerm} is obtained by expanding the negative term in the above equation, as was done to obtain equation \eqref{eqn-PutCall-IPDE-StoppingRegion} and using the $G_1$ and $G_2$ to rewrite the expectations as integrals.
\end{proof}

\section{Limit of the Early Exercise Boundary at Maturity}
\sectionmark{Limit of the Early Exercise Boundary}
\label{sec-PutCall-EEBLimit}


Of particular interest is the behavior of the unknown early exercise boundary near the maturity of the option and the conditions on model parameters under which the boundary is continuous at maturity. The next proposition presents the limit of the early exercise boundary, which we obtain following the method of \citet{ChiarellaZiogas-2009}.

\begin{prop}
\label{prop-PutCall-EEBLimit}
The limit $B(T^-,v)\equiv\lim_{t\to T^-}B(t,v)$ is a solution of the equation
\begin{equation}
\label{eqn-PutCall-EEBLimit}
B(T^-,v) = \max\left\{1,\frac{q_2+\tilde{\lambda}_1\int_{-\infty}^{-\ln B(T^-,v)} G_1(y)\dif y + \tilde{\lambda}_2\int_{\ln B(T^-,v)}^\infty G_2(y)\dif y}{q_1 +\tilde{\lambda}_1\int_{-\infty}^{-\ln B(T^-,v)} e^y G_1(y)\dif y + \tilde{\lambda}_2\int_{\ln B(T^-,v)}^\infty e^{-y} G_2(y)\dif y}\right\}.
\end{equation}
\end{prop}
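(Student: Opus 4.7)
The plan is to obtain \eqref{eqn-PutCall-EEBLimit} by evaluating the homogeneous pricing IPDE $\partial_t \tilde{V}^A + \calL_{\tilde{s},v}[\tilde{V}^A]=0$, which holds throughout the continuation region, on the early exercise boundary itself and then passing to the limit $t\to T^-$. The value-matching condition \eqref{eqn-PutCall-ValueMatchingCondition-Vtilde} and the smooth-pasting conditions \eqref{eqn-PutCall-SmoothPastingCondition-Vtilde} guarantee that $\tilde{V}^A$ and its first-order derivatives in $t$, $\tilde{s}$, and $v$ extend continuously across the boundary, so the IPDE may be evaluated legitimately at the boundary point $\tilde{S}_* = B(t,v)e^{(q_1-q_2)t}$. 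At that point the option takes its intrinsic value $e^{-q_1 t}\tilde{S}_* - e^{-q_2 t}$, which I substitute into every term of the operator that does not involve a jump.

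The drift and diffusion contributions are immediate. Differentiating the intrinsic value gives $\partial_t \tilde{V}^A = -q_1 e^{-q_1 t}\tilde{S}_* + q_2 e^{-q_2 t}$, $\partial_{\tilde s}\tilde{V}^A = e^{-q_1 t}$, and all second-order and $v$-derivatives vanish. Hence the non-jump part of $\partial_t + \calL_{\tilde{s},v}$ evaluated on the boundary equals
\[
-q_1 e^{-q_1 t}\tilde{S}_* + q_2 e^{-q_2 t} - \tilde{S}_*\left(\tilde{\lambda}_1 \tilde{\kappa}_1 + \tilde{\lambda}_2 \tilde{\kappa}_2^-\right) e^{-q_1 t}.
\]
For the jump-expectation terms I pass to $t \to T^-$ and exploit the explicit terminal payoff $\tilde{V}^A(T,\tilde{s},v) = e^{-q_1 T}(\tilde{s}-e^{(q_1-q_2)T})^+$. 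After a jump the yield ratio becomes $\tilde{S}_* e^{Y_1}$ in the first expectation and $\tilde{S}_* e^{-Y_2}$ in the second; writing $\tilde{S}_* = B(T^-,v) e^{(q_1-q_2)T}$, the post-jump position remains in the stopping region precisely when $Y_1 \geq -\ln B(T^-,v)$ (respectively $Y_2 \leq \ln B(T^-,v)$), and otherwise is sent back into the continuation region where the terminal payoff vanishes. Splitting each expectation at these critical points and using $\E_{\hat{\Q}}[e^{Y_1}] = 1 + \tilde{\kappa}_1$ and $\E_{\hat{\Q}}[e^{-Y_2}] = 1 + \tilde{\kappa}_2^-$ yields closed-form expressions for the two integrals in terms of truncated moments of $G_1$ and $G_2$.

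Assembling the pieces, the $\tilde{\lambda}_1 \tilde{\kappa}_1$ and $\tilde{\lambda}_2 \tilde{\kappa}_2^-$ contributions from the drift cancel exactly with the corresponding terms produced by the jump expectations. Using the identity $e^{-q_1 T}\tilde{S}_* = B(T^-,v)e^{-q_2 T}$ and dividing the resulting equation through by $e^{-q_2 T}$, what remains is linear in $B(T^-,v)$ and rearranges directly into the ratio displayed in \eqref{eqn-PutCall-EEBLimit}. The outer $\max\{1,\cdot\}$ simply encodes the a priori bound $B(t,v) \geq 1$ of \citet{BroadieDetemple-1997}: if the continuity-derived ratio falls below one, the boundary degenerates to the strike-ratio $e^{(q_1-q_2)t}$ itself, forcing $B(T^-,v) = 1$.

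The main obstacle is the careful bookkeeping of the jump terms, namely (i) determining, from the maturity payoff, which post-jump positions retain a nonzero intrinsic value and which are truncated to zero, (ii) verifying that the $\tilde{\kappa}$-dependent pieces from the SDE drift and the jump integrals cancel so that the final expression depends only on the truncated first and zeroth moments of $G_1$ and $G_2$, and (iii) tracking how the asymmetric treatment of jumps in the first asset (via $Y_1$) versus the num\'eraire asset (via $-Y_2$) gives rise to the reflected integration domains $(-\infty,-\ln B(T^-,v))$ and $(\ln B(T^-,v),\infty)$ respectively. A secondary subtlety is justifying that the smooth-pasting conditions survive the $t \to T^-$ limit, so that $\partial_t + \calL_{\tilde{s},v}$ may indeed be evaluated at the boundary with the intrinsic value substituted for $\tilde{V}^A$.
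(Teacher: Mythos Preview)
Your argument is correct and is essentially the same as the paper's proof. The only cosmetic difference is that the paper packages the computation you describe into the inhomogeneous term $\Xi$ of Proposition~\ref{prop-PutCall-InhomogeneousTerm}: setting $\Xi=0$ at $t=T$, $\tilde S = B(T^-,v)e^{(q_1-q_2)T}$ is exactly your ``evaluate $\partial_t\tilde V^A+\calL_{\tilde s,v}[\tilde V^A]$ on the boundary with the intrinsic value substituted and pass $t\to T^-$''. Because $\Xi$ was derived in equation~\eqref{eqn-PutCall-IPDE-StoppingRegion} with the $\tilde\kappa_1,\tilde\kappa_2^-$ cancellation already carried out and with the jump integrals already restricted to the back-into-continuation ranges, the paper can skip your bookkeeping steps (i)--(ii); you perform the same algebra explicitly from the raw operator $\calL_{\tilde s,v}$ and arrive at the identical equation.
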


\begin{proof}
The method of \citet{ChiarellaZiogas-2009}, adapted to our situation, is as follows.\footnote{\citet{ChiarellaZiogas-2009} proposed this method as an alternative to the local analysis of the option PDE for small time-to-maturity options as was done by \citet{Wilmott-1993} in the pure diffusion case.} First, we set the inhomogeneous term $\Xi(t,\tilde{S},v)$ (given by equation \eqref{eqn-PutCall-InhomogeneousTerm}) to zero and evaluate the result at $t=T$ and $\tilde{S}=B(T^-,v)e^{(q_1-q_2)T}$. The resulting expression is then rearranged to yield equation \eqref{eqn-PutCall-EEBLimit}.

Performing the first step yields the equation
\begin{align}
\begin{split}
\label{eqn-PutCall-EEBLimit-Step1}
0 & = e^{-q_2 T}\left(q_1 B(T^-,v)-q_2\right)\\
	& \qquad - \tilde{\lambda}_1\int_{-\infty}^{-\ln\left[\frac{B(T,v)e^{(q_1-q_2)T}}{B(T^-,v)e^{(q_1-q_2)T}}\right]}\left[\tilde{V}^A\left(T, B(T^-,v)e^{(q_1-q_2)T}e^y, v(T)\right)\right.\\
	& \hspace{100pt} \left.-e^{-q_2 T}\left(B(T^-,v)e^{y}-1\right)\right]G_1(y)\dif y\\
	& \qquad - \tilde{\lambda}_2\int_{\ln\left[\frac{B(T,v)e^{(q_1-q_2)T}}{B(T^-,v)e^{(q_1-q_2)T}}\right]}^{\infty}\left[\tilde{V}^A\left(T, B(T^-,v)e^{(q_1-q_2)T}e^{-y}, v(T)\right)\right.\\
	& \hspace{100pt} \left.-e^{-q_2 T}\left(B(T^-,v)e^{-y}-1\right)\right]G_2(y)\dif y.
\end{split}
\end{align}
At maturity $t=T$, the option will be exercised if $\tilde{S}(T) \geq e^{(q_1-q_2)T}$, and so $B(T,v)=1$. Thus in the above calculation, setting $\tilde{S}=B(T^-,v)e^{(q_1-q_2)T}$ induces the stopping criterion in equation \eqref{eqn-PutCall-StoppingContinuationRegions2} for $\calS(T)$ since $B(T^-,v)\geq 1$. This implies that $\vm{1}(\calA(T))=1$ in the inhomogeneous term \eqref{eqn-PutCall-InhomogeneousTerm}. Furthermore, terminal condition \eqref{eqn-PutCall-BoundaryConditions-Vtilde} implies that
\begin{align*}
\tilde{V}^A\left(T,B(T^-,v)e^{(q_1-q_2)T}e^{y},v(T)\right) & = \max\left\{0,e^{-q_2 T}\left(B(T^-,v)e^y-1\right)\right\}\\
\tilde{V}^A\left(T,B(T^-,v)e^{(q_1-q_2)T}e^{-y},v(T)\right) & = \max\left\{0,e^{-q_2 T}\left(B(T^-,v)e^{-y}-1\right)\right\}.
\end{align*}
Thus, the first integral in equation \eqref{eqn-PutCall-EEBLimit-Step1} will be zero if $B(T^-,v)e^y-1\geq 0$ or if $y\geq -\ln B(T^-,v)$. Likewise, the second integral will be zero if $B(T^-,v)e^{-y}-1\geq 0$ or if $y\leq\ln B(T^-,v)$. In other words, the integral terms will vanish if the maximum functions yield the nonzero alternative. Following this analysis, equation \eqref{eqn-PutCall-EEBLimit-Step1} simplifies to
\begin{align*}
0 & = q_1 B(T^-,v)-q_2 +\tilde{\lambda}_1\int_{-\infty}^{-\ln B(T^-,v)}\left[B(T^-,v)e^{y}-1\right]G_1(y)\dif y\\
	& \qquad +\tilde{\lambda}_2\int_{\ln B(T^-,v)}^{\infty}\left[B(T^-,v)e^{-y}-1\right]G_2(y)\dif y.
\end{align*}
Rearranging the terms yields the equation $$B(T^-,v) = \frac{q_2+\tilde{\lambda}_1\int_{-\infty}^{-\ln B(T^-,v)} G_1(y)\dif y + \tilde{\lambda}_2\int_{\ln B(T^-,v)}^\infty G_2(y)\dif y}{q_1 +\tilde{\lambda}_1\int_{-\infty}^{-\ln B(T^-,v)} e^y G_1(y)\dif y + \tilde{\lambda}_2\int_{\ln B(T^-,v)}^\infty e^{-y} G_2(y)\dif y}.$$

We note lastly from \citet{BroadieDetemple-1997} that $B(t,v)\geq 1$ for any $t\in[0,T]$ and $v\in(0,\infty)$. Therefore, we must enforce a lower bound of 1 on $B(T^-,v)$ via the maximum function. The result stated in the proposition thus holds.
\end{proof}

Equation \eqref{eqn-PutCall-EEBLimit} must be solved implicitly for $B(T^-,v)$, which can be done using standard root-finding techniques. From our analysis, we find that the limit is dependent on the asset dividend yields $q_1$ and $q_2$, the jump intensities $\tilde{\lambda}_1$ and $\tilde{\lambda}_2$, and the jump size densities $G_1$ and $G_2$. These dependencies highlight the influence of jumps in asset prices on the limiting behavior of the early exercise boundary.\footnote{This is in contrast to the proposition of \citet{CarrHirsa-2003}, in their analysis of the one-asset American put option where the log-price is driven by a L\'evy process, that the limit of the early exercise boundary is only dependent on the dividend yield and the risk-free rate.} In the succeeding discussion, we investigate the more specific effects of these parameters on the limit of the early exercise boundary. We note further that equation \eqref{eqn-PutCall-EEBLimit} does not depend on the instantaneous variance $v$ since the option payoff is independent of $v$. However, equation \eqref{eqn-PutCall-EEBLimit} is true for all $v\in(0,\infty)$.

In the absence of jumps (i.e. when $\tilde{\lambda}_1=\tilde{\lambda}_2=0$), the limit reduces to $\max\{1,q_2/q_1\}$. This is consistent with the result of \citet{BroadieDetemple-1997} for American exchange options in the pure diffusion case. In the pure diffusion case, $B(T^-,v) = q_2/q_1 > 1$ if $q_2>q_1$, implying that the early exercise boundary many not be continuous in $t$ at maturity. When jumps are present, the analysis of continuity becomes more complicated, as shown below.

First, we present some conditions under which equation \eqref{eqn-PutCall-EEBLimit} has a solution.

\begin{prop}
\label{prop-PutCall-EEBLimit-Existence}
Suppose $q_1,q_2\geq 0$ and $\tilde{\lambda}_1,\tilde{\lambda}_2>0$ are given and let $G_1$ and $G_2$ be continious probability density functions. The equation
\begin{equation}
\label{eqn-PutCall-EEBLimit-ImplicitComponent}
x = \frac{q_2+\tilde{\lambda}_1\int_{-\infty}^{-\ln x} G_1(y)\dif y + \tilde{\lambda}_2\int_{\ln x}^\infty G_2(y)\dif y}{q_1 +\tilde{\lambda}_1\int_{-\infty}^{-\ln x} e^y G_1(y)\dif y + \tilde{\lambda}_2\int_{\ln x}^\infty e^{-y} G_2(y)\dif y}
\end{equation}
has a unique solution $x^*\in(0,\infty)$ if $q_1>0$. Furthermore, $x^*>1$ if and only if $$q_2-q_1+\tilde{\lambda}_1\int_{-\infty}^0 (1-e^y)G_1(y)\dif y+\tilde{\lambda}_2\int_0^\infty (1-e^{-y})G_2(y)\dif y > 0.$$
\end{prop}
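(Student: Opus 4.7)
The plan is to rewrite equation \eqref{eqn-PutCall-EEBLimit-ImplicitComponent} as the vanishing of a single function whose monotonicity and boundary behavior are easy to analyze. Let
\[
N(x) = q_2 + \tilde{\lambda}_1\int_{-\infty}^{-\ln x} G_1(y)\dif y + \tilde{\lambda}_2\int_{\ln x}^\infty G_2(y)\dif y,
\]
\[
D(x) = q_1 + \tilde{\lambda}_1\int_{-\infty}^{-\ln x} e^y G_1(y)\dif y + \tilde{\lambda}_2\int_{\ln x}^\infty e^{-y} G_2(y)\dif y,
\]
and set $g(x) = xD(x) - N(x)$. Since $D(x) \geq q_1 > 0$, the equation $x = N(x)/D(x)$ is equivalent to $g(x) = 0$ on $(0,\infty)$, so it suffices to establish that $g$ has exactly one zero.

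The first step is to show that $g$ is strictly increasing. A direct rearrangement gives
\[
g(x) = q_1 x - q_2 + \tilde{\lambda}_1\int_{-\infty}^{-\ln x}(xe^y-1)G_1(y)\dif y + \tilde{\lambda}_2\int_{\ln x}^\infty (xe^{-y}-1)G_2(y)\dif y.
\]
The key observation is that each integrand vanishes at its variable endpoint ($xe^y-1=0$ at $y=-\ln x$, and $xe^{-y}-1=0$ at $y=\ln x$), so when differentiating under the integral sign via Leibniz's rule the boundary contributions drop out. A short calculation then yields $g'(x) = D(x) \geq q_1 > 0$, so $g$ is continuous and strictly increasing on $(0,\infty)$.

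Next I would identify the limits of $g$ at the endpoints. As $x\to 0^+$, the integration regions expand to all of $\mathbb{R}$ and, since $xe^y\le 1$ on $(-\infty,-\ln x]$ and $xe^{-y}\le 1$ on $[\ln x,\infty)$, one has $|(xe^{\pm y}-1)G_i(y)|\le G_i(y)$; dominated convergence gives each integral tending to $-1$, so $g(0^+) = -q_2-\tilde{\lambda}_1-\tilde{\lambda}_2 < 0$. As $x\to\infty$, both tail integrals are easily seen to vanish (splitting $xe^y-1$ and bounding $x\int_{-\infty}^{-\ln x} e^y G_1(y)\dif y \le \int_{-\infty}^{-\ln x} G_1(y)\dif y \to 0$, and symmetrically for the second), so $g(x) \sim q_1 x \to +\infty$ since $q_1>0$. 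The intermediate value theorem, together with strict monotonicity, yields a unique $x^*\in(0,\infty)$ with $g(x^*)=0$.

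For the final assertion, strict monotonicity gives $x^*>1$ if and only if $g(1)<0$. Evaluating,
\[
g(1) = q_1 - q_2 - \tilde{\lambda}_1\int_{-\infty}^0 (1-e^y)G_1(y)\dif y - \tilde{\lambda}_2\int_0^\infty (1-e^{-y})G_2(y)\dif y,
\]
and rearranging $g(1)<0$ immediately gives the stated inequality. The only real obstacle is the justification of the dominated convergence step at $x\to 0^+$, but the uniform bound by $G_i(y)$ available on the relevant region makes this routine; everything else reduces to the clean cancellation that makes $g' = D$.
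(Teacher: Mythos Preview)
Your proof is correct and follows essentially the same approach as the paper: the paper defines $f(x)=N(x)-xD(x)=-g(x)$, shows $f'(x)=-D(x)<0$, and reads off the same endpoint limits and the same evaluation at $x=1$. Your version is in fact slightly more careful, since you explicitly note the vanishing of the Leibniz boundary terms and justify the $x\to 0^+$ limit via dominated convergence, whereas the paper states these steps without elaboration.
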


\begin{proof}
Our proof adapts the arguments made by \citet[pp. 34-35]{ChiarellaKangMeyer-2015}. For $x\in(0,\infty)$, define the function
\begin{align*}
f(x)	& = q_2+\tilde{\lambda}_1\int_{-\infty}^{-\ln x} G_1(y)\dif y + \tilde{\lambda}_2\int_{\ln x}^\infty G_2(y)\dif y\\
			& \qquad - x\left(q_1 +\tilde{\lambda}_1\int_{-\infty}^{-\ln x} e^y G_1(y)\dif y + \tilde{\lambda}_2\int_{\ln x}^\infty e^{-y} G_2(y)\dif y\right).
\end{align*}
Denote by $x^*$ a zero of $f$ (i.e. $f(x^*)=0$) on $(0,\infty)$, if any exist.

Differentiating $f$ with respect to $x$ yields
\begin{align*}
f'(x) & = - \left(q_1 +\tilde{\lambda}_1\int_{-\infty}^{-\ln x} e^y G_1(y)\dif y + \tilde{\lambda}_2\int_{\ln x}^\infty e^{-y} G_2(y)\dif y\right).
\end{align*}
The integrals above are nonnegative. Hence, $f$ is strictly decreasing on $(0,\infty)$ if $q_1>0$. We also observe that $$\lim_{x\to 0^+}f(x) = q_2+\tilde{\lambda_1}\int_{-\infty}^\infty G_1(y)\dif y+\tilde{\lambda}_2\int_{-\infty}^\infty G_2(y)\dif y = q_2+\tilde{\lambda}_1+\tilde{\lambda}_2>0$$ and
\begin{align*}
\lim_{x\to\infty} f(x)
	& = q_1 + \tilde{\lambda}\lim_{x\to\infty}\int_{-\infty}^{-\ln x}G_1(y)\dif y+\tilde{\lambda}_2\lim_{x\to\infty}\int_{\ln x}^\infty G_2(y)\dif y\\
	& \qquad - \lim_{x\to\infty} x\left(q_1 +\tilde{\lambda}_1\int_{-\infty}^{-\ln x} e^y G_1(y)\dif y + \tilde{\lambda}_2\int_{\ln x}^\infty e^{-y} G_2(y)\dif y\right)\\
	& = q_2 -\lim_{x\to\infty} x q_1.
\end{align*}
Thus, if $q_1>0$, then $\lim_{x\to\infty}f(x)<0$ and so $f$ strictly decreases from positive to negative values as $x$ increases on $(0,\infty)$. Therefore, there exists a unique $x^*\in(0,\infty)$ such that $f(x^*)=0$.

Now suppose $q_1>0$. Evaluating $f$ at $x=1$ gives us $$f(1) = q_2-q_1+\tilde{\lambda}_1\int_{-\infty}^0 (1-e^y)G_1(y)\dif y+\tilde{\lambda}_2\int_0^\infty (1-e^{-y})G_2(y)\dif y.$$ If $f(1)\leq 0$, then $x^*$ must be in the interval $(0,1]$ since $f$ is strictly decreasing. Otherwise, $x^*>1$ if and only if $f(1)>0$, which is the condition stated in the proposition.
\end{proof}

If a solution $x^*$ of equation \eqref{eqn-PutCall-EEBLimit-ImplicitComponent} exists, then the limit of the early exercise boundary is $B(T^-,v)=\max\{1,x^*\}$.

\begin{rem}
If $q_1=0$ and $q_2>0$, then $f'(x)\leq 0$ for all $x\in(0,\infty)$ and $\lim_{x\to\infty}f(x) = q_2 > 0$. That is, $f$ is non-increasing and remains positive as $x$ increases in $(0,\infty)$. Thus, $f$ does not have any zeros on $(0,\infty)$ and hence it is not optimal to exercise the option prior to maturity.
\end{rem}

An immediate result from Proposition \ref{prop-PutCall-EEBLimit-Existence} is a condition for the continuity of $B(t,v)$ at maturity.

\begin{prop}
Suppose $q_1>0$. For any fixed $v\in(0,\infty)$, $B(t,v)$ is continuous at maturity $t=T$ if
\begin{equation}
\label{eqn-PutCall-EEBLimit-ContinuityCondition}
q_1\geq q_2+\tilde{\lambda}_1\int_{-\infty}^0 (1-e^y)G_1(y)\dif y+\tilde{\lambda}_2\int_0^\infty (1-e^{-y})G_2(y)\dif y.
\end{equation}
\end{prop}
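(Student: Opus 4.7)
The plan is to observe that this proposition is essentially an immediate corollary of Proposition \ref{prop-PutCall-EEBLimit-Existence}, once we pin down the value of $B(t,v)$ at the exact maturity $t=T$ and compare it with the limit $B(T^-,v)$.

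First I would recall from the proof of Proposition \ref{prop-PutCall-EEBLimit} that at $t=T$ the option is exercised precisely when $\tilde{S}(T)\geq e^{(q_1-q_2)T}$, so the critical ratio satisfies $B(T,v)=1$. Continuity of $B(\cdot,v)$ at $t=T$ therefore amounts to verifying $B(T^-,v)=1$. Using Proposition \ref{prop-PutCall-EEBLimit}, we have $B(T^-,v)=\max\{1,x^*\}$, where $x^*$ is the unique zero on $(0,\infty)$ of the function
\begin{align*}
f(x) & = q_2+\tilde{\lambda}_1\int_{-\infty}^{-\ln x} G_1(y)\dif y + \tilde{\lambda}_2\int_{\ln x}^\infty G_2(y)\dif y\\
     & \quad - x\left(q_1 +\tilde{\lambda}_1\int_{-\infty}^{-\ln x} e^y G_1(y)\dif y + \tilde{\lambda}_2\int_{\ln x}^\infty e^{-y} G_2(y)\dif y\right),
\end{align*}
introduced in the proof of Proposition \ref{prop-PutCall-EEBLimit-Existence}. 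Thus continuity at $T$ is equivalent to $x^*\leq 1$.

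Next I would invoke the monotonicity of $f$ established in the proof of Proposition \ref{prop-PutCall-EEBLimit-Existence}: since $q_1>0$, $f$ is strictly decreasing on $(0,\infty)$, so $x^*\leq 1$ if and only if $f(1)\leq 0$. A direct evaluation at $x=1$ (already carried out in that proof) gives
\begin{equation*}
f(1) = q_2-q_1+\tilde{\lambda}_1\int_{-\infty}^0 (1-e^y)G_1(y)\dif y+\tilde{\lambda}_2\int_0^\infty (1-e^{-y})G_2(y)\dif y,
\end{equation*}
and the inequality $f(1)\leq 0$ is exactly the hypothesis \eqref{eqn-PutCall-EEBLimit-ContinuityCondition}. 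Therefore under that hypothesis we conclude $x^*\leq 1$, hence $B(T^-,v)=\max\{1,x^*\}=1=B(T,v)$, establishing continuity.

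There is no real obstacle here; the argument is a short bookkeeping exercise that strings together the representation $B(T^-,v)=\max\{1,x^*\}$ from Proposition \ref{prop-PutCall-EEBLimit} with the sign characterization of $x^*$ relative to $1$ from Proposition \ref{prop-PutCall-EEBLimit-Existence}. The only subtlety worth flagging explicitly in the write-up is that $B(T,v)=1$ (not $B(T,v)=B(T^-,v)$) must be taken as the reference value on the right side of the limit, which is why the maximum with $1$ in Proposition \ref{prop-PutCall-EEBLimit} is precisely the source of the potential discontinuity that condition \eqref{eqn-PutCall-EEBLimit-ContinuityCondition} rules out.
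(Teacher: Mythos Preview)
Your proposal is correct and follows essentially the same approach as the paper: both argue that condition \eqref{eqn-PutCall-EEBLimit-ContinuityCondition} is equivalent to $f(1)\leq 0$, which by the strict monotonicity of $f$ from Proposition \ref{prop-PutCall-EEBLimit-Existence} forces $x^*\in(0,1]$, hence $B(T^-,v)=\max\{1,x^*\}=1=B(T,v)$. Your write-up is simply more explicit than the paper's terse reference to ``the discussion at the end of the proof of Proposition \ref{prop-PutCall-EEBLimit-Existence}.''
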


\begin{proof}
Suppose $q_1>0$ and condition \eqref{eqn-PutCall-EEBLimit-ContinuityCondition} holds. Then from the discussion at the end of the proof of Proposition \ref{prop-PutCall-EEBLimit-Existence}, the solution $x^*$ to equation \eqref{eqn-PutCall-EEBLimit-ImplicitComponent} lies in the interval $(0,1]$. It follows that $B(T^-,v)=1$, which is also the value of $B(T,v)$. Thus, $B(t,v)$ is continuous at the option maturity.
\end{proof}

We briefly discuss the behavior of $B(T^-,v)$ with respect to changes in $q_1$. Note that $\partial f/\partial q_1 = -x<0$, so when $q_1$ decreases, $f(x)$ increases. In particular, for a given $q_1>0$, there exists $x^*\in(0,\infty)$ such that $f(x^*)=0$. If $q_1$ decreases, then $f(x^*)$ increases away from zero, thereby moving the unique zero of $f$ to some other number $x'\in(0,\infty)$ such that $x'>x^*$. In other words, the solution $x^*$ of equation \eqref{eqn-PutCall-EEBLimit-ImplicitComponent} increases without bound, and consequently $B(T^-,v)\to\infty$, as $q_1\to 0^+$. \emph{Thus, when the first asset bears no dividend yield, it is not optimal to exercise the American exchange option early or at least immediately prior to the option maturity.}

\section{The Transition Density Function}
\label{sec-PutCall-TransitionDensityFunction}

To determine the price of the American exchange option using equation \eqref{eqn-PutCall-EarlyExerciseRepresentation}, we need to solve for the price of the European exchange option $\tilde{V}$ and evaluate the early exercise premium $\tilde{V}^P$ in equation \eqref{eqn-PutCall-EarlyExercisePremium2}. To do so, we need to solve for the joint transition density function of the asset yield ratio process $\tilde{S}$ and the variance process $v$ under $\Q$. With the joint transition density function, we may evaluate the expectations in equations \eqref{eqn-PutCall-DiscountedExcOpPrice2} and \eqref{eqn-PutCall-EarlyExerciseRepresentation}.

Let $Q(T,u,b;t,s,v)$ denote the joint transition density function of $(\tilde{S},v)$ under the probability measure $\hat{\Q}$:
\begin{equation*}
Q(T,\tilde{s}_T,v_T;t,\tilde{s},v) = \hat{\Q}\left(\left.\tilde{S}(T)=\tilde{s}_T, v(T)=v_T\right|\tilde{S}(t) = \tilde{s}, v(t) = v\right).
\end{equation*}
This denotes the probability of passage from $(\tilde{S}(t),v(t))=(\tilde{s},v)$ at time $t$ to $(\tilde{s}_T,v_T)$ at time $T$. The Kolmogorov backward equation associated to $Q$ is given by
\begin{equation}
\label{eqn-PutCall-KolmogorovBE}
\pder[Q]{t}+\calL_{\tilde{s},v}\left[Q(T,\tilde{s}_T,v_T;t,\tilde{s},v)\right] = 0,
\end{equation}
which is to be solved for $t\in[0,T]$ and $(\tilde{s},v)\in\mathbb{R}_+^2$ subject to the terminal condition $Q(T,\tilde{s}_T,v_T;T,\tilde{s},v) = \delta(\tilde{s}-\tilde{s}_T)\delta(v-v_T),$ where $\delta(\cdot)$ is the Dirac-delta function. Since $T$, $\tilde{s}_T$, and $v_T$ are specific constants, we will denote $Q$ by $Q(t,\tilde{s},v)$ in the succeeding calculations for notational brevity.

To solve equation \eqref{eqn-PutCall-KolmogorovBE}, let $x = \ln\tilde{s}$ and define the function $H$ by
\begin{equation}
\label{eqn-PutCall-ChangeofVariable}
H(T,x_T,v_T;t,x,v) = Q(T,e^{x_T},v_t;t,e^x,v).
\end{equation}
Thus, when written in terms of $H$ and its partial derivatives, equation \eqref{eqn-PutCall-KolmogorovBE} becomes
\begin{align}
\begin{split}
\label{eqn-PutCall-KBE-ChangeofVariable}
0	& = \pder[H]{t}-\left(\tilde{\lambda}_1\tilde{\kappa}_1+\tilde{\lambda}_2\tilde{\kappa}_2^- + \frac{1}{2}\sigma^2 v\right)\pder[H]{x}+\left[\xi\eta-(\xi+\Lambda)v\right]\pder[H]{v}\\
	& \qquad + \frac{1}{2}\sigma^2 v\pder[^2H]{x^2}+\frac{1}{2}\omega^2 v\pder[^2H]{v^2}+\omega\left(\sigma_1\rho_1-\sigma_2\rho_2\right)v\pder[^2H]{x\partial v}\\
	& \qquad + \tilde{\lambda}_1\E_{\hat{\Q}}^{Y_1}\left[H(t,x+Y_1,v)-H(t,x,v)\right]\\
	& \qquad +\tilde{\lambda}_2\E_{\hat{\Q}}^{Y_2}\left[H(t,x-Y_2,v)-H(t,x,v)\right].
\end{split}
\end{align}
The associated terminal condition is $H(T,x_T,v_T;T,x,v) = \delta(x-x_T)\delta(v-v_T).$ Moving forward, we shall denote $H$ by $H(t,x,v)$ to emphasize that we are solving equation \eqref{eqn-PutCall-KBE-ChangeofVariable} in $t$, $x$, and $v$ and that $x_T$ and $v_T$ are given terminal values of these variables.

The function $H(t,x,v)$ may be interpreted as the joint transition density function of the process $(X(t),v(t))$ indicating the probability of passage from $(x,v)$ at time $t$ to $(x_T,v_T)$ at time $t=T$. In symbols, we have $$H(T,x_T,v_T;t,x,v) = \hat{\Q}\left(\left.X(T)=x_T, v(T)=v_T\right|X(t) = x, v(t) = v\right).$$

Since the coefficients of equation \eqref{eqn-PutCall-KBE-ChangeofVariable} no longer contain $x$, we can take its Fourier transform with respect to $x$ to further simplify the equation. The following technical assumption is required to be able to take the Fourier transform of the partial derivatives of $H$. This assumption is reasonable to impose on $H$ as it is a transition density function and is expected to vanish in the extremities of its domain \citep{ChiarellaZiogasZiveyi-2010, CheangChiarellaZiogas-2013}.

\begin{asp}
\label{asp-PutCall-FourierTransformAssumption}
As $x\to\pm\infty$, $H(t,x,v) \to 0$, $\partial H/\partial x \to 0$, and $\partial H/\partial v \to 0$.
\end{asp}

Given this assumption on $H$, we now take the Fourier transform of equation \eqref{eqn-PutCall-KBE-ChangeofVariable} in $x$.

\begin{prop}
\label{prop-PutCall-FourierTransformedPDE}
Let $\hat{H}(t,\phi,v)$ denote the Fourier transform of $H(t,x,v)$ with respect to $x$,
\begin{equation}
\label{eqn-PutCall-FourierTransform}
\hat{H}(t,\phi,v) = \scrF_x\left\{H(t,x,v)\right\}(\phi) = \int_{-\infty}^\infty e^{i\phi x}H(t,x,v)\dif x.
\end{equation}
Then $\hat{H}$ satisfies the equation
\begin{equation}
\label{eqn-PutCall-FourierTransformedPDE}
0 = \pder[\hat{H}]{t}+\left(\frac{1}{2}\varepsilon v-i\phi\Psi\right)\hat{H}+\left(\alpha-\Theta v\right)\pder[\hat{H}]{v}+\frac{1}{2}\omega^2 v\pder[^2\hat{H}]{v^2},
\end{equation}
where
\begin{align}
\begin{split}
\label{eqn-PutCall-FourierTransformedPDE-Parameters}
\alpha	& \equiv \xi\eta\\
\Theta = \Theta(\phi) & \equiv \xi+\Lambda+i\phi\omega\left(\sigma_1\rho_1-\sigma_2\rho_2\right)\\
\varepsilon = \varepsilon(\phi)	& \equiv \sigma^2\left(i\phi-\phi^2\right)\\
\Psi = \Psi(\phi) & \equiv -\tilde{\lambda}_1\tilde{\kappa}_1-\tilde{\lambda}_2\tilde{\kappa}_2^- - \frac{\tilde{\lambda}_1}{i\phi}\left(\varphi_1(\phi)-1\right)-\frac{\tilde{\lambda}_2}{i\phi}\left(\varphi_2(-\phi)-1\right),
\end{split}
\end{align}
and $\varphi_j(\phi) = \int_{\mathbb{R}}e^{-i\phi y}G_j(y)\dif y$ is the characteristic function of $Y_j$ under $\hat{\Q}$, $j=1,2$. The associated terminal condition is
\begin{equation}
\hat{H}(T,\phi,v) = e^{i\phi x_T}\delta(v-v_T).
\end{equation}
\end{prop}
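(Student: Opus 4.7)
The plan is a direct termwise application of the Fourier transform $\scrF_x$ to equation \eqref{eqn-PutCall-KBE-ChangeofVariable}, using Assumption \ref{asp-PutCall-FourierTransformAssumption} to justify the necessary integrations by parts. First, I would exchange $\scrF_x$ with $\partial/\partial t$ and with the $v$-derivatives by differentiation under the integral sign, which yields $\scrF_x\{\partial_t H\} = \partial_t \hat H$, $\scrF_x\{\partial_v H\} = \partial_v \hat H$, and $\scrF_x\{\partial_{vv} H\} = \partial_{vv}\hat H$. Next, for the $x$-derivatives, integration by parts together with the decay conditions in Assumption \ref{asp-PutCall-FourierTransformAssumption} gives the standard identities $\scrF_x\{\partial_x H\} = -i\phi\,\hat H$, $\scrF_x\{\partial_{xx} H\} = -\phi^2\,\hat H$, and $\scrF_x\{\partial_{xv} H\} = -i\phi\,\partial_v \hat H$.

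The main bookkeeping step is the treatment of the two jump integrals. By the shift property of the Fourier transform, $\scrF_x\{H(t,x+y,v)\} = e^{-i\phi y}\hat H(t,\phi,v)$ and $\scrF_x\{H(t,x-y,v)\} = e^{i\phi y}\hat H(t,\phi,v)$. Appealing to Fubini to interchange the $\hat\Q$-expectation in $Y_j$ with the Fourier integral in $x$ (which is legitimate because $H$ is a transition density and the $e^{\pm i\phi y}$ factor is bounded in $y$), I obtain
\begin{align*}
\scrF_x\left\{\E_{\hat\Q}^{Y_1}[H(t,x+Y_1,v)-H(t,x,v)]\right\} & = (\varphi_1(\phi)-1)\hat H(t,\phi,v),\\
\scrF_x\left\{\E_{\hat\Q}^{Y_2}[H(t,x-Y_2,v)-H(t,x,v)]\right\} & = (\varphi_2(-\phi)-1)\hat H(t,\phi,v),
\end{align*}
where $\varphi_j$ is the characteristic function of $Y_j$ under $\hat\Q$ as defined in the statement.

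Substituting these transformed terms into \eqref{eqn-PutCall-KBE-ChangeofVariable} and collecting coefficients, the $v\,\partial_{vv}\hat H$ and $\partial_v \hat H$ pieces combine directly: the $\partial_v\hat H$ coefficient is $\xi\eta - [\xi+\Lambda + i\phi\omega(\sigma_1\rho_1-\sigma_2\rho_2)]v = \alpha - \Theta v$. The remaining zero-order-in-$\hat H$ coefficients gather the diffusion contribution $i\phi\tfrac12\sigma^2 v - \tfrac12\sigma^2 v\,\phi^2 = \tfrac12\sigma^2(i\phi-\phi^2)v = \tfrac12\varepsilon v$ together with the drift/jump contribution $i\phi(\tilde\lambda_1\tilde\kappa_1 + \tilde\lambda_2\tilde\kappa_2^-) + \tilde\lambda_1(\varphi_1(\phi)-1) + \tilde\lambda_2(\varphi_2(-\phi)-1)$, which is exactly $-i\phi\Psi(\phi)$ upon factoring out $-i\phi$ as in the definition of $\Psi$. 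This yields \eqref{eqn-PutCall-FourierTransformedPDE}.

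Finally, the terminal condition is handled by applying $\scrF_x$ to $H(T,x,v) = \delta(x-x_T)\delta(v-v_T)$, giving $\hat H(T,\phi,v) = e^{i\phi x_T}\delta(v-v_T)$ via the sifting property. The main (though still routine) obstacle is the algebraic reconciliation of the jump-related terms with the compact definition of $\Psi$; all other steps are standard manipulations of the Fourier transform that only require the decay in Assumption \ref{asp-PutCall-FourierTransformAssumption} and the integrability implicit in $H$ being a transition density.
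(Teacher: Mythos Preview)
Your proposal is correct and follows essentially the same approach as the paper: a termwise Fourier transform of equation \eqref{eqn-PutCall-KBE-ChangeofVariable}, using Assumption \ref{asp-PutCall-FourierTransformAssumption} for the integration-by-parts identities on the $x$-derivatives, the shift/convolution property for the jump expectations, and then algebraic regrouping into the parameters $\alpha$, $\Theta$, $\varepsilon$, and $\Psi$. Your treatment is in fact slightly more explicit than the paper's in showing how the drift and jump contributions combine into $-i\phi\Psi$.
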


\begin{proof}
See Appendix \ref{app-prop-PutCall-FourierTransformedPDE-Proof}.
\end{proof}

Save for some minor notational differences, PDE \eqref{eqn-PutCall-FourierTransformedPDE} is identical to the PDE presented in \citet[Proposition 4.1]{CheangChiarellaZiogas-2013} for the transition density function of a single-asset stochastic volatility jump-diffusion model. This resemblance is expected since the underlying asset yield ratio process $\tilde{S}(t)$ is modelled similarly with a Heston-type stochastic volatility process but with two jump components. The notational difference is pronounced in the definition of $\Psi(\phi)$ in equation \eqref{eqn-PutCall-FourierTransformedPDE-Parameters} where we have two terms corresponding to the jump size variables $Y_1$ and $Y_2$, in contrast to that of \citet{CheangChiarellaZiogas-2013} who only have one jump term. Due to these similarities, we follow the method of \citet{CheangChiarellaZiogas-2013} in the succeeding calculations to determine the solution of equation \eqref{eqn-PutCall-FourierTransformedPDE}.

At this point, we have reduced IPDE \eqref{eqn-PutCall-KBE-ChangeofVariable} to a second-order PDE \eqref{eqn-PutCall-FourierTransformedPDE} in $t$ and $v$. Solutions of second-order PDEs such as equation \eqref{eqn-PutCall-FourierTransformedPDE} have been obtained by \citet{Feller-1951} using a Laplace transform with respect to $v$. Thus we may further simplify equation \eqref{eqn-PutCall-FourierTransformedPDE} by taking its Laplace transform with respect to $v$. First, additional technical assumptions must be imposed on $\hat{H}(t,\phi,v)$ to ensure that all required Laplace transforms are well-defined.

\begin{asp}
\label{asp-PutCall-LaplaceTransformAssumption}
As $v\to+\infty$, $e^{-\vartheta v}\hat{H}(t,\phi,v) \to 0$ and $e^{-\vartheta v}\partial\hat{H}/\partial v \to 0$.
\end{asp}

As before, these technical conditions can be reasonably assumed since $H$ is a transition density function. In particular, the Fourier transform $\hat{H}$ defined in equation \eqref{eqn-PutCall-FourierTransform} and its partial derivative in $v$ both decay to zero since $H(t,x,v)$ decays to zero as $v\to\infty$. Assumption \ref{asp-PutCall-LaplaceTransformAssumption} also implies that the growth of $\hat{H}$ and $\partial\hat{H}/\partial v$ is dominated by the growth of the exponential term $e^{\vartheta v}$ as $v\to\infty$ for any $\vartheta>0$. Taking Assumption \ref{asp-PutCall-LaplaceTransformAssumption} to be true, we now reduce equation \eqref{eqn-PutCall-FourierTransformedPDE} to a first-order PDE in the following proposition.

\begin{prop}
\label{prop-PutCall-LaplaceTransformedPDE}
Let $\bar{H}(t,\phi,\vartheta)$ be the Laplace transform of $\hat{H}(t,\phi,v)$ with respect to $v$,
\begin{equation}
\label{eqn-PutCall-LaplaceTransform}
\bar{H}(t,\phi,\vartheta) = \scrL_v\left\{\hat{H}(t,\phi,v)\right\}(\vartheta) = \int_0^\infty e^{-\vartheta v}\hat{H}(t,\phi,v)\dif v.
\end{equation}
Then $\bar{H}$ satisfies the equation
\begin{equation}
\label{eqn-PutCall-LaplaceTransformedPDE}
-\pder[\bar{H}]{t} + \left[\frac{1}{2}\omega^2\vartheta^2-\Theta\vartheta+\frac{1}{2}\varepsilon\right]\pder[\bar{H}]{\vartheta} = \left[(\alpha-\omega^2)+\Theta-i\phi\Psi\right]\bar{H}+f(t),
\end{equation}
where $f(t) \equiv (\omega^2/2-\alpha)\hat{H}(t,\phi,0)$ must be determined such that 
\begin{equation}
\label{eqn-PutCall-LaplaceTransformedPDE-FinitenessCondition}
\lim_{\vartheta\to\infty}\bar{H}(t,\phi,\vartheta) = 0.
\end{equation} 
Equation \eqref{eqn-PutCall-LaplaceTransformedPDE} has terminal condition
\begin{equation}
\label{eqn-PutCall-LaplaceTransformedPDE-TerminalCondition}
\bar{H}(T,\phi,\vartheta) = \exp\left\{i\phi x_T-\vartheta v_T\right\}.
\end{equation}
\end{prop}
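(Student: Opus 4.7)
The plan is to apply the Laplace transform $\scrL_v$ directly to the linear PDE~\eqref{eqn-PutCall-FourierTransformedPDE} and rearrange the resulting expression into the form~\eqref{eqn-PutCall-LaplaceTransformedPDE}. The first step is to catalogue the transform identities needed. From $\scrL_v\{v f(v)\}(\vartheta)=-\bar f'(\vartheta)$ and $\scrL_v\{f'(v)\}(\vartheta)=\vartheta\bar f(\vartheta)-f(0)$, both of which hold under the decay conditions in Assumption~\ref{asp-PutCall-LaplaceTransformAssumption}, I would derive the three compound identities that actually appear:
\begin{align*}
\scrL_v\{v\hat H\} &= -\partial_\vartheta \bar H,\\
\scrL_v\{v\,\partial_v\hat H\} &= -\bar H-\vartheta\,\partial_\vartheta \bar H,\\
\scrL_v\{v\,\partial_v^2\hat H\} &= -2\vartheta\bar H-\vartheta^2\partial_\vartheta \bar H+\hat H(t,\phi,0).
\end{align*}
The last is obtained by first computing $\scrL_v\{\partial_v^2\hat H\}=\vartheta^2\bar H-\vartheta\hat H(t,\phi,0)-\partial_v\hat H(t,\phi,0)$ and then applying $-\partial_\vartheta$; the unknown $\partial_v\hat H(t,\phi,0)$ term, being $\vartheta$-independent, drops out, which is the only reason multiplying by $v$ is crucial for closing the system.

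Next, I would apply $\scrL_v$ termwise to~\eqref{eqn-PutCall-FourierTransformedPDE}. The time-derivative term gives $\partial_t\bar H$ (interchange of $\partial_t$ with the $\vartheta$-integral is justified by dominated convergence and Assumption~\ref{asp-PutCall-LaplaceTransformAssumption}); the term $\frac12\varepsilon v\hat H$ contributes $-\frac12\varepsilon\,\partial_\vartheta\bar H$; the term $-i\phi\Psi\hat H$ gives $-i\phi\Psi\bar H$; the term $\alpha\,\partial_v\hat H$ produces $\alpha\vartheta\bar H-\alpha\hat H(t,\phi,0)$; the term $-\Theta v\,\partial_v\hat H$ yields $\Theta\bar H+\Theta\vartheta\,\partial_\vartheta\bar H$; and the term $\tfrac12\omega^2 v\,\partial_v^2\hat H$ gives $-\omega^2\vartheta\bar H-\tfrac12\omega^2\vartheta^2\partial_\vartheta\bar H+\tfrac12\omega^2\hat H(t,\phi,0)$. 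Summing, the coefficient of $\partial_\vartheta\bar H$ collects to $-\tfrac12\varepsilon+\Theta\vartheta-\tfrac12\omega^2\vartheta^2$, the coefficient of $\bar H$ to $-i\phi\Psi+(\alpha-\omega^2)\vartheta+\Theta$, and the remaining inhomogeneity is $(\tfrac12\omega^2-\alpha)\hat H(t,\phi,0)$. Multiplying through by $-1$ and moving the inhomogeneous term to the right-hand side yields precisely~\eqref{eqn-PutCall-LaplaceTransformedPDE} with $f(t)=(\omega^2/2-\alpha)\hat H(t,\phi,0)$.

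The terminal condition~\eqref{eqn-PutCall-LaplaceTransformedPDE-TerminalCondition} follows immediately from the sifting property:
\[
\bar H(T,\phi,\vartheta)=\int_0^\infty e^{-\vartheta v}e^{i\phi x_T}\delta(v-v_T)\,\dif v=e^{i\phi x_T-\vartheta v_T}.
\]
The finiteness requirement~\eqref{eqn-PutCall-LaplaceTransformedPDE-FinitenessCondition} is a direct consequence of dominated convergence applied to~\eqref{eqn-PutCall-LaplaceTransform} once one notes that $\hat H(t,\phi,\cdot)$ is integrable on $(0,\infty)$ under Assumption~\ref{asp-PutCall-LaplaceTransformAssumption}; this condition is what will eventually pin down the \emph{a priori} unknown boundary trace $\hat H(t,\phi,0)$ hidden inside $f(t)$ when~\eqref{eqn-PutCall-LaplaceTransformedPDE} is solved in the next stage of the analysis.

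There is no real analytic obstacle here, only careful bookkeeping. The main pitfalls I would watch for are (i) correctly tracking the two sign flips (the $-\partial_\vartheta$ produced by each factor of $v$ and the final overall multiplication by $-1$ to match the paper's sign convention on $\partial_t\bar H$), and (ii) recognising that the apparently troublesome $\partial_v\hat H(t,\phi,0)$ term arising from $\scrL_v\{\partial_v^2\hat H\}$ vanishes after differentiation in $\vartheta$, so that the only boundary trace which survives is $\hat H(t,\phi,0)$ itself, packaged inside $f(t)$.
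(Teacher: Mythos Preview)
Your proposal is correct and follows essentially the same approach as the paper's own proof: both apply the Laplace transform termwise to equation~\eqref{eqn-PutCall-FourierTransformedPDE}, invoke Assumption~\ref{asp-PutCall-LaplaceTransformAssumption} to justify the standard identities for $\scrL_v\{v\hat H\}$, $\scrL_v\{v\,\partial_v\hat H\}$ and $\scrL_v\{v\,\partial_v^2\hat H\}$, and then collect terms. Your derivation in fact yields a coefficient $(\alpha-\omega^2)\vartheta$ on $\bar H$ (as does the paper's Appendix computation), so the missing $\vartheta$ in the displayed statement of~\eqref{eqn-PutCall-LaplaceTransformedPDE} is a typographical slip in the proposition, not an error in your argument.
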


\begin{proof}
See Appendix \ref{app-prop-PutCall-LaplaceTransformedPDE-Proof}.
\end{proof}

In the next proposition, we present the solution of equation \eqref{eqn-PutCall-LaplaceTransformedPDE}.

\begin{prop}
\label{prop-PutCall-LaplaceTransformSolution}
The solution $\bar{H}(t,\phi,\vartheta)$ of equation \eqref{eqn-PutCall-LaplaceTransformedPDE} is given by
\begin{align}
\begin{split}
\label{eqn-PutCall-LaplaceTransformSolution}
\bar{H}(t,\phi,\vartheta)
	& = \exp\left\{\left[\frac{(\alpha-\omega^2)(\Theta-\digamma)}{\omega^2}+\Theta-i\phi\Psi\right](T-t)\right\}\\
	& \qquad \times \left[\frac{2\digamma}{(\omega^2\vartheta-\Theta+\digamma)(e^{\digamma(T-t)}-1)+2\digamma}\right]^{2-\frac{2\alpha}{\omega^2}}\\
	& \qquad \times \exp\left\{i\phi x_T-\left(\frac{\Theta-\digamma}{\omega^2}\right)v_T\right\}\\
	& \qquad \times \exp\left\{\frac{-2\digamma v_T(\omega^2\vartheta-\Theta+\digamma)e^{\digamma(T-t)}}{\omega^2\left[(\omega^2\vartheta-\Theta+\digamma)(e^{\digamma(T-t)}-1)+2\digamma\right]}\right\}\\
	& \qquad \times \Gamma\left(\frac{2\alpha}{\omega^2}-1;\beta(\phi,\vartheta;v_T)\right),
\end{split}
\end{align}
where
\begin{align}
\begin{split}
\label{eqn-PutCall-LaplaceTransformSolution-Parameters}
\digamma = \digamma(\phi) & \equiv \sqrt{\Theta^2(\phi)-\omega^2\varepsilon(\phi)}\\
\beta(\phi,\vartheta;v_T) & \equiv \frac{2\digamma v_T e^{\digamma(T-t)}}{\omega^2 (e^{\digamma(T-t)}-1)}\times\frac{2\digamma}{(\omega^2\vartheta-\Theta+\digamma)(e^{\digamma(T-t)}-1)+2\digamma},
\end{split}
\end{align}
$\Gamma(u;\beta)$ is the (lower) incomplete gamma function $\Gamma(u;\beta) = \frac{1}{\Gamma(u)}\int_0^\beta e^{-x}x^{u-1}\dif x,$ and $\Gamma(u)$ is the gamma function $\Gamma(u) = \int_0^\infty e^{-x}x^{u-1}\dif x$.
\end{prop}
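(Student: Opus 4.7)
The plan is to solve \eqref{eqn-PutCall-LaplaceTransformedPDE} by the method of characteristics, viewing it as a first-order linear PDE in $(t, \vartheta)$ with $\phi$ as a parameter. Setting $\tau = T - t$, the characteristic projection onto the $(\tau, \vartheta)$-plane is governed by the Riccati ODE $d\vartheta/d\tau = \tfrac{1}{2}\omega^2\vartheta^2 - \Theta\vartheta + \tfrac{1}{2}\varepsilon$; its two roots $\vartheta_{\pm} = (\Theta \pm \digamma)/\omega^2$ motivate the definition of $\digamma = \sqrt{\Theta^2 - \omega^2\varepsilon}$ in \eqref{eqn-PutCall-LaplaceTransformSolution-Parameters}. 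Partial-fractions integration yields the closed-form relation $(\vartheta - \vartheta_+)/(\vartheta - \vartheta_-) = [(\vartheta_0 - \vartheta_+)/(\vartheta_0 - \vartheta_-)]\,e^{\digamma\tau}$ between the initial value $\vartheta_0$ at $\tau = 0$ and the current point $(\tau, \vartheta)$, which rearranges to the key identity $(\vartheta_0 - \vartheta_-)/(\vartheta - \vartheta_-) = 2\digamma e^{\digamma\tau}/[(\omega^2\vartheta - \Theta + \digamma)(e^{\digamma\tau} - 1) + 2\digamma]$. This identity is the algebraic backbone of the final formula.

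Along each characteristic, $\bar{H}$ satisfies a linear ODE in $\tau$ whose coefficient of $\bar{H}$ is affine in $\vartheta(\tau)$ and whose integrating factor therefore involves $\int_0^\tau \vartheta(s)\,ds$. Using the Riccati solution, this integral evaluates in closed form to $\vartheta_+\tau - (2/\omega^2)\ln|(\vartheta_0-\vartheta_-)/(\vartheta-\vartheta_-)|$. Integrating the ODE against the terminal condition \eqref{eqn-PutCall-LaplaceTransformedPDE-TerminalCondition} and re-expressing $\vartheta_0$ via the key identity should yield, as the homogeneous (terminal-data-driven) piece, the first four factors of \eqref{eqn-PutCall-LaplaceTransformSolution}: the linear-in-$\tau$ exponent $(\alpha - \omega^2)(\Theta - \digamma)/\omega^2 + \Theta - i\phi\Psi$ arises after absorbing the term $\digamma\tau(2 - 2\alpha/\omega^2)$ into $(\alpha - \omega^2)\vartheta_+ + \Theta - i\phi\Psi$; the power factor $[2\digamma/(\cdots)]^{2 - 2\alpha/\omega^2}$ emerges from the logarithmic part of the integrating factor combined with the key identity; and the splitting $\vartheta_0 = \vartheta_- + (\vartheta_0 - \vartheta_-)$ in $\exp(-\vartheta_0 v_T)$ produces both $\exp\{-(\Theta - \digamma)v_T/\omega^2\}$ and the final explicit $v_T$-exponential.

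The remaining piece is the particular integral $\mu(\tau)^{-1}\int_0^\tau \mu(s) f(T - s)\,ds$, in which $f$ is pinned down by the finiteness condition \eqref{eqn-PutCall-LaplaceTransformedPDE-FinitenessCondition}. As $\vartheta\to\infty$, $\vartheta_0$ tends to the finite limit $\vartheta_0^\infty(\tau) = (\vartheta_+ e^{\digamma\tau} - \vartheta_-)/(e^{\digamma\tau} - 1)$ while the homogeneous piece does not vanish, so requiring the particular integral to cancel its limit gives a functional equation that determines $f(t)$. Reinserting $f$ into the particular integral and changing the integration variable via $x = \beta(\phi, \vartheta; v_T)\cdot h(s)$ for a suitable monotone $h$ with $h$ ranging from $0$ to $1$ should convert the integrand into the canonical form $e^{-x}x^{2\alpha/\omega^2 - 2}$, so that the integral collapses to $\Gamma(2\alpha/\omega^2 - 1; \beta(\phi, \vartheta; v_T))$, completing the stated formula.

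The main obstacle will be the algebraic bookkeeping needed to verify the cancellations just invoked and to identify the resulting integral as an incomplete gamma function with the precise argument $\beta$ and index $2\alpha/\omega^2 - 1$ given in \eqref{eqn-PutCall-LaplaceTransformSolution-Parameters}; tracking the numerous occurrences of $(\omega^2\vartheta - \Theta + \digamma)$ and $e^{\digamma\tau}$ without error is delicate. Assumption \ref{asp-PutCall-VarianceParameterAssumptions} supplies $2\alpha/\omega^2 \geq 1$, which ensures that the lower incomplete gamma is well defined in the form used. The overall approach parallels that of \citet{Feller-1951} for the CIR equation, where the same structural interplay between Riccati solutions and incomplete gamma functions originally appeared, as well as its option-pricing adaptation by \citet{CheangChiarellaZiogas-2013}.
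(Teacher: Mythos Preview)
Your proposal is correct and follows essentially the same approach as the paper: the paper's proof also solves the inhomogeneous first-order PDE by the method of characteristics combined with variation of parameters, then uses the finiteness condition \eqref{eqn-PutCall-LaplaceTransformedPDE-FinitenessCondition} to pin down $f(t)$, and refers to \citet[Appendix 4]{ChiarellaZiogasZiveyi-2010} for the detailed computations (your citation of \citet{CheangChiarellaZiogas-2013} and \citet{Feller-1951} points to the same family of arguments). Your write-up is in fact more explicit than the paper's own proof about the Riccati structure of the characteristic ODE, the role of the roots $\vartheta_\pm=(\Theta\pm\digamma)/\omega^2$, and the change of variable that produces the incomplete gamma factor.
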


\begin{proof}
PDE \eqref{eqn-PutCall-LaplaceTransformedPDE} is an inhomogeneous first-order equation which can be solved via the method of characteristics and the method of variation of parameters. This procedure yields a solution $\bar{H}$ in terms of the unknown function $f(t)$. Condition \eqref{eqn-PutCall-LaplaceTransformedPDE-FinitenessCondition} is then applied to determine $f(t)$ and characterize $\bar{H}$ completely in terms of the parameters introduced in Propositions \ref{prop-PutCall-FourierTransformedPDE} and \ref{prop-PutCall-LaplaceTransformedPDE}.

Due to close resemblances in the form of the PDE, the derivation of equation \eqref{eqn-PutCall-LaplaceTransformSolution} follows the proof presented in \citet[Appendix 4]{ChiarellaZiogasZiveyi-2010}.\footnote{Note however that \citet{ChiarellaZiogasZiveyi-2010} uses time-to-maturity $\tau\equiv T-t$ instead of calendar time $t$, as what was done in our analysis and by \citet{CheangChiarellaZiogas-2013}.} For easy comparison between our analysis and the proofs presented in \citet{ChiarellaZiogasZiveyi-2010} and \citet{CheangChiarellaZiogas-2013}, we show in Table \ref{tab-ParameterComparison} the equivalence of notations used in the PDE. 
\end{proof}

\begin{table}
\begin{center}
\begin{tabular}{@{}ccc@{}}
\toprule
\textbf{Equation \eqref{eqn-PutCall-LaplaceTransformedPDE}} & \textbf{[CZZ-2010]} & \textbf{[CCZ-2013]} \\ \midrule
$\omega^2$ & $\sigma^2$ & $\sigma^2$ \\
$\varepsilon$ & $\Lambda$ & $\Lambda$ \\
$\vartheta$ & $s$ & $s$ \\
$-i\phi\Psi$ & $-i\phi(r-q)$ & $-i\phi\Psi$ \\
$T-t$ & $\tau$ & $T-t$ \\ 
\bottomrule
\end{tabular}
\end{center}
\caption{Comparison of notation and coefficients of equation \eqref{eqn-PutCall-LaplaceTransformedPDE}, \citet[equation 29]{ChiarellaZiogasZiveyi-2010} [CZZ-2010] and \citet[equation 46]{CheangChiarellaZiogas-2013} [CCZ-2013].}
\label{tab-ParameterComparison}
\end{table}

At this point, we now recover the original transition density function $Q(t,\tilde{s},v)$ by inverting the Laplace and Fourier transforms on $\bar{H}(t,\phi,\vartheta)$ given in equation \eqref{eqn-PutCall-LaplaceTransformSolution}. To this end, we first solve for the inverse Laplace transform of $\bar{H}$.

\begin{prop}
\label{prop-PutCall-InverseLaplaceTransform}
The inverse Laplace transform of $\bar{H}(t,\phi,\vartheta)$ in equation \eqref{eqn-PutCall-LaplaceTransformSolution} is
\begin{align}
\begin{split}
\label{eqn-PutCall-InverseLaplaceTransform}
\hat{H}(t,\phi,v)
	& = \exp\left\{\frac{(\Theta-\digamma)}{\omega^2}(v-v_T+\alpha(T-t))\right\} \exp\left\{i\phi x_T - i\phi\Psi(T-t)\right\}\\
	& \qquad \times \frac{2\digamma e^{\digamma(T-t)}}{\omega^2(e^{\digamma(T-t)}-1)}\left[\frac{v_T e^{\digamma(T-t)}}{v}\right]^{\frac{\alpha}{\omega^2}-\frac{1}{2}} \exp\left\{-\frac{2\digamma (v_T e^{\digamma(T-t)}+v)}{\omega^2(e^{\digamma(T-t)}-1)}\right\}\\
	& \qquad \times I_{\frac{2\alpha}{\omega^2}-1}\left(\frac{4\digamma\sqrt{v_T v e^{\digamma(T-t)}}}{\omega^2(e^{\digamma(T-t)}-1)}\right),
\end{split}
\end{align}
where $I_k(u)$ is the modified Bessel function of the first kind
\begin{equation}
I_k(u) = \sum_{n=0}^\infty \frac{(u/2)^{2n+k}}{n!\Gamma(n+k+1)}.
\end{equation}
\end{prop}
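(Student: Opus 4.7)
The plan is to invert the Laplace transform in equation \eqref{eqn-PutCall-LaplaceTransformSolution} by closely following the approach developed by \citet{ChiarellaZiogasZiveyi-2010} and \citet{CheangChiarellaZiogas-2013}, which adapts \citet{Feller-1951}'s original inversion technique for square-root process transition densities. The target expression \eqref{eqn-PutCall-InverseLaplaceTransform} is structurally a non-central chi-squared-type density in $v$ (with non-centrality parameter depending on $v_T$), scaled by exponential factors, which is the expected form given that the variance process \eqref{eqn-PutCall-VarianceSDE-Q} is of Cox-Ingersoll-Ross / Heston type.

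First I would separate out the $\vartheta$-independent prefactors of $\bar{H}(t,\phi,\vartheta)$, namely the exponential in $(T-t)$ involving $\alpha, \Theta, \digamma, \Psi$, and the term $\exp\{i\phi x_T-(\Theta-\digamma)v_T/\omega^2\}$, and pull these outside the inverse Laplace operator. Next I would introduce the shifted and scaled transform variable $u = \omega^2\vartheta-\Theta+\digamma$, which recasts the $\vartheta$-dependent portion into a form involving $u$, $\digamma$, and $(e^{\digamma(T-t)}-1)$ alone. By the first shifting theorem, the shift produces an exponential prefactor $\exp\{(\Theta-\digamma)v/\omega^2\}$ upon inversion, while the scaling contributes a Jacobian factor $1/\omega^2$; combining with the extracted prefactors reproduces the $\exp\{(\Theta-\digamma)(v-v_T+\alpha(T-t))/\omega^2\}$ term in the stated result.

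Third, I would express the incomplete gamma function $\Gamma\!\left(\tfrac{2\alpha}{\omega^2}-1;\beta(\phi,\vartheta;v_T)\right)$ via its integral representation, which enables an interchange of the Laplace inversion and the gamma-integration (justified by a Fubini-type argument under the Feller condition $2\xi\eta \geq \omega^2$ of Assumption \ref{asp-PutCall-VarianceParameterAssumptions}, which ensures $2\alpha/\omega^2-1>0$). After splitting $Cu/(u+B) = C - CB/(u+B)$ in the remaining exponent, the core inverse transform reduces to the standard Laplace pair
\begin{equation*}
\scrL^{-1}_{u\to v}\!\left\{(u+B)^{-\mu}\exp\!\left(-\frac{a}{u+B}\right)\right\}(v) = e^{-Bv}\left(\frac{v}{a}\right)^{(\mu-1)/2}I_{\mu-1}\!\left(2\sqrt{av}\right),
\end{equation*}
with $\mu=2\alpha/\omega^2-1$, from which the modified Bessel function $I_{2\alpha/\omega^2-1}$ arises. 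Identifying $B = 2\digamma/[\omega^2(e^{\digamma(T-t)}-1)]$ and $a = 2\digamma v_T e^{\digamma(T-t)}/[\omega^2(e^{\digamma(T-t)}-1)]$ and performing the resulting inner integration over the gamma-integration variable collapses to a single Bessel-function expression whose argument matches $4\digamma\sqrt{v_T v e^{\digamma(T-t)}}/[\omega^2(e^{\digamma(T-t)}-1)]$.

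The main obstacle will be the careful bookkeeping around the incomplete gamma function: verifying that the exchange of the gamma-integration with the Bromwich contour integral defining the inverse Laplace transform is legitimate, and then consolidating the resulting cascade of exponentials, powers of $v$ and $v_T$, and the shift-induced prefactors into the compact closed form \eqref{eqn-PutCall-InverseLaplaceTransform}. Since the algebraic structure closely mirrors that of \citet[Appendix 4]{ChiarellaZiogasZiveyi-2010} (with the key substitutions enumerated in Table \ref{tab-ParameterComparison}), the computation itself, while lengthy, is routine once the substitution $u = \omega^2\vartheta-\Theta+\digamma$ and the gamma-integral representation have been set up correctly.
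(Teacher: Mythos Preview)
Your proposal is correct and follows essentially the same route as the paper: the paper's own proof simply refers the reader to \citet[Appendix 5]{ChiarellaZiogasZiveyi-2010} with the notational substitutions in Table~\ref{tab-ParameterComparison}, and your outline is a faithful description of the Feller-type inversion carried out there. One small slip: you cite Appendix~4 of \citet{ChiarellaZiogasZiveyi-2010}, but that appendix handles the forward Laplace solution (Proposition~\ref{prop-PutCall-LaplaceTransformSolution}); the inversion you need is in Appendix~5.
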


\begin{proof}
Refer to \citet[Appendix 5]{ChiarellaZiogasZiveyi-2010}, keeping in mind the notational equivalence established in Table \ref{tab-ParameterComparison}.
\end{proof}

Having determined $\hat{H}(t,\phi,v)$, we now invert the Fourier transform to recover $H(t,x,v)$ using the inversion formula
\begin{equation}
\label{eqn-PutCall-InverseFourierTransform}
H(t,x,v) = \scrF_x^{-1}\left\{\hat{H}(t,\phi,v)\right\}(x) = \frac{1}{2\pi}\int_{-\infty}^\infty e^{-i\phi x}\hat{H}(t,\phi,v)\dif\phi
\end{equation}
that corresponds to the Fourier transform defined by equation \eqref{eqn-PutCall-FourierTransform}. The original transition density function $Q(t,\tilde{s},v)$ is then obtained by reversing the substitution $x=\ln\tilde{s}$ made in equation \eqref{eqn-PutCall-ChangeofVariable}. The result is presented in the proposition below.

\begin{prop}
\label{prop-PutCall-TransitionDensityFunction2}
The transition density function $Q(t,\tilde{s},v) \equiv Q(T,\tilde{s}_T,v_T;t,\tilde{s},v)$ is given by
\begin{align}
\begin{split}
\label{eqn-PutCall-TransitionDensityFunction2}
Q(t,\tilde{s},v)
	& = \sum_{m=0}^\infty\sum_{n=0}^\infty\frac{(\tilde{\lambda}_1(T-t))^m(\tilde{\lambda}_2(T-t))^n e^{-(\tilde{\lambda}_1+\tilde{\lambda}_2)(T-t)}}{m!n!}\\
	& \qquad \times \E_{\hat{\Q}}^{(m,n)}\Bigg[\frac{1}{2\pi}\int_{-\infty}^\infty \exp\left\{-i\phi\ln\left(\frac{\tilde{s}}{\tilde{s}_T}e^{-(\tilde{\lambda}_1\tilde{\kappa}_1+\tilde{\lambda}_2\tilde{\kappa}_2^-)(T-t)}e^{\Upsilon_{1,m}-\Upsilon_{2,n}}\right)\right\}\\
	& \hspace{100pt} \times h(T-t,\phi,v;v_T)\dif\phi\Bigg],
\end{split}
\end{align}
where
\begin{align}
\begin{split}
\label{eqn-PutCall-TransitionDensityFunction2-h}
h(\tau,\phi,v;v_T)
	& = \exp\left\{\frac{(\Theta-\digamma)}{\omega^2}(v-v_T+\alpha\tau)\right\} \frac{2\digamma e^{\digamma\tau}}{\omega^2(e^{\digamma\tau}-1)}\left[\frac{v_T e^{\digamma\tau}}{v}\right]^{\frac{\alpha}{\omega^2}-\frac{1}{2}}\\
	& \qquad \times \exp\left\{-\frac{2\digamma (v_T e^{\digamma\tau}+v)}{\omega^2(e^{\digamma\tau}-1)}\right\} \times I_{\frac{2\alpha}{\omega^2}-1}\left(\frac{4\digamma\sqrt{v_T v e^{\digamma\tau}}}{\omega^2(e^{\digamma\tau}-1)}\right).
\end{split}
\end{align}
Here, $\Upsilon_{1,m}$ and $\Upsilon_{1,n}$ are given by $$\Upsilon_{1,m} = \sum_{k=1}^m Y_{1,k} \qquad \text{and} \qquad \Upsilon_{2,n} = \sum_{l=1}^n Y_{2,l},$$ where $\{Y_{1,1},\dots,Y_{1,m}\}$ and $\{Y_{2,1},\dots,Y_{2,n}\}$ are collections of i.i.d. random variables sampled from populations with $\hat{Q}$-density functions $G_1(y)$ and $G_2(y)$, respectively, of $Y_1$ and $Y_2$, and $\E_{\hat{\Q}}^{(m,n)}[\cdot]$ is the expectation operator with respect to $\Upsilon_{1,m}$ and $\Upsilon_{2,n}$ only.
\end{prop}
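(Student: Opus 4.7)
The plan is to invert the Fourier transform (equation \eqref{eqn-PutCall-InverseFourierTransform}) applied to $\hat{H}(t,\phi,v)$ from Proposition \ref{prop-PutCall-InverseLaplaceTransform}, and then reverse the substitution $x=\ln\tilde{s}$ made in \eqref{eqn-PutCall-ChangeofVariable}. The main obstacle is that the $\phi$-dependence of $\hat{H}$ is split between a ``diffusion-type'' part (through $\Theta(\phi)$, $\digamma(\phi)$, and $\varepsilon(\phi)$) and a ``jump-type'' part hidden inside $\Psi(\phi)$ via the characteristic functions $\varphi_1(\phi)$ and $\varphi_2(-\phi)$ of the jump sizes. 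A direct closed-form Fourier inversion is not available, so the jump pieces must first be expanded and the resulting Fourier integrals expressed as mixed expectations.

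First, I would isolate the jump contribution by rewriting $-i\phi\Psi(T-t)$ using the definition of $\Psi$ in \eqref{eqn-PutCall-FourierTransformedPDE-Parameters}:
\begin{equation*}
-i\phi\Psi(T-t) = i\phi(\tilde{\lambda}_1\tilde{\kappa}_1+\tilde{\lambda}_2\tilde{\kappa}_2^-)(T-t) + \tilde{\lambda}_1(T-t)\bigl(\varphi_1(\phi)-1\bigr) + \tilde{\lambda}_2(T-t)\bigl(\varphi_2(-\phi)-1\bigr).
\end{equation*}
Exponentiating and expanding each $\exp\{\tilde{\lambda}_j(T-t)\varphi_j(\pm\phi)\}$ as a power series, and using the i.i.d.\ structure of $\{Y_{j,k}\}$ to recognize $\varphi_1(\phi)^m$ and $\varphi_2(-\phi)^n$ as the $\hat{\Q}$-characteristic functions (at $-\phi$ and $\phi$, respectively) of the partial sums $\Upsilon_{1,m}$ and $\Upsilon_{2,n}$, I obtain the Poisson-weighted double series
\begin{align*}
\exp\{-i\phi\Psi(T-t)\}
   & = e^{i\phi(\tilde{\lambda}_1\tilde{\kappa}_1+\tilde{\lambda}_2\tilde{\kappa}_2^-)(T-t)} \\
   & \quad \times \sum_{m,n=0}^\infty \frac{(\tilde{\lambda}_1(T-t))^m(\tilde{\lambda}_2(T-t))^n e^{-(\tilde{\lambda}_1+\tilde{\lambda}_2)(T-t)}}{m!\,n!}\,\E_{\hat{\Q}}^{(m,n)}\!\left[e^{i\phi(\Upsilon_{2,n}-\Upsilon_{1,m})}\right].
\end{align*}

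Next, I would collect every remaining $\phi$-dependence of $\hat{H}(t,\phi,v)$ coming through $\Theta(\phi)$, $\digamma(\phi)$, and $\varepsilon(\phi)$ into the kernel $h(T-t,\phi,v;v_T)$ defined in \eqref{eqn-PutCall-TransitionDensityFunction2-h}, so that
\begin{equation*}
\hat{H}(t,\phi,v) = e^{i\phi x_T}\,\exp\{-i\phi\Psi(T-t)\}\,h(T-t,\phi,v;v_T).
\end{equation*}
Substituting the series expansion into the inversion formula \eqref{eqn-PutCall-InverseFourierTransform} and interchanging summation, the expectation $\E_{\hat{\Q}}^{(m,n)}$, and the Fourier integral, the factor $e^{-i\phi x}$ combines with $e^{i\phi x_T}$, the drift shift $e^{i\phi(\tilde{\lambda}_1\tilde{\kappa}_1+\tilde{\lambda}_2\tilde{\kappa}_2^-)(T-t)}$, and the jump shift $e^{i\phi(\Upsilon_{2,n}-\Upsilon_{1,m})}$ into a single complex exponential in $\phi$. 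Reversing the substitution $x=\ln\tilde{s}$ and $x_T=\ln\tilde{s}_T$ consolidates all four shifts into the logarithm of the ratio displayed in \eqref{eqn-PutCall-TransitionDensityFunction2}, giving the claimed representation.

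The main technical obstacle is justifying the interchange of the double series with the outer expectation and with the inverse Fourier integral. Modulo this, everything else is algebraic bookkeeping of complex exponentials together with a trivial change of variables. Absolute convergence sufficient to invoke Fubini's theorem follows from two ingredients: the factorial decay of the Poisson weights $(\tilde{\lambda}_j(T-t))^k/k!$, and the decay of $h(T-t,\phi,v;v_T)$ in $\phi$ inherited from the $\digamma(\phi)\sim|\phi|$ behavior for large $|\phi|$, which produces exponential decay analogous to that exploited in \citet{ChiarellaZiogasZiveyi-2010} and \citet{CheangChiarellaZiogas-2013}. Once this is in place the proposition follows directly.
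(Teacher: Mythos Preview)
Your proposal is correct and follows essentially the same route as the paper's proof: expand $-i\phi\Psi(T-t)$ using the definition of $\Psi$, write $e^{\tilde{\lambda}_j(T-t)\varphi_j(\pm\phi)}$ as a power series, identify $[\varphi_1(\phi)]^m$ and $[\varphi_2(-\phi)]^n$ as expectations of $e^{-i\phi\Upsilon_{1,m}}$ and $e^{i\phi\Upsilon_{2,n}}$, collect the remaining $\phi$-dependence into $h$, interchange, and undo $x=\ln\tilde{s}$. The only addition is your explicit remark on the Fubini justification, which the paper treats as routine and does not spell out.
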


\begin{proof}
See Appendix \ref{app-prop-PutCall-TransitionDensityFunction2-Proof}.
\end{proof}

\section{The European Exchange Option Price}
\label{sec-PutCall-EuExcOpPrice}

Given the transition density function $Q(T,\tilde{s}_T,v_T;t,\tilde{s},v)$, we can now compute the price of any European-style option on the two assets which matures at time $T$ and whose payoff, when discounted by the second asset yield process, can be written as a function $\tilde{F}(T,\tilde{s}_T,v_T)$ of the terminal asset yield ratio $\tilde{S}(T) = \tilde{s}_T$ and the terminal instantaneous variance $v(T) = v_T$. Following equation \eqref{eqn-PutCall-DiscountedExcOpPrice} and the results stated in Proposition \ref{prop-PutCall-DiscountedEuExcOpPrice}, the no-arbitrage price at time $t\in[0,T)$ of such a claim is given by
\begin{align*}
P(t,\tilde{s},v) 
	& = S_2(t) e^{q_2 t} \int_0^\infty \int_0^\infty \tilde{F}(T,\tilde{s}_T,v_T)Q(T,\tilde{s}_T,v_T;t,\tilde{s},v)\dif v_T \dif x_T.
\end{align*}
In this section, we calculate the price of the European exchange option following the valuation formula stated above.

For our calculations, it is more convenient to use the log-price variable $x_T = \ln\tilde{s}_T$ and the corresponding transition density function $H(T,x_T,v_T;t,\tilde{s}_T,v_T)$. This implies that the option price $\tilde{V}(t,\tilde{s},v)$, as given by equation \eqref{eqn-PutCall-DiscountedExcOpPrice2}, can be written as\footnote{From this point forward, we make the simplifying notation $\tilde{S}(t) = \tilde{s}$ and $v(t) = v$ to be consistent with our notation for the transition density function.}
\begin{equation}
\label{eqn-PutCall-EuExcOpPrice}
\tilde{V}(t,\tilde{s},v) = e^{-q_1 T}\int_{-\infty}^\infty \int_0^\infty \left(e^{x_T}-e^{(q_1-q_2)T}\right)^+ H(T,x_T,v_T;t,\ln\tilde{s},v)\dif v_T \dif x_T.
\end{equation}
Since the payoff is independent of the terminal variance level $v_T$, we may evaluate the integral with respect to $v_T$ by evaluating $\int_0^\infty H(T,x_T,v_T;t,x,v)\dif v_T$. We present this integral in the following lemma.

\begin{lem}
\label{lem-PutCall-HTDFIntegral}
The integral $\int_0^\infty H(T,x_T,v_T;t,\ln\tilde{s},v)\dif v_T$ is given by
\begin{align}
\begin{split}
\label{eqn-PutCall-HTDFIntegral}
& \int_0^\infty H(T,x_T,v_T;t,\ln\tilde{s},v)\dif v_T\\
& \qquad = \sum_{m=0}^\infty \sum_{n=0}^\infty \frac{(\tilde{\lambda}_1(T-t))^m (\tilde{\lambda}_2(T-t))^n e^{-(\tilde{\lambda}_1+\tilde{\lambda}_2)(T-t)}}{m!n!}\\
& \qquad \qquad \times \E_{\hat{\Q}}^{(m,n)}\left[\frac{1}{2\pi}\int_{-\infty}^\infty e^{i\phi x_T}f\left(T-t,\tilde{s}e^{-(\tilde{\lambda}_1\tilde{\kappa}_1+\tilde{\lambda}_2\tilde{\kappa}_2^-)(T-t)}e^{\Upsilon_{1,m}-\Upsilon_{2,n}}, v;-\phi\right)\dif\phi\right],
\end{split}
\end{align}
where
\begin{align}
\begin{split}
\label{eqn-PutCall-HTDFIntegral-Parameters}
f(\tau,z,v;\phi) & = \exp\left\{i\phi\ln z+B(\tau,-\phi)+D(\tau,-\phi)v\right\}\\
B(\tau,\phi) & = \frac{\alpha}{\omega^2}\left\{(\Theta+\digamma)\tau-2\ln\left(\frac{1-\chi e^{\digamma\tau}}{1-\chi}\right)\right\}\\
D(\tau,\phi) & = \frac{\Theta+\digamma}{\omega^2}\left(\frac{1-e^{\digamma\tau}}{1-\chi e^{\digamma\tau}}\right),
\end{split}
\end{align}
with $\chi = (\Theta+\digamma)/(\Theta-\digamma)$, $\alpha$ and $\Theta$ as defined in Proposition \ref{prop-PutCall-FourierTransformedPDE}, and $\digamma$ as defined in Proposition \ref{prop-PutCall-LaplaceTransformSolution}.  
\end{lem}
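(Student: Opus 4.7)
The plan is to reduce the lemma to a single univariate integral identity and then verify that identity by recognizing $h$ as a $\phi$-tilted Feller kernel in $v_T$. Starting from the explicit representation of $Q$ in Proposition~\ref{prop-PutCall-TransitionDensityFunction2} and the paper's convention $H(T,x_T,v_T;t,x,v)=Q(T,e^{x_T},v_T;t,e^x,v)$, I would first split the exponent $-i\phi\ln\bigl(\tfrac{\tilde s}{\tilde s_T}e^{-(\tilde\lambda_1\tilde\kappa_1+\tilde\lambda_2\tilde\kappa_2^-)\tau}e^{\Upsilon_{1,m}-\Upsilon_{2,n}}\bigr)$, with $\tau=T-t$ and $x=\ln\tilde s$, as $i\phi x_T - i\phi\ln z_{m,n}$, where $z_{m,n}=\tilde s\,e^{-(\tilde\lambda_1\tilde\kappa_1+\tilde\lambda_2\tilde\kappa_2^-)\tau}e^{\Upsilon_{1,m}-\Upsilon_{2,n}}$. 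The lemma then reduces to establishing
\begin{equation*}
I(\tau,\phi,v):=\int_0^\infty h(\tau,\phi,v;v_T)\,\dif v_T=\exp\bigl\{B(\tau,\phi)+D(\tau,\phi)v\bigr\},
\end{equation*}
since with this in hand the remaining $\phi$-integrand collapses to $e^{i\phi x_T}f(\tau,z_{m,n},v;-\phi)$ by the definitions in \eqref{eqn-PutCall-HTDFIntegral-Parameters}. The exchange of $\int_0^\infty\dif v_T$ with $\tfrac{1}{2\pi}\int_{-\infty}^\infty\dif\phi$ is justified by Fubini, using the exponential decay of $h$ in $v_T$ dictated by the CIR-type factor.

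To compute $I$, I would view the $v_T$-dependence of $h$ as a $\phi$-tilted non-central $\chi^2$ density. With $\nu=2\alpha/\omega^2-1$ and $E=e^{\digamma\tau}$, factor
\begin{equation*}
h(\tau,\phi,v;v_T)=h_1(\tau,\phi,v)\,v_T^{\nu/2}\,e^{-p v_T}\,I_\nu\!\bigl(2\sqrt{A v_T}\bigr),
\end{equation*}
where $p=\tfrac{\Theta-\digamma}{\omega^2}+\tfrac{2\digamma E}{\omega^2(E-1)}$, $A=\tfrac{4\digamma^2 vE}{\omega^4(E-1)^2}$, and $h_1$ gathers the $v_T$-free prefactors. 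Then invoke the classical Bessel--Laplace identity
\begin{equation*}
\int_0^\infty e^{-pt}\,t^{\nu/2}\,I_\nu\!\bigl(2\sqrt{At}\bigr)\,\dif t=\frac{A^{\nu/2}}{p^{\nu+1}}\,e^{A/p},
\end{equation*}
proved termwise from the series for $I_\nu$ when $A,p>0$ and extended to the complex $\phi$-dependent regime by analytic continuation, under Assumption~\ref{asp-PutCall-VarianceParameterAssumptions} and the standard branch choice for $\digamma$.

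The final step is the algebraic match of $h_1\cdot A^{\nu/2}p^{-\nu-1}e^{A/p}$ with $\exp\{B(\tau,\phi)+D(\tau,\phi)v\}$. Using $\chi=(\Theta+\digamma)/(\Theta-\digamma)$, $2\digamma=(\chi-1)(\Theta-\digamma)$ and $\chi(\Theta-\digamma)=\Theta+\digamma$, one rewrites $p=(\Theta-\digamma)(\chi E-1)/[\omega^2(E-1)]$, after which the product $(A^{\nu/2}/p^{\nu+1})\cdot\tfrac{2\digamma E}{\omega^2(E-1)}\cdot(E/v)^{\nu/2}$ telescopes to $[(1-\chi)E/(1-\chi E)]^{2\alpha/\omega^2}$; combining with $e^{(\Theta-\digamma)\alpha\tau/\omega^2}$ and the piece $E^{2\alpha/\omega^2}=e^{2\alpha\digamma\tau/\omega^2}$ pulled out of the numerator, plus the identity $(\Theta-\digamma)+2\digamma=\Theta+\digamma$, reproduces $e^{B(\tau,\phi)}$. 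For the $v$-exponent $\tfrac{\Theta-\digamma}{\omega^2}-\tfrac{2\digamma}{\omega^2(E-1)}+\tfrac{A}{pv}$, placing the three terms over the common denominator $\omega^2(E-1)(\Theta-\digamma)(\chi E-1)$ and writing $a=\Theta-\digamma$, $b=\Theta+\digamma$, the numerator expands to $ab(E-1)^2$, which simplifies to $(\Theta+\digamma)(1-E)/[\omega^2(1-\chi E)]=D(\tau,\phi)$. The main obstacle is this last cancellation: the intermediate quantities are complex-valued in $\phi$, the signs of $E-1$ versus $1-E$ and $\chi E-1$ versus $1-\chi E$ can flip depending on how they are written, and consistent bookkeeping (most economically in the $(a,b)$-variables above) is essential; a secondary technical subtlety is justifying Fubini and the analytic continuation of the Bessel--Laplace identity throughout.
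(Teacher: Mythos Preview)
Your proposal is correct and takes essentially the same approach as the paper: both reduce the lemma to the evaluation of $\int_0^\infty h(T-t,\phi,v;v_T)\,\dif v_T$ and identify the resulting expression with $\exp\{B(\tau,\phi)+D(\tau,\phi)v\}$. The paper's proof simply defers the details of that integral to \citet[Appendix 6]{ChiarellaZiogasZiveyi-2010}, whereas you carry them out explicitly via the Bessel--Laplace identity and the $(a,b)$-bookkeeping, but the underlying method is the same.
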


\begin{proof}
Most of the proof deals with the evaluation of $\int_0^\infty h(T-t,\phi,v;v_T)\dif v_T$, where $h$ is given by equation \eqref{eqn-PutCall-TransitionDensityFunction2-h}. The steps in this integration closely follow those discussed in \citet[Appendix 6]{ChiarellaZiogasZiveyi-2010}.
\end{proof}

We are now ready to solve for the European exchange option price. Our results are shown in the next proposition.

\begin{prop}
\label{prop-PutCall-DiscEuExcOpPrice}
The discounted price $\tilde{V}(t,\tilde{s},v)$ of the European exchange option at time $t$ is given by
\begin{align}
\begin{split}
\label{eqn-PutCall-DiscEuExcOpPrice}
& \tilde{V}(t,\tilde{s},v) \\
	& \qquad = \sum_{m=0}^\infty \sum_{n=0}^\infty \frac{(\tilde{\lambda}_1(T-t))^m (\tilde{\lambda}_2(T-t))^n e^{-(\tilde{\lambda}_1+\tilde{\lambda}_2)(T-t)}}{m! n!}\\
	& \qquad \qquad \times \E_{\hat{\Q}}^{(m,n)}\Bigg[e^{-q_1 T} \tilde{s}e^{-(\tilde{\lambda}_1\tilde{\kappa}_1+\tilde{\lambda}_2\tilde{\kappa}_2^-)(T-t)}e^{\Upsilon_{1,m}-\Upsilon_{2,n}}\\
	& \hspace{75pt} \times P_1^E\left(T-t,\tilde{s}e^{-(\tilde{\lambda}_1\tilde{\kappa}_1+\tilde{\lambda}_2\tilde{\kappa}_2^-)(T-t)}e^{\Upsilon_{1,m}-\Upsilon_{2,n}},v;(q_1-q_2)T\right)\\
	& \hspace{75pt} - e^{-q_2 T} P_2^E\left(T-t,\tilde{s}e^{-(\tilde{\lambda}_1\tilde{\kappa}_1+\tilde{\lambda}_2\tilde{\kappa}_2^-)(T-t)}e^{\Upsilon_{1,m}-\Upsilon_{2,n}},v;(q_1-q_2)T\right)\Bigg]
\end{split}
\end{align}
where $P_1^E$ and $P_2^E$ are defined as
\begin{align}
\begin{split}
\label{eqn-PutCall-DiscEuExcOpPrice-Parameters}
P_1^E(\tau,z,v;K)	& = \frac{1}{2}+\frac{1}{2\pi}\int_0^\infty\frac{f_1(\tau,z,v;\phi)e^{-i\phi K}-f_1(\tau,z,v;-\phi)e^{i\phi K}}{i\phi}\dif\phi\\
P_2^E(\tau,z,v;K)	& = \frac{1}{2}+\frac{1}{2\pi}\int_0^\infty\frac{f(\tau,z,v;\phi)e^{-i\phi K}-f(\tau,z,v;-\phi)e^{i\phi K}}{i\phi}\dif\phi,
\end{split}
\end{align}
with $f(\tau,z,v;\phi)$ given in equation \eqref{eqn-PutCall-HTDFIntegral-Parameters}, $f_1(\tau,z,v;\phi)$ given by
\begin{align}
\begin{split}
\label{eqn-PutCall-DiscEuExcOpPrice-Parameters2}
f_1(\tau,z,v;\phi) & = \exp\left\{i\phi\ln z + B_1(\tau,-\phi)+D_1(\tau,-\phi)\right\}\\
B_1(\tau,\phi) & = \frac{\alpha}{\omega^2}\Bigg\{(\Theta_1+\digamma_1)\tau-2\ln\left[\frac{1-\chi_1 e^{\digamma_1\tau}}{1-\chi_1}\right]\Bigg\}\\
D_1(\tau,\phi) & = \frac{\Theta_1+\digamma_1}{\omega^2}\left[\frac{1-e^{\digamma_1\tau}}{1-\chi_1 e^{\digamma_1\tau}}\right],
\end{split}
\end{align}
$\Theta_1(\phi) \equiv \Theta(\phi-i)$, $\digamma_1(\phi) \equiv \digamma(\phi-i)$, and $\chi_1(\phi) = \chi(\phi-i)$.
\end{prop}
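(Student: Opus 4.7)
The plan is to substitute the marginal density $\tilde{H}(x_T) = \int_0^\infty H(T,x_T,v_T;t,\ln\tilde{s},v)\,dv_T$ from Lemma \ref{lem-PutCall-HTDFIntegral} into the pricing formula \eqref{eqn-PutCall-EuExcOpPrice}, split the call-style payoff into a ``stock'' piece and a ``strike'' piece, and recover the factors $P_1^E$ and $P_2^E$ appearing in \eqref{eqn-PutCall-DiscEuExcOpPrice} via a Gil--Pelaez Fourier inversion (for the strike piece) and an Esscher-type exponential tilt of the conditional characteristic function (for the stock piece). Writing $K^* \equiv (q_1-q_2)T$ and $(e^{x_T}-e^{K^*})^+ = (e^{x_T}-e^{K^*})\mathbf{1}_{\{x_T > K^*\}}$, the option price splits as
$$\tilde{V}(t,\tilde{s},v) = e^{-q_1 T}\int_{K^*}^{\infty} e^{x_T}\tilde{H}(x_T)\,dx_T - e^{-q_2 T}\int_{K^*}^\infty \tilde{H}(x_T)\,dx_T.$$
Inserting the Poisson-weighted Fourier representation for $\tilde{H}$ from Lemma \ref{lem-PutCall-HTDFIntegral} and interchanging the $x_T$-integration with the outer Poisson sum, the jump-size expectation, and the $\phi$-integration (justified by the exponential decay of $f(\tau,z,v;\phi)$ in $|\phi|$ coming from the real part of $B(\tau,\cdot)$) reduces, for each $(m,n)$-term, to the evaluation of two inner $x_T$-integrals against the conditional characteristic function $\phi\mapsto f(T-t,z_{m,n},v;\phi)$, where $z_{m,n} = \tilde{s}\exp\{-(\tilde{\lambda}_1\tilde{\kappa}_1+\tilde{\lambda}_2\tilde{\kappa}_2^-)(T-t)+\Upsilon_{1,m}-\Upsilon_{2,n}\}$.

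For the strike integral, the function $\tilde{H}^{(m,n)}(x_T) := \tfrac{1}{2\pi}\int_{-\infty}^\infty e^{i\phi x_T}f(T-t,z_{m,n},v;-\phi)\,d\phi$ is a conditional probability density whose characteristic function is precisely $\phi\mapsto f(T-t,z_{m,n},v;\phi)$, so the Gil--Pelaez tail-probability formula delivers
$$\int_{K^*}^\infty \tilde{H}^{(m,n)}(x_T)\,dx_T = P_2^E(T-t,z_{m,n},v;K^*).$$
Because $\sum_{m,n}\pi_{m,n}(T-t)=1$, where $\pi_{m,n}(\tau)$ denotes the joint-Poisson weight, the constant $\tfrac{1}{2}$ can be absorbed inside the Poisson sum and jump expectation, producing, after multiplication by $-e^{-q_2 T}$, the second term of \eqref{eqn-PutCall-DiscEuExcOpPrice}.

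For the stock integral, the extra factor $e^{x_T}$ is absorbed by introducing the Esscher-tilted density $\tilde{H}^{(m,n)}_1(x_T) := e^{x_T}\tilde{H}^{(m,n)}(x_T)/f(T-t,z_{m,n},v;-i)$, whose characteristic function equals $f(T-t,z_{m,n},v;\phi-i)/f(T-t,z_{m,n},v;-i)$. A direct substitution using $i(\phi-i)\ln z_{m,n} = i\phi\ln z_{m,n}+\ln z_{m,n}$ together with the parameter shifts $\Theta\mapsto\Theta(\phi-i)$, $\digamma\mapsto\digamma(\phi-i)$, $\chi\mapsto\chi(\phi-i)$ identifies this shifted CF with $f_1(T-t,z_{m,n},v;\phi)$ from \eqref{eqn-PutCall-DiscEuExcOpPrice-Parameters2}. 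The normalising constant $f(T-t,z_{m,n},v;-i)$ evaluates to $z_{m,n}$: this is the conditional expectation $\E_{\hat{\Q}}[\tilde{S}(T)\mid\tilde{S}(t)=\tilde{s},v(t)=v,\Upsilon_{1,m},\Upsilon_{2,n}]=z_{m,n}$, reflecting the $\hat{\Q}$-martingality of $\tilde{S}$, and can be checked directly by verifying $B(\tau,i)=D(\tau,i)=0$ under the consistent branch choice $\digamma(i)=-\Theta(i)$ (noting that $\varepsilon(i)=0$ forces $\chi(i)=0$). A second application of Gil--Pelaez, now with $f_1$ in place of $f$, yields $P_1^E(T-t,z_{m,n},v;K^*)$, so the stock integral contributes $z_{m,n}P_1^E$ inside the $(m,n)$-expectation; together with the $e^{-q_1 T}$ prefactor this gives the first term of \eqref{eqn-PutCall-DiscEuExcOpPrice}.

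The principal analytic obstacle is the justification of the Esscher shift $\phi\mapsto\phi-i$ in the stock term: $f(T-t,z_{m,n},v;\phi)$ must extend to an analytic function on the horizontal strip $\{\phi : \mathrm{Im}\,\phi\in[-1,0]\}$ and decay uniformly in $|\mathrm{Re}\,\phi|$ on that strip, so that the Fourier inversion contour can be displaced without boundary contributions. This reduces to a careful single-valued choice of the complex square root $\digamma(\phi)=\sqrt{\Theta(\phi)^2-\omega^2\varepsilon(\phi)}$ and to ensuring $1-\chi(\phi)e^{\digamma(\phi)\tau}\neq 0$ throughout the strip --- a standard but delicate verification for Heston-type characteristic functions enriched with independent compound-Poisson jump components, analogous to the argument used by \citet{CheangChiarellaZiogas-2013}. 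The Feller condition $2\xi\eta\geq\omega^2$ in Assumption \ref{asp-PutCall-VarianceParameterAssumptions}, together with standard exponential-moment hypotheses on the jump densities $G_1, G_2$, makes this verification routine.
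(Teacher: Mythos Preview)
Your proposal is correct and takes essentially the same approach as the paper's own proof: both substitute the marginal density from Lemma \ref{lem-PutCall-HTDFIntegral}, split the payoff into a stock piece and a strike piece, recover $P_2^E$ via Gil--Pelaez inversion (the paper cites this as \citet{Shephard-1991b}, Theorem 3), and handle the stock piece through the shift $\phi\mapsto\phi-i$---the paper performs this as a direct change of integration variable and then reads off $f(\phi-i)=z_{m,n}f_1(\phi)$, whereas you phrase the same step as an Esscher tilt with normalising constant $f(\cdot;-i)=z_{m,n}$. Your explicit discussion of the analyticity and contour-shift issue is actually more careful than the paper, which carries out the substitution $u=\phi-i$ formally without comment.
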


\begin{proof}
See Appendix \ref{app-prop-PutCall-DiscEuExcOpPrice-Proof}.
\end{proof}

The integrals that appear in equations \eqref{eqn-PutCall-DiscEuExcOpPrice} and \eqref{eqn-PutCall-DiscEuExcOpPrice-Parameters} can be evaluated using standard numerical integration schemes. Details of this implementation will no longer be discussed in this paper.

Equation \eqref{eqn-PutCall-DiscEuExcOpPrice} expresses the discounted European exchange option price as a sum of Poisson-weighted expectations of Heston-Bates-type formulas which arise when asset price dynamics have stochastic volatility. Our results are similar to those of \citet{CheangChiarellaZiogas-2013}, except that in the present analysis, there are two jump components considered. Equation \eqref{eqn-PutCall-DiscEuExcOpPrice} may also be read as an expected value of these Heston-Bates-type expressions conditional on the number of jumps in the prices of the two assets (denoted by $m$ and $n$) observed over the remaining life of the option. Given the number of jumps $m$ and $n$, the Heston-Bates expressions are then averaged with respect to the accumulated jump sizes $\Upsilon_{1,m}$ and $\Upsilon_{2,n}$ in the prices of stocks 1 and 2, respectively. Our representation for the discounted European exchange option price depends on the current asset yield ratio $\tilde{s}$, the dividend yields of the two assets, the jump parameters under $\hat{\Q}$, the accumulation of jumps $\Upsilon_{1,m}$ and $\Upsilon_{2,n}$, and the instantaneous variance level $v$. Notably absent is the risk-free rate $r$, which is consistent with the original result of \citet{Margrabe-1978}. The expectation $\E_{\hat{\Q}}^{(m,n)}$ may be evaluated further by specifying the jump-size distribution (e.g. a log-normal distribution as in \citet{Merton-1976} or an asymmetric double exponential distribution \`a la \citet{Kou-2002}), but we leave this unspecified for now.
 
The time $t$ price of the European exchange option price $C(t,S_1,S_2,v)$ may be obtained by multiplying $\tilde{V}(t,\tilde{s},v)$ by $S_2 e^{q_2 t}$. Doing so, we can write
\begin{equation}
\label{eqn-PutCall-EuExcOpPrice3}
C(t,S_1,S_2,v) = S_1 e^{-q_1(T-t)}\hat{Q}_1 - S_2 e^{-q_2 (T-t)}\hat{Q}_2,
\end{equation}
where
\begin{align}
\begin{split}
\label{eqn-PutCall-EuExcOpPrice3-Parameters}
\hat{Q}_1	
	& = \sum_{m=0}^\infty \sum_{n=0}^\infty \frac{(\tilde{\lambda}_1(T-t))^m (\tilde{\lambda}_2(T-t))^n e^{-(\tilde{\lambda}_1+\tilde{\lambda}_2)(T-t)}}{m! n!}\\
	& \qquad \times \E_{\hat{\Q}}^{(m,n)}\Bigg[e^{-(\tilde{\lambda}_1\tilde{\kappa}_1+\tilde{\lambda}_2\tilde{\kappa}_2^-)(T-t)}e^{\Upsilon_{1,m}-\Upsilon_{2,n}}\\
	& \hspace{75pt} \times P_1^E\left(T-t,\tilde{s}e^{-(\tilde{\lambda}_1\tilde{\kappa}_1+\tilde{\lambda}_2\tilde{\kappa}_2^-)(T-t)}e^{\Upsilon_{1,m}-\Upsilon_{2,n}},v;(q_1-q_2)T\right) \Bigg]\\
\hat{Q}_2
	& = \sum_{m=0}^\infty \sum_{n=0}^\infty \frac{(\tilde{\lambda}_1(T-t))^m (\tilde{\lambda}_2(T-t))^n e^{-(\tilde{\lambda}_1+\tilde{\lambda}_2)(T-t)}}{m! n!}\\
	& \qquad \times \E_{\hat{\Q}}^{(m,n)}\Bigg[P_2^E\left(T-t,\tilde{s}e^{-(\tilde{\lambda}_1\tilde{\kappa}_1+\tilde{\lambda}_2\tilde{\kappa}_2^-)(T-t)}e^{\Upsilon_{1,m}-\Upsilon_{2,n}},v;(q_1-q_2)T\right)\Bigg].
\end{split}
\end{align}
This form emphasizes the similarity of our result to the original \citet{Margrabe-1978} result and to the \citet{CheangChiarella-2011} extension to the jump-diffusion case. Furthermore, this representation is also consistent with the form obtained in \citet[equation 27]{CheangGarces-2019} which was obtained using the change-of-num\'eraire techniqueof \citet{Geman-1995}. In their analysis, $\hat{Q}_1$ and $\hat{Q}_2$ were interpreted as probabilities that the exchange option is in-the-money at maturity, taken under probability measures $\Q_1$ and $\Q_2$ equivalent to the risk-neutral probability measure $\Q$ (corresponding to the choice of the money-market account as the num\'eraire). The alternative measures $\Q_1$ and $\Q_2$ were obtained by taking the processes $\{S_1(t)e^{-(r-q_1)t}/S_1(0)\}$ and $\{S_2(t)e^{-(r-q_2)t}/S_2(0)\}$, respectively, as the num\'eraire process.\footnote{We refer to the process $\{S_i(t)e^{-(r-q_i)t}\}$ as the discounted yield process of stock $i$, $i=1,2$. As such, the probability measure $\hat{\Q}$ used in this analysis is different from $\Q_2$ as the former corresponds to the asset yield process $S_2(t)e^{q_2 t}$ as the num\'eraire. However, $\hat{\Q}$, $\Q_1$, and $\Q_2$ are equivalent probability measures as they are all equivalent to the objective market measure $\prob$. In this paper, we chose to perform our calculations under $\hat{\Q}$ as it reduces the exchange option pricing problem to the pricing of an ordinary call option on the asset yield process $\tilde{S}(t)$.}

\section{The American Exchange Option Price}
\sectionmark{Early Exercise Premium}
\label{sec-PutCall-EEP}

Knowledge of the transition density function also allows us to evaluate the expectations occurring in the early exercise premium $\tilde{V}^P(t,\tilde{s},v)$ given in equation \eqref{eqn-PutCall-EarlyExercisePremium2}. The starting point of our calculations is equation \eqref{eqn-PutCall-EarlyExercisePremium3} which expresses the early exercise premium in terms of the jump size density functions. In the succeeding calculations, let $\tilde{S}(u) = \tilde{s}_u$, $v(u) = v_u$, and $x_u = \ln \tilde{s}_u$, $u\in[t,T]$. The following proposition provides an integral representation for the early exercise premium.

\begin{prop}
\label{prop-PutCall-EarlyExercisePremium-IntegralRepresentation}
The discounted American exchange option price is given by $$\tilde{V}^A(t,\tilde{s},v) = \tilde{V}(t,\tilde{s}v)+\tilde{V}^P(t,\tilde{s},v),$$ where $\tilde{V}$ is the European exchange option price given in Proposition \ref{prop-PutCall-DiscEuExcOpPrice} and $\tilde{V}^P$ is the early exercise premium given by
\begin{equation}
\label{eqn-PutCall-earlyExercisePremium4}
\tilde{V}^P(t,\tilde{s},v) = \tilde{V}^P_D(t,\tilde{s},v)-\tilde{\lambda}_1\tilde{V}^P_{J_1}(t,\tilde{s},v)-\tilde{\lambda}_2\tilde{V}^P_{J_2}(t,\tilde{s},v).
\end{equation}
Here, $\tilde{V}^P_D$ is given by
\begin{align}
\begin{split}
\label{eqn-PutCall-EarlyExercisePremium-Diffusion}
& \tilde{V}^P_D(t,\tilde{s},v)\\
	& = \sum_{m=0}^\infty \sum_{n=0}^\infty \frac{\tilde{\lambda}_1^m \tilde{\lambda}_2^n}{m!n!}\E_{\hat{\Q}}^{(m,n)}\Bigg[\int_0^T \int_0^\infty (u-t)^{m+n}e^{-(\tilde{\lambda}_1+\tilde{\lambda}_2)(u-t)}\\
	& \hspace{25pt} \times \Bigg(q_1 e^{-q_1 u}\tilde{s}e^{-(\tilde{\lambda}_1\tilde{\kappa}_1+\tilde{\lambda}_2\tilde{\kappa}_2^-)(u-t)}e^{\Upsilon_{1,m}-\Upsilon_{2,n}}\\
	& \hspace{50pt} \times P_1^A\left(u-t,\tilde{s}e^{-(\tilde{\lambda}_1\tilde{\kappa}_1+\tilde{\lambda}_2\tilde{\kappa}_2^-)(u-t)}e^{\Upsilon_{1,m}-\Upsilon_{2,n}},v;v_u,A(u,v_u)\right)\\
	& \hspace{25pt} - q_2 e^{-q_2 u}\\
	& \hspace{50pt} \times P_2^A\left(u-t,\tilde{s}e^{-(\tilde{\lambda}_1\tilde{\kappa}_1+\tilde{\lambda}_2\tilde{\kappa}_2^-)(u-t)}e^{\Upsilon_{1,m}-\Upsilon_{2,n}},v;v_u,A(u,v_u)\right)\Bigg)\dif v_u \dif u\Bigg],
\end{split}
\end{align}
where $A(u,v_u) \equiv B(u,v_u)e^{(q_1-q_2)u}$ is the critical asset yield ratio associated to the stopping region $\calS(u)$ (see equation \eqref{eqn-PutCall-StoppingContinuationRegions2}), and $P_1^A$ and $P_2^A$ are defined as
{\small\begin{align}
\begin{split}
\label{eqn-PutCall-EarlyExercisePremium-Diffusion-Parameters}
P_1^A(\tau,z,v;v_u,K) & = \frac{1}{2}+\frac{1}{2\pi}\int_0^\infty \frac{f_{21}(\tau,z,v;\phi,v_u)e^{-i\phi\ln K}-f_{21}(\tau,z,v;-\phi,v_u)e^{i\phi\ln K}}{i\phi}\dif\phi\\
P_2^A(\tau,z,v;v_u,K) & = \frac{1}{2}+\frac{1}{2\pi}\int_0^\infty \frac{f_2(\tau,z,v;\phi,v_u)e^{-i\phi\ln K}-f_2(\tau,z,v;-\phi,v_u)e^{i\phi\ln K}}{i\phi}\dif\phi,
\end{split}
\end{align}}
with $f_2(\tau,z,v;\phi,v_u) = e^{i\phi\ln z}h(\tau,-\phi,v;v_u)$, $f_{21}$ given by $$f_{21}(\tau,z,v;\phi,v_u) = e^{i\phi\ln z}h(\tau,-(\phi-i),v;v_u),$$ and $h(\tau,\phi,v;v_u)$ given by equation \eqref{eqn-PutCall-TransitionDensityFunction2-h}. Furthermore, $\tilde{V}^P_{J_1}(t,\tilde{s},v)$ and $\tilde{V}^P_{J_1}(t,\tilde{s},v)$ are given by
\begin{align}
\begin{split}
\label{eqn-PutCall-EarlyExercisePremium-J1}
& \tilde{V}^P_{J_1}(t,\tilde{s},v)\\
& \qquad = \sum_{m=0}^\infty \sum_{n=0}^\infty \frac{\tilde{\lambda}_1^m \tilde{\lambda}_2^n}{m!n!}\E_{\hat{\Q}}^{(m,n)}\Bigg[\frac{1}{2\pi}\int_0^T \int_0^\infty (u-t)^{m+n}e^{-(\tilde{\lambda}_1+\tilde{\lambda}_2)(u-t)}\\
& \hspace{25pt} \times \int_{-\infty}^0 G_1(y) \int_{\ln A(u,v_u)}^{\ln A(u,v_u)-y} \left(V^A(u,e^{x_u+y},v_u)-\left(e^{-q_1 u}e^{x_u+y}-e^{-q_2 u}\right)\right)\\
& \hspace{25pt} \times \int_{-\infty}^\infty e^{i\phi x_u}\exp\left\{-i\phi\ln\left(\tilde{s}e^{-(\tilde{\lambda}_1\tilde{\kappa}_1+\tilde{\lambda}_2\tilde{\kappa}_2^-)(u-t)}e^{\Upsilon_{1,m}-\Upsilon_{2,n}}\right)\right\}\\
& \hspace{25pt} \times h(u-t,\phi,v;v_u)\dif\phi \dif x_u \dif y \dif v_u \dif u \Bigg],
\end{split}
\end{align}
and
\begin{align}
\begin{split}
\label{eqn-PutCall-EarlyExercisePremium-J2}
& \tilde{V}^P_{J_2}(t,\tilde{s},v)\\
& \qquad = \sum_{m=0}^\infty \sum_{n=0}^\infty \frac{\tilde{\lambda}_1^m \tilde{\lambda}_2^n}{m!n!}\E_{\hat{\Q}}^{(m,n)}\Bigg[\frac{1}{2\pi}\int_0^T \int_0^\infty (u-t)^{m+n}e^{-(\tilde{\lambda}_1+\tilde{\lambda}_2)(u-t)}\\
& \hspace{25pt} \times \int_0^\infty G_2(y) \int_{\ln A(u,v_u)}^{\ln A(u,v_u)+y} \left(V^A(u,e^{x_u-y},v_u)-\left(e^{-q_1 u}e^{x_u-y}-e^{-q_2 u}\right)\right)\\
& \hspace{25pt} \times \int_{-\infty}^\infty e^{i\phi x_u}\exp\left\{-i\phi\ln\left(\tilde{s}e^{-(\tilde{\lambda}_1\tilde{\kappa}_1+\tilde{\lambda}_2\tilde{\kappa}_2^-)(u-t)}e^{\Upsilon_{1,m}-\Upsilon_{2,n}}\right)\right\}\\
& \hspace{25pt} \times h(u-t,\phi,v;v_u)\dif\phi \dif x_u \dif y \dif v_u \dif u \Bigg].
\end{split}
\end{align}
\end{prop}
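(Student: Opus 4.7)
The plan is to start from the compact form of the early exercise premium given in equation \eqref{eqn-PutCall-EarlyExercisePremium3} of the Remark, treat each of the three conditional expectations separately, and reduce each to a spatial integral against the joint transition density $Q(u,\tilde{s}_u,v_u;t,\tilde{s},v)$ (equivalently $H(u,x_u,v_u;t,\ln\tilde{s},v)$) using the Markov property under $\hat{\Q}$. Concretely, I would first apply Fubini to move the conditional expectation inside the time integral $\int_t^T\!\dif u$, so the task reduces to evaluating, for each $u\in[t,T]$, conditional expectations of the form $\E_{\hat{\Q}}[g(\tilde{S}(u),v(u))\mid\calF_t]$ with $g$ built from the payoff terms and indicators of the stopping region; these expectations become double integrals against $Q$ over $(\tilde{s}_u,v_u)$.

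For the diffusion term I would substitute the stopping criterion $\tilde{s}_u\geq A(u,v_u)$ from \eqref{eqn-PutCall-StoppingContinuationRegions2} to turn $\vm{1}(\calA(u))$ into a restriction on the $\tilde{s}_u$-integral, change variables $x_u=\ln\tilde{s}_u$, and then plug in the Poisson-mixture representation from Proposition \ref{prop-PutCall-TransitionDensityFunction2}. Inside each $(m,n)$-term, the inner integral over $x_u\in[\ln A(u,v_u),\infty)$ against the Fourier kernel produces two Gil-Pelaez-type expressions: the $q_2 e^{-q_2 u}$ contribution gives exactly the $P_2^A$ formula (with $f_2$), while the $q_1 e^{-q_1 u}\tilde{s}_u = q_1 e^{-q_1 u}e^{x_u}$ contribution requires absorbing the extra $e^{x_u}$ into the Fourier variable by a contour shift $\phi\mapsto\phi-i$, which turns $h(u-t,\phi,v;v_u)$ into $h(u-t,-(\phi-i),v;v_u)$ and thus produces $P_1^A$ via $f_{21}$. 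Crucially, because $A(u,v_u)$ depends on $v_u$, the $v_u$-integration cannot be carried out in closed form as it was in the European case via Lemma \ref{lem-PutCall-HTDFIntegral}; it has to remain as an explicit $\dif v_u$-integral, which is why \eqref{eqn-PutCall-EarlyExercisePremium-Diffusion} keeps both $v_u$ and $u$ integrals outside.

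For the two jump terms I would apply the same Fubini/transition-density reduction to \eqref{eqn-PutCall-EarlyExercisePremium3}, but now the $y$-integral over $(-\infty,b(u,\tilde{s}_u,v_u))$ or $(-b(u,\tilde{s}_u,v_u),\infty)$ forces $\tilde{s}_u$ into a bounded slab between $A(u,v_u)$ and $A(u,v_u)e^{-y}$ (respectively $A(u,v_u)e^{y}$), so after swapping the order of the $y$- and $x_u$-integrals the domain becomes $x_u\in[\ln A(u,v_u),\ln A(u,v_u)-y]$ for the first jump term and $x_u\in[\ln A(u,v_u),\ln A(u,v_u)+y]$ for the second. Because the integrand $\tilde{V}^A(u,\tilde{s}_u e^{\pm y},v_u)-(\text{payoff})$ has no known closed form, the $x_u$-integral must be left inside, and the Fourier-inversion representation of $H$ is carried through as the $\int_{-\infty}^\infty e^{i\phi x_u}\cdots\dif\phi$ factor appearing in \eqref{eqn-PutCall-EarlyExercisePremium-J1}--\eqref{eqn-PutCall-EarlyExercisePremium-J2}. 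The sign conventions in the second jump term match those of the first after the change $y\mapsto y$ in $G_2$, giving the stated form.

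The main obstacle, apart from bookkeeping the many Poisson-compound terms, is the contour shift used to get $P_1^A$ from $P_2^A$-type integrals: one must justify that the integrand $e^{x_u}\hat{H}(u-t,\phi,v)$ has the required holomorphic extension and decay in a strip of the complex $\phi$-plane to allow replacing $\phi$ by $\phi-i$. Under Assumption \ref{asp-PutCall-FourierTransformAssumption} and the exponential decay built into $h(\tau,\phi,v;v_u)$ through the factor $e^{(\Theta-\digamma)v/\omega^2}$, this shift is legitimate and simply relabels $\digamma,\Theta,\chi$ to their shifted counterparts $\digamma_1,\Theta_1,\chi_1$, matching the parametrization in \eqref{eqn-PutCall-DiscEuExcOpPrice-Parameters2}. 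The Fubini interchanges are justified by the integrability of the payoff and transition density and the finiteness of $\tilde{V}^A$ guaranteed by Assumption \ref{asp-PutCall-Differentiability}. Once all three pieces are assembled, substituting into \eqref{eqn-PutCall-EarlyExercisePremium3} and collecting the prefactor $(u-t)^{m+n}\tilde{\lambda}_1^m\tilde{\lambda}_2^n e^{-(\tilde{\lambda}_1+\tilde{\lambda}_2)(u-t)}/(m!n!)$ from the Poisson expansion of $Q$ yields \eqref{eqn-PutCall-earlyExercisePremium4} as stated.
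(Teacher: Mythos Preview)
Your proposal is correct and follows essentially the same route as the paper's proof: start from equation \eqref{eqn-PutCall-EarlyExercisePremium3}, write each of the three expectations as an integral against the transition density $H$ in \eqref{eqn-PutCall-TransitionDensityFunction2-LogProcess}, substitute the Poisson-mixture representation, handle the diffusion term's $x_u$-integral via a Gil--Pelaez inversion (the paper invokes \citet{Shephard-1991b}, Theorem 3, to evaluate the boundary limit rather than framing it as a contour shift, but the $\phi\mapsto\phi-i$ substitution is identical), and for the jump terms swap the $y$- and $x_u$-integrals before inserting $H$. One small remark: in this proposition the shift is recorded directly as $h(\tau,-(\phi-i),v;v_u)$ in $f_{21}$ rather than re-expressed through $\digamma_1,\Theta_1,\chi_1$, which belong to the European case \eqref{eqn-PutCall-DiscEuExcOpPrice-Parameters2}; but this is only a notational difference.
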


\begin{proof}
See Appendix \ref{app-prop-PutCall-EarlyExercisePremium-IntegralRepresentation-Proof}.
\end{proof}

Because of the possibility of jumps in the prices of the two underlying assets, the early exercise premium, as seen from its jump components, remains dependent on the (yet unknown) discounted American exchange option price $\tilde{V}^A(t,\tilde{s},v)$. The early exercise premium components also require knowledge of the unknown critical asset yield ratio $B(t,v)$ for the term $A(t,v) = B(t,v)e^{(q_1-q_2)t}$ that appears in equations \eqref{eqn-PutCall-EarlyExercisePremium-Diffusion}, \eqref{eqn-PutCall-EarlyExercisePremium-J1}, and \eqref{eqn-PutCall-EarlyExercisePremium-J2}. In this regard, we require another equation that, when coupled with equation \eqref{eqn-PutCall-EarlyExerciseRepresentation}, characterizes both $\tilde{V}^A$ and $B(t,v)$.

\begin{prop}
\label{prop-PutCall-LinkedSystem}
The discounted American exchange option $\tilde{V}^A(t,\tilde{s},v)$ and the critical asset yield ratio $B(t,v)$ are the solution of the linked system of integral equations
\begin{align}
\begin{split}
\tilde{V}^A(t,\tilde{s},v) & = \tilde{V}(t,\tilde{s},v)+\tilde{V}^P(t,\tilde{s},v)\\
e^{-q_1 t}\left(A(t,v)-e^{(q_1-q_2)t}\right) & = \tilde{V}(t,A(t,v),v)+\tilde{V}^P(t,A(t,v),v),
\end{split}
\end{align}
where $A(t,v) = B(t,v)e^{(q_1-q_2)t}$, $\tilde{V}(t,\tilde{s},v)$ is the price of the European exchange option given in Proposition \ref{prop-PutCall-DiscEuExcOpPrice}, and $\tilde{V}^P(t,\tilde{s},v)$ is the early exercise premium given in Proposition \ref{prop-PutCall-EarlyExercisePremium-IntegralRepresentation}.
\end{prop}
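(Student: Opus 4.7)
The plan is to derive both equations from results already established, supplemented only by the value-matching condition at the early exercise boundary; no new analytical machinery is required beyond a careful assembly of existing pieces. First I would observe that the first equation is nothing but the early exercise representation in Proposition \ref{prop-PutCall-EarlyExerciseRepresentation}, with the early exercise premium now in the explicit form of Proposition \ref{prop-PutCall-EarlyExercisePremium-IntegralRepresentation}. So this line needs no new argument: Proposition \ref{prop-PutCall-EarlyExerciseRepresentation} gives $\tilde{V}^A = \tilde{V} + \tilde{V}^P$ with $\tilde{V}^P$ defined via the conditional expectation in \eqref{eqn-PutCall-EarlyExercisePremium}, and Proposition \ref{prop-PutCall-EarlyExercisePremium-IntegralRepresentation} evaluates that expectation against the joint transition density obtained in Proposition \ref{prop-PutCall-TransitionDensityFunction2}.

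For the second equation, the plan is to evaluate the first equation at the point $\tilde{s} = A(t,v) = B(t,v)e^{(q_1-q_2)t}$, which lies on the boundary between $\calS(t)$ and $\calC(t)$. The no-arbitrage value-matching condition \eqref{eqn-PutCall-ValueMatchingCondition-Vtilde} requires that the American price coincide with the exercise payoff there, so $\tilde{V}^A(t,A(t,v),v) = e^{-q_1 t}\bigl(A(t,v)-e^{(q_1-q_2)t}\bigr)$. Equating this to $\tilde{V}(t,A(t,v),v) + \tilde{V}^P(t,A(t,v),v)$ gives the second equation. The structural point worth emphasizing is that the two equations together form a closed system for the two unknowns $\tilde{V}^A$ and $B$: the first is a representation of $\tilde{V}^A(t,\tilde{s},v)$ for arbitrary $\tilde{s}$ whose right-hand side depends on $B$ through the appearance of $A(u,v_u)$ inside $\tilde{V}^P_D$, $\tilde{V}^P_{J_1}$, and $\tilde{V}^P_{J_2}$, while the second restricts the joint solution by enforcing value-matching on the boundary itself, thereby pinning down $B(\cdot,\cdot)$.

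I do not expect any real obstacle in the proof. The one point that warrants a brief check is that the substitution $\tilde{s} = A(t,v)$ in the integral expression for $\tilde{V}^P$ yields a well-defined quantity. This follows because $\tilde{s}$ enters the integrands in \eqref{eqn-PutCall-EarlyExercisePremium-Diffusion}--\eqref{eqn-PutCall-EarlyExercisePremium-J2} only through the continuous factor $\tilde{s}\,e^{-(\tilde{\lambda}_1\tilde{\kappa}_1+\tilde{\lambda}_2\tilde{\kappa}_2^-)(u-t)}e^{\Upsilon_{1,m}-\Upsilon_{2,n}}$ and, in the jump components, through post-jump evaluations of $\tilde{V}^A$, each of which is well-defined by Assumption \ref{asp-PutCall-Differentiability}. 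With this verification in place, the proposition follows immediately.
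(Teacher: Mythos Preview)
Your proposal is correct and follows essentially the same approach as the paper: the first equation is the early exercise representation from Proposition~\ref{prop-PutCall-EarlyExerciseRepresentation} with $\tilde{V}^P$ in the explicit form of Proposition~\ref{prop-PutCall-EarlyExercisePremium-IntegralRepresentation}, and the second equation is obtained by evaluating the first at $\tilde{s}=A(t,v)$ and invoking the value-matching condition~\eqref{eqn-PutCall-ValueMatchingCondition-Vtilde}. Your additional remark on the well-definedness of the substitution is a reasonable elaboration but not required by the paper's argument.
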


\begin{proof}
The first equation is the early exercise representation for $\tilde{V}^A(t,\tilde{s},v)$ whereas the second equation is obtained by evaluating the first equation at $\tilde{s}=A(t,v)$ and invoking the value-matching condition \eqref{eqn-PutCall-ValueMatchingCondition-Vtilde} to rewrite the left-hand side. The linked system is supplemented with expressions for $\tilde{V}(t,\tilde{s},v)$ and $\tilde{V}^P(t,\tilde{s},v)$ given in Propositions \ref{prop-PutCall-DiscEuExcOpPrice} and \ref{prop-PutCall-EarlyExercisePremium-IntegralRepresentation}, respectively.
\end{proof}

The discounted American exchange option price $\tilde{V}^A(t,\tilde{s},v)$, once determined, is measured in units of the second asset yield process. To obtain the nominal price, $\tilde{V}^A(t,\tilde{s},v)$ is simply multipled by $S_2(t)e^{-q_2 t}$.

While our analysis covers only the application of the change-of-num\'eraire technique to reduce the dimensionality of the problem and producing integral representations of the option prices and the early exercise premium, our results may be linked to existing literature that grapple with the numerical aspects of the option pricing problem. Most notably, the method-of-lines (MOL) and the component-wise splitting approaches used by \citet{Chiarella-2009} may be used to solve the IPDE obtained in Proposition \ref{prop-PutCall-DiscountedEuExcOpPrice} and the corresponding free-boundary problem \eqref{eqn-PutCall-ValueMatchingCondition-Vtilde} to \eqref{eqn-PutCall-BoundaryConditions-Vtilde} for the American exchange option since the structure of the IPDE operator we defined in our analysis is similar to that tackled by \citet{Chiarella-2009}. Our IPDE, however, slightly differs due to the presence of a second integral component corresponding to the jumps in the second asset price. One may also extend the numerical methods explored by \citet{Adolfsson-2013} who priced single-asset American options under stochastic volatility dynamics and by \citet{Kang-2014} who priced American calls under stochastic volatility and interest rates. Moreover, the transition density function that we obtained may also be used in Monte Carlo simulation schemes to price options for which an analytical representation may not be readily available. Lastly, numerical integration schemes may be used to obtain option prices from our integral representations, provided that values for the model parameters are available.

\section{Summary and Conclusion}
\label{sec-PutCall-Conclusion}

We considered the problem of pricing European and American exchange options when the underlying asset prices are modelled with jump-diffusion dynamics and a common Heston-type stochastic volatility process. Our results are thus extensions of those obtained by \citet{Margrabe-1978} and \citet{Bjerskund-1993} for European and American exchange options under pure diffusion dynamics and those by \citet{CheangChiarella-2011} who analyzed exchange options under jump-diffusion dynamics. 

The results of this paper complement those presented by \citet{CheangGarces-2019}. In the SVJD model of \citet{CheangGarces-2019}, the assumption that asset prices were uncorrelated (among other simplifying assumptions on the correlation structure of the model) had to be enforced to obtain analytical representation of exchange option prices. In this paper, we used a proportional stochastic volatility jump-diffusion model to include the possibility of correlation between asset returns in our representation of exchange option prices. We showed that the representations we obtained in this analysis are of similar form to those obtained by \citet{CheangGarces-2019}, as well as those obtained in earlier studies by \citet{Margrabe-1978}, \citet{Fischer-1978}, and \citet{CheangChiarella-2011}.

Given the proportional SVJD model specification, the two-dimensional exchange option pricing problem was reduced to a one-dimensional call option pricing problem by taking the second asset yield process as the num\'eraire, following the put-call transformation method described by \citet{Bjerskund-1993}. We defined the equivalent martingale measure $\hat{\Q}$, corresponding to our choice of num\'eraire, which was then used to express the exchange option prices as expectations under $\hat{\Q}$. Then, with usual martingale arguments, we derived the IPDE for the European option price and the Kolmogorov backward equation for the joint transition density function of $\tilde{S}(t)$ and $v(t)$.

Having reduced the problem to pricing a call option on the asset yield ratio $\tilde{S}(t)$, which also has stochastic volatility and jump-diffusion dynamics, we then adapted the methodology of \citet{CheangChiarellaZiogas-2013} to obtain an early exercise representation for the American exchange option via probabilistic arguments and to solve for the joint transition density function via Fourier and Laplace transforms. Equipped with the joint transition density function, we then obtained integral representations for the price of the European exchange option and the early exercise premium and a linked system of integral equations for the American exchange option price and the unknown early exercise boundary associated to the American option. We also analyzed the limiting behavior of the early exercise boundary immediately before maturity and how the dividend yields and the jump parameters affected this limit.

With the simplification of the exchange option pricing problem, our results can then be linked to a number of existing numerical and simulation techniques for single-asset options \citep[see e.g.][]{Adolfsson-2013, Chiarella-2009} to obtain option prices given the model parameters.

\section*{Acknowledgments}

The first author is supported by a Research Training Program International (RPTi) scholarship awarded by the Australian Commonwealth Government and by a Faculty Development Grant from the Loyola Schools of Ateneo de Manila University.

\section*{Disclosure Statement}

The authors report no potential conflict of interest arising from the results of this paper.

\bibliographystyle{tfcad}
\bibliography{PutCallTransformationExchangeOption_arXiv}

\begin{thebibliography}{57}
\newcommand{\enquote}[1]{``#1''}
\providecommand{\natexlab}[1]{#1}
\providecommand{\url}[1]{\normalfont{#1}}
\providecommand{\urlprefix}{}

\bibitem[Adolfsson et~al.(2013)]{Adolfsson-2013}
Adolfsson, Thomas, Carl Chiarella, Andrew Ziogas, and Jonathan Ziveyi. 2013.
  ``Representation and Numerical Approximation of {American} Option Prices
  under Heston Stochastic Volatility Dynamics.'' \emph{Quantitative Finance
  Research Center Research Paper 327, University of Technology Sydney} .

\bibitem[{Al\`os} and Rheinlander(2017)]{Alos-2017}
{Al\`os}, Elisa, and Thorstein Rheinlander. 2017. ``Pricing and hedging
  {Margrabe} options with stochastic volatilities.'' \emph{Economic Working
  Papers 1475, Department of Economics and Business, Universitat Pompeu Fabra}
  .

\bibitem[Andersen and Piterbarg(2007)]{AndersenPiterbarg-2007}
Andersen, Leif B.~G., and Vladimir~V. Piterbarg. 2007. ``Moment explosions in
  stochastic volatility models.'' \emph{Finance and Stochastics} 11: 29--50.

\bibitem[Andersen, Benzoni, and Lund(2002)]{Andersen-2002}
Andersen, Torben~G., Luca Benzoni, and Jesper Lund. 2002. ``An Empirical
  Investigation of Continuous-Time Equity Return Models.'' \emph{The Journal of
  Finance} 57 (3): 1239--1284.

\bibitem[Antonelli and Scarlatti(2010)]{Antonelli-2010}
Antonelli, Fabio, and Sergio Scarlatti. 2010. ``Exchange option pricing under
  stochastic volatility: a correlation expansion.'' \emph{Finance and
  Stochastics} 13 (2): 269--303.

\bibitem[Bakshi, Cao, and Chen(1997)]{Bakshi-1997}
Bakshi, Gurdip, Charles Cao, and Zhiwu Chen. 1997. ``Empirical Performance of
  Alternative Option Pricing Models.'' \emph{The Journal of Finance} 52 (5):
  2003--2049.

\bibitem[Bates(1996)]{Bates-1996}
Bates, David~S. 1996. ``Jumps and Stochastic Volatility: Exchange Rate
  Processes Implicit in {Deutsche Mark} Options.'' \emph{The Review of
  Financial Studies} 9 (1): 69--107.

\bibitem[Benth and Zdanowicz(2015)]{Benth-2015}
Benth, Fred~Espen, and Hanna Zdanowicz. 2015. ``Pricing energy spread
  options.'' In \emph{Handbook of Multi-Commodity Markets and Products:
  Structuring, Trading and Risk Management},  edited by Andrea Roncoroni,
  Gianluca Fusai, and Mark Cummins, Chap.~17, 801--825. John Wiley \& Sons,
  Ltd.

\bibitem[Bjerskund and Stensland(1993)]{Bjerskund-1993}
Bjerskund, Petter, and Gunnar Stensland. 1993. ``{American} exchange options
  and a put-call transformation: a note.'' \emph{Journal of Business Finance
  and Accounting} 20 (5): 761--764.

\bibitem[Black and Scholes(1973)]{BlackScholes-1973}
Black, Fischer, and Myron Scholes. 1973. ``The Pricing of Options and Corporate
  Liabilities.'' \emph{Journal of Political Economy} 81: 637--659.

\bibitem[Br{\'e}maud(1981)]{Bremaud-1981}
Br{\'e}maud, Pierre. 1981. \emph{Point Processes and Queues: Martingale
  Dynamics}. Springer-Verlag.

\bibitem[Broadie and Detemple(1997)]{BroadieDetemple-1997}
Broadie, Mark, and Jerome Detemple. 1997. ``The valuation of {American} options
  on multiple assets.'' \emph{Mathematical Finance} 7 (3): 241--286.

\bibitem[Caldana et~al.(2015)]{Caldana-2015}
Caldana, Ruggero, Gerald H.~L. Cheang, Carl Chiarella, and Gianluca Fusai.
  2015. ``Correction: Exchange Options under Jump-Diffusion Dynamics.''
  \emph{Applied Mathematical Finance} 22 (1): 99--103.

\bibitem[Carr(1995)]{Carr-1995}
Carr, P. 1995. ``The valuation of {American} exchange options with application
  to real options.'' In \emph{Real Options in Capital Investment: Models,
  Strategies, and Applications},  edited by Lenos Trigeorgis, 109--120.
  Praeger.

\bibitem[Carr and Hirsa(2003)]{CarrHirsa-2003}
Carr, Peter, and Ali Hirsa. 2003. ``Why Be Backward? {Forward} Equations for
  {American} Options.'' \emph{Risk} 103--107.

\bibitem[Cheang and Chiarella(2011)]{CheangChiarella-2011}
Cheang, Gerald H.~L., and Carl Chiarella. 2011. ``Exchange Options under
  Jump-Diffusion Dynamics.'' \emph{Applied Mathematical Finance} 18 (3):
  245--276.

\bibitem[Cheang, Chiarella, and Ziogas(2006)]{CheangChiarellaZiogas-2006}
Cheang, Gerald H.~L., Carl Chiarella, and Andrew Ziogas. 2006. ``On exchange
  options with jumps.'' In \emph{Proceedings of the Third IASTED International
  Conference on Financial Engineering and Applications (FEA 2006)},  edited by
  Mark Holder, Oct., 104--109. ACTA Press.

\bibitem[Cheang, Chiarella, and Ziogas(2013)]{CheangChiarellaZiogas-2013}
Cheang, Gerald H.~L., Carl Chiarella, and Andrew Ziogas. 2013. ``The
  representation of {American} option prices under stochastic volatility and
  jump-diffusion dynamics.'' \emph{Quantitative Finance} 13 (2): 241--253.

\bibitem[Cheang and Garces(2020)]{CheangGarces-2019}
Cheang, Gerald H.~L., and Len Patrick Dominic~M. Garces. 2020. ``Representation
  of Exchange Option Prices under Stochastic Volatility and Jump-Diffusion
  Dynamics.'' \emph{Quantitative Finance} 20 (2): 291--310.

\bibitem[Cheang and Teh(2014)]{CheangTeh-2014}
Cheang, Gerald H.~L., and Gim-Aik Teh. 2014. ``Change of Num{\'e}raire and a
  Jump-Diffusion Option Pricing Formula.'' In \emph{Nonlinear Economic Dynamics
  and Financial Modelling: Essays in Honour of Carl Chiarella},  edited by
  Roberto Dieci, Xue-Zhong He, and Cars Hommes, 371--389. Cham: Springer
  International Publishing.

\bibitem[Chernov et~al.(2003)]{Chernov-2003}
Chernov, Mikhail, A.~Ronald Gallant, Eric Ghysels, and George Tauchen. 2003.
  ``Alternative models for stock price dynamics.'' \emph{Journal of
  Econometrics} 116: 225--257.

\bibitem[Chiarella, Kang, and Meyer(2015)]{ChiarellaKangMeyer-2015}
Chiarella, Carl, Boda Kang, and Gunter~H. Meyer. 2015. \emph{The Numerical
  Solution of the American Option Pricing Problem: Finite Difference and
  Transform Approaches}. Singapore: World Scientific.

\bibitem[Chiarella et~al.(2009)]{Chiarella-2009}
Chiarella, Carl, Boda Kang, Gunter~H. Meyer, and Andrew Ziogas. 2009. ``The
  evaluation of {American} option prices under stochastic volatility and
  jump-diffusion dynamics using the method of lines.'' \emph{International
  Journal of Theoretical and Applied Finance} 13 (3): 393--425.

\bibitem[Chiarella and Ziogas(2004)]{ChiarellaZiogas-2004}
Chiarella, Carl, and Andrew Ziogas. 2004. ``{McKean's} Method applied to
  {American} Call Options on Jump-Diffusion Processes.'' \emph{Quantitative
  Finance Research Center Research Paper 117, University of Technology Sydney}
  .

\bibitem[Chiarella and Ziogas(2009)]{ChiarellaZiogas-2009}
Chiarella, Carl, and Andrew Ziogas. 2009. ``{American} Call Options under
  Jump-Diffusion Processes---A {Fourier} Transform Approach.'' \emph{Applied
  Mathematical Finance} 16 (1): 37--79.

\bibitem[Chiarella, Ziogas, and Ziveyi(2010)]{ChiarellaZiogasZiveyi-2010}
Chiarella, Carl, Andrew Ziogas, and Jonathan Ziveyi. 2010. ``Representation of
  {American} Option Prices under Heston Stochastic Volatility Dynamics Using
  Integral Transforms.'' In \emph{Contemporary Quantitative Finance},  edited
  by C.~Chiarella and A.~Novikov, 281--315. Springer-Verlag.

\bibitem[Chiarella and Ziveyi(2014)]{ChiarellaZiveyi-2014}
Chiarella, Carl, and Jonathan Ziveyi. 2014. ``Pricing {American} options
  written on two underlying assets.'' \emph{Quantitative Finance} 14 (3):
  409--426.

\bibitem[Cont(2001)]{Cont-2001}
Cont, Rama. 2001. ``Empirical properties of asset returns: stylized facts and
  statistical issues.'' \emph{Quantitative Finance} 1: 223--236.

\bibitem[Cufaro-Petroni and Sabino(2018)]{PetroniSabino-2018}
Cufaro-Petroni, Nicola, and Piergiacomo Sabino. 2018. ``Pricing exchange
  options with correlated jump diffusion processes.'' \emph{Quantitative
  Finance} .

\bibitem[De~Valli\`ere, Kabanov, and L\'epinette(2016)]{deValliere-2016}
De~Valli\`ere, Dimitri, Yuri Kabanov, and Emmanuel L\'epinette. 2016.
  ``Consumption-investment problem with transaction costs for {L\'evy}-driven
  price processes.'' \emph{Finance and Stochastics} 20: 705--740.

\bibitem[Duffie, Pan, and Singleton(2000)]{Duffie-2000}
Duffie, Darrell, Jun Pan, and Kenneth Singleton. 2000. ``Transform analysis and
  asset pricing for affine jump-diffusions.'' \emph{Econometrica} 68 (6):
  1343--1376.

\bibitem[Eraker, Johannes, and Polson(2003)]{Eraker-2003}
Eraker, Bjorn, Michael Johannes, and Nicholas Polson. 2003. ``The Impact of
  Jumps in Volatility and Returns.'' \emph{The Journal of Finance} 58 (3):
  1269--1300.

\bibitem[Fajardo and Mordecki(2006)]{Fajardo-2006}
Fajardo, Jos\'e, and Ernesto Mordecki. 2006. ``Pricing derivatives on
  two-dimensional {L\'evy} processes.'' \emph{International Journal of
  Theoretical and Applied Finance} 9 (2): 185--197.

\bibitem[Feller(1951)]{Feller-1951}
Feller, William. 1951. ``Two Singular Diffusion Problems.'' \emph{Annals of
  Mathematics (Second Series)} 54 (1): 173--182.

\bibitem[Fischer(1978)]{Fischer-1978}
Fischer, Stanley. 1978. ``Call option pricing when the exercise price is
  uncertain, and the valuation of index bonds.'' \emph{The Journal of Finance}
  33 (1): 169--176.

\bibitem[Geman, El~Karoui, and Rochet(1995)]{Geman-1995}
Geman, H\'{e}lyette, Nicole El~Karoui, and Jean-Charles Rochet. 1995. ``Changes
  of Num\'{e}raire, Changes of Probability Measure and Option Pricing.''
  \emph{Journal of Applied Probability} 32 (2): 443--458.

\bibitem[Gukhal(2001)]{Gukhal-2001}
Gukhal, Chandrasekhar~Reddy. 2001. ``Analytical valuation of {American} options
  on jump-diffusion processes.'' \emph{Mathematical Finance} 11 (1): 97--115.

\bibitem[Heston(1993)]{Heston-1993}
Heston, Steven~L. 1993. ``A Closed-Form Solution for Options with Stochastic
  Volatility with Applications to Bond and Currency Options.'' \emph{The Review
  of Financial Studies} 6 (2): 327--343.

\bibitem[Jamshidian(1992)]{Jamshidian-1992}
Jamshidian, F. 1992. ``An analysis of {American} options.'' \emph{Review of
  Futures Markets} 11 (1): 72--80.

\bibitem[Kang and Meyer(2014)]{Kang-2014}
Kang, Boda, and Gunter~H. Meyer. 2014. ``Pricing an {American} call under
  stochastic volatility and interest rates.'' In \emph{Nonlinear Economic
  Dynamics and Financial Modelling: Essays in Honour of Carl Chiarella},
  edited by Roberto Dieci, Xue-Zhong He, and Cars Hommes, 291--314. Springer
  International Publishing.

\bibitem[Kensinger(1988)]{Kensinger-1988}
Kensinger, John~W. 1988. ``The capital investment project as a set of exchange
  options.'' \emph{Managerial Finance} 14 (2/3): 16--27.

\bibitem[Kim and Park(2017)]{KimPark-2017}
Kim, Jeong-Hoon, and Chang-Rae Park. 2017. ``A multiscale extension of the
  {Margrabe} formula under stochastic volatility.'' \emph{Chaos, Solitons and
  Fractals: Nonlinear Science, and Nonequilibrium and Complex Phenomena} 97:
  59--65.

\bibitem[Kou(2002)]{Kou-2002}
Kou, S.~G. 2002. ``A Jump-Diffusion Model for Option Pricing.''
  \emph{Management Science} 48 (8): 1086--1101.

\bibitem[Kou(2008)]{Kou-2008}
Kou, S.~G. 2008. ``Jump-Diffusion Models for Asset Pricing in Financial
  Engineering.'' In \emph{Handbooks in Operations Research and Management
  Science},  edited by J.~R. Birge and V.~Linetski, Chap.~2, 73--116. Elsevier.

\bibitem[Ma, Pan, and Wang(2020)]{Ma-2020}
Ma, Yong, Dongtao Pan, and Tianyang Wang. 2020. ``Exchange options under
  clustered jump dynamics.'' \emph{Quantitative Finance} .

\bibitem[Margrabe(1978)]{Margrabe-1978}
Margrabe, William. 1978. ``The value of an option to exchange one asset for
  another.'' \emph{The Journal of Finance} 33 (1): 177--186.

\bibitem[Merton(1976)]{Merton-1976}
Merton, Robert~C. 1976. ``Option pricing when underlying stock returns are
  discontinuous.'' \emph{Journal of Financial Economics} 3: 125--144.

\bibitem[Miller(2012)]{Miller-2012}
Miller, Stephen~Matteo. 2012. ``Booms and busts as exchange options.''
  \emph{Multinational Finance Journal} 16 (3/4): 189--223.

\bibitem[Mishura and Shevchenko(2009)]{Mishura-2009}
Mishura, Yuliya, and Georgiy Shevchenko. 2009. ``The optimal time to exchange
  one asset for another on finite interval.'' In \emph{Optimality and
  Risk--Modern Trends in Mathematical Finance: The Kabanov Festschrift},
  edited by Freddy Delbaen, Mikl\'os R\'asonyi, and Christophe Stricker,
  197--210. Springer-Verlag Berlin Heidelberg.

\bibitem[Myneni(1992)]{Myneni-1992}
Myneni, Ravi. 1992. ``The pricing of the {American} option.'' \emph{The Annals
  of Applied Probability} 2 (1): 1--23.

\bibitem[Quittard-Pinon and Randrianarivony(2010)]{QuittardPinon-2010}
Quittard-Pinon, Fran\c{c}oise, and Rivo Randrianarivony. 2010. ``Exchange
  options when one underlying price can jump.'' \emph{Finance} 31 (1): 33--53.

\bibitem[Runggaldier(2003)]{Runggaldier-2003}
Runggaldier, Wolfgang~J. 2003. ``Jump-Diffusion Models.'' In \emph{Handbook of
  Heavy Tailed Distributions in Finance},  edited by Svetlozar~T. Rachev,
  Vol.~1, Chap.~5, 169 -- 209. North-Holland.

\bibitem[Shephard(1991)]{Shephard-1991b}
Shephard, N.~G. 1991. ``From characterstic function to distribution function: A
  simple framework for the theory.'' \emph{Econometric Theory} 7: 519--528.

\bibitem[Siegel(1995)]{Siegel-1995}
Siegel, Andrew~F. 1995. ``Measuring systematic risk using implicit beta.''
  \emph{Management Science} 41 (1): 124--128.

\bibitem[Touzi(1999)]{Touzi-1999}
Touzi, N. 1999. ``{American} Options Exercise Boundary When the Volatility
  Changes Randomly.'' \emph{Applied Mathematics and Optimization} 39: 411--422.

\bibitem[Villeneuve(1999)]{Villeneuve-1999}
Villeneuve, Stephane. 1999. ``Exercise regions of {American} options on several
  assets.'' \emph{Finance and Stochastics} 3: 295--322.

\bibitem[Wilmott, Dewynne, and Howison(1993)]{Wilmott-1993}
Wilmott, Paul, Jeff Dewynne, and Sam Howison. 1993. \emph{Option Pricing:
  Mathematical Models and Computation}. London: Oxford Financial Press.

\end{thebibliography}

\appendix

\section{Fourier Transform of the Transition Density Function IPDE}
\label{app-prop-PutCall-FourierTransformedPDE-Proof}

A straightforward application of equation \eqref{eqn-PutCall-FourierTransform} yields $$\scrF_x\left\{\pder[H]{t}\right\} = \pder[\hat{H}]{t} \qquad \scrF_x\left\{\pder[H]{v}\right\} = \pder[\hat{H}]{v} \qquad \scrF_x\left\{\pder[^2H]{v^2}\right\} = \pder[^2\hat{H}]{v^2}.$$ Furthermore, with Assumption \ref{asp-PutCall-FourierTransformAssumption}, integration by parts gives us $$\scrF_x\left\{\pder[H]{x}\right\} = -i\phi\hat{H}, \qquad \scrF_x\left\{\pder[^2H]{x^2}\right\} = -\phi^2\hat{H}, \qquad \scrF_x\left\{\pder[^2H]{x\partial v}\right\} = -i\phi\pder[\hat{H}]{v}.$$ The Fourier transform of the first expectation in equation \eqref{eqn-PutCall-KBE-ChangeofVariable} is the Fourier transform of a convolution-type integral and is calculated as follows:
\begin{align*}
\scrF_x\left\{\E_{\hat{\Q}}^{Y_1}\left[H(t,x+Y_1,v)\right]\right\}
	& = \int_{-\infty}^\infty\int_{-\infty}^\infty e^{i\phi x}H(t,x+y,v)G_1(y)\dif y\dif x\\
	& = \int_{-\infty}^\infty e^{-i\phi y}G_1(y) \left(\int_{-\infty}^\infty e^{i\phi z}H(t,z,v)\dif z\right) \dif y\\
	& = \varphi_1(\phi)\hat{H}(t,\phi,v).
\end{align*}
A similar calculation shows that $\scrF_x\{\E_{\hat{\Q}}^{Y_2}[H(t,x-Y_2,v)]\} = \varphi_2(-\phi)\hat{H}(t,\phi,v).$

Thus, the Fourier transform of equation \eqref{eqn-PutCall-KBE-ChangeofVariable} is
\begin{align*}
0	& = \pder[\hat{H}]{t}+i\phi\left(\tilde{\lambda}_1\tilde{\kappa}_1+\tilde{\lambda}_2\tilde{\kappa}_2^-+\frac{1}{2}\sigma^2 v\right)\hat{H} + \left[\xi\eta-(\xi+\Lambda)v\right]\pder[\hat{H}]{v}\\
	& \qquad - \frac{1}{2}\sigma^2 v\phi^2 \hat{H} + \frac{1}{2}\omega^2 v\pder[^2\hat{H}]{v^2}-\omega\left(\sigma_1\rho_1-\sigma_2\rho_2\right)v i\phi\pder[\hat{H}]{v}\\
	& \qquad + \tilde{\lambda}_1\left(\varphi_1(\phi)-1\right)\hat{H}+\tilde{\lambda}_2\left(\varphi_2(-\phi)-1\right)\hat{H}.
\end{align*}
Factoring the above expression and using the notation introduced in equation \eqref{eqn-PutCall-FourierTransformedPDE-Parameters} yields equation \eqref{eqn-PutCall-FourierTransformedPDE}.

The associated terminal condition is obtained by a straightforward application of equation \eqref{eqn-PutCall-FourierTransform} and the definition of the Dirac-delta function.

\section{Laplace Transform of the Transformed PDE}
\label{app-prop-PutCall-LaplaceTransformedPDE-Proof}

Given that Assumption \ref{asp-PutCall-LaplaceTransformAssumption} holds, we find that
\begin{align*}
\scrL_v\left\{\pder[\hat{H}]{t}\right\} & = \pder[\bar{H}]{t}\\
\scrL_v\left\{\left(\frac{1}{2}\varepsilon v-i\phi\Psi\right)\hat{H}\right\} & = -\frac{1}{2}\varepsilon\pder[\bar{H}]{\vartheta}-i\phi\Psi\bar{H}\\
\scrL_v\left\{(\alpha-\Theta v)\pder[\hat{H}]{v}\right\} & = -\alpha\hat{H}(t,\phi,0)+(\alpha\vartheta+\Theta)\bar{H}+\Theta\vartheta\pder[\bar{H}]{\vartheta}\\
\scrL_v\left\{\frac{1}{2}\omega^2 v\pder[^2\hat{H}]{v^2}\right\} & = \frac{1}{2}\omega^2\hat{H}(t,\phi,0)-\frac{1}{2}\omega^2\vartheta^2\pder[\bar{H}]{\vartheta}-\omega^2\vartheta\bar{H}.
\end{align*}
The Laplace transform of equation \eqref{eqn-PutCall-FourierTransformedPDE} is therefore
\begin{align*}
0 & = \pder[\bar{H}]{t} -\frac{1}{2}\varepsilon\pder[\bar{H}]{\vartheta}-i\phi\Psi\bar{H} -\alpha\hat{H}(t,\phi,0)+(\alpha\vartheta+\Theta)\bar{H}+\Theta\vartheta\pder[\bar{H}]{\vartheta}\\
	& \qquad + \frac{1}{2}\omega^2\hat{H}(t,\phi,0)-\frac{1}{2}\omega^2\vartheta^2\pder[\bar{H}]{\vartheta}-\omega^2\vartheta\bar{H}.
\end{align*}
Factoring and defining $f(t)\equiv (\omega^2/2-\alpha)\hat{H}(t,\phi,0)$ yields equation \eqref{eqn-PutCall-LaplaceTransformedPDE}.

Note that, at this point, $\hat{H}(t,\phi,0)$ is unknown but it has to be determined so that the solution $\bar{H}$ to equation \eqref{eqn-PutCall-LaplaceTransformedPDE} is finite for all $\vartheta>0$. A sufficient condition for this is equation \eqref{eqn-PutCall-LaplaceTransformedPDE-FinitenessCondition} \citep{CheangChiarellaZiogas-2013}.

The terminal condition \eqref{eqn-PutCall-LaplaceTransformedPDE-TerminalCondition} is a consequence of the definition \eqref{eqn-PutCall-LaplaceTransform} and the Dirac-delta function.

\section{Recovering the Transition Density Function}
\label{app-prop-PutCall-TransitionDensityFunction2-Proof}

Using the inversion formula \eqref{eqn-PutCall-InverseFourierTransform} and the expressions for $\Psi$ and $\hat{H}(t,\phi,v)$ in equations \eqref{eqn-PutCall-FourierTransformedPDE-Parameters} and \eqref{eqn-PutCall-InverseLaplaceTransform}, respectively, $H(t,x,v)$ is given by
\begin{align*}
H(t,x,v)
	& = \frac{1}{2\pi}\int_{-\infty}^\infty \exp\left\{-i\phi\left[x-x_T-\tilde{\lambda}_1\tilde{\kappa}_1(T-t)-\tilde{\lambda}_2\tilde{\kappa}_2^-(T-t)\right]\right\}\\
	& \qquad \times \exp\left\{\tilde{\lambda}_1(T-t)[\varphi_1(\phi)-1]\right\} \exp\left\{\tilde{\lambda}_2(T-t)[\varphi_2(-\phi)-1]\right\}\\
	& \qquad \times h(T-t,\phi,v;v_T)\dif\phi,
\end{align*}
where $h$ is the function given by equation \eqref{eqn-PutCall-TransitionDensityFunction2-h}.

We note that $$e^{\tilde{\lambda}_1(T-t)\varphi_1(\phi)} = \sum_{m=0}^\infty \frac{(\tilde{\lambda}_1(T-t))^m}{m!}[\varphi_1(\phi)]^m.$$ If $Y_{1,1},\dots,Y_{1,m}$ are i.i.d. random variables with density function $G_1(y)$, then $[\varphi_1(\phi)]^m = \E_{\hat{\Q}}^{(m)}[e^{-i\phi\Upsilon_{1,m}}]$, where $\Upsilon_{1,m}=\sum_{k=1}^m Y_{1,k}$, since $\varphi_1(\cdot)$ is the characteristic function of each of the $Y_{1,k}$'s. Here, the expectation operator $\E_{\hat{\Q}}^{(m)}$ acts only on $\Upsilon_{1,m}$. It follows that $$\exp\left\{\tilde{\lambda}_1(T-t)[\varphi_1(\phi)-1]\right\} = e^{-\tilde{\lambda}_1(T-t)}\sum_{m=0}^\infty\frac{(\tilde{\lambda}_1(T-t))^m}{m!}\E_{\hat{\Q}}^{(m)}\left[e^{-i\phi\Upsilon_{1,m}}\right].$$ A similar analysis calculation yields $$\exp\left\{\tilde{\lambda}_2(T-t)[\varphi_2(-\phi)-1]\right\} = e^{-\tilde{\lambda}_2(T-t)}\sum_{n=0}^\infty\frac{(\tilde{\lambda}_2(T-t))^n}{n!}\E_{\hat{\Q}}^{(n)}\left[e^{i\phi\Upsilon_{2,n}}\right],$$ where $\Upsilon_{2,n}=\sum_{l=1}^n Y_{2,l}$, $\{Y_{2,1},\dots,Y_{2,n}\}$ is a sample of i.i.d. random variables from a population with density function $G_2(y)$, and $\E_{\hat{\Q}}^{(n)}$ is the expectation operator acting only on $\Upsilon_{2,n}$. We then obtain
\begin{align}
\begin{split}
\label{eqn-PutCall-TransitionDensityFunction2-LogProcess}
H(t,x,v)
	& = \sum_{m=0}^\infty\sum_{n=0}^\infty \frac{(\tilde{\lambda}_1(T-t))^m(\tilde{\lambda}_2(T-t))^n e^{-(\tilde{\lambda}_1+\tilde{\lambda}_2)(T-t)}}{m!n!}\\
	& \qquad \times \E_{\hat{\Q}}^{(m,n)}\Bigg[\frac{1}{2\pi}\int_{-\infty}^\infty \exp\left\{-i\phi\left[x-x_T-\tilde{\lambda}_1\tilde{\kappa}_1(T-t)-\tilde{\lambda}_2\tilde{\kappa}_2^-(T-t)\right]\right\}\\
	& \hspace{75pt} \times \exp\left\{-i\phi\left(\Upsilon_{1,m}-\Upsilon_{2,n}\right)\right\} h(T-t,\phi,v;v_T)\dif\phi\Bigg].
\end{split}
\end{align}
Here, $\E_{\hat{\Q}}^{(m,n)}$ is the expectation acting only on $\Upsilon_{1,m}$ and $\Upsilon_{2,n}$.

Reversing the substitutions $x=\ln\tilde{s}$ and $x_T=\ln\tilde{s}_T$ and combining the resulting exponent under one natural logarithm yields the desired result.

\section{Integral Representation of the Discounted European Exchange Option Price}
\label{app-prop-PutCall-DiscEuExcOpPrice-Proof}

From equation \eqref{eqn-PutCall-EuExcOpPrice} and Lemma \ref{lem-PutCall-HTDFIntegral}, $\tilde{V}(t,\tilde{s},v)$ may be written as
\begin{align*}
\tilde{V}(t,\tilde{s},v)
& = e^{-q_1 T}\sum_{m=0}^\infty \sum_{n=0}^\infty \frac{(\tilde{\lambda}_1(T-t))^m (\tilde{\lambda}_2(T-t))^n e^{-(\tilde{\lambda}_1+\tilde{\lambda}_2)(T-t)}}{m! n!}\\
& \qquad \times \E_{\hat{\Q}}^{(m,n)}\Bigg[\frac{1}{2\pi}\int_{-\infty}^\infty \int_{(q_1-q_2)T}^\infty \left(e^{x_T}-e^{(q_1-q_2)T}\right)e^{i\phi x_T}\\
& \hspace{75pt} \times f\left(T-t,\tilde{s}e^{-(\tilde{\lambda}_1\tilde{\kappa}_1+\tilde{\lambda}_2\tilde{\kappa}_2^-)(T-t)}e^{\Upsilon_{1,m}-\Upsilon_{2,n}},v;-\phi\right)\dif x_T\dif\phi\Bigg].
\end{align*}
To simplify notation, temporarily set $$f(-\phi)\equiv f(T-t,\tilde{s}e^{-(\tilde{\lambda}_1\tilde{\kappa}_1+\tilde{\lambda}_2\tilde{\kappa}_2^-)(T-t)}e^{\Upsilon_{1,m}-\Upsilon_{2,n}},v;-\phi)$$ and $K \equiv (q_1-q_2)T$. Furthermore, let $\mathbb{I}$ denote the double integral inside the above expectation. It follows that $$\mathbb{I} = \int_{-\infty}^\infty \int_K^\infty e^{i(\phi-i)x_T}f(-\phi)\dif x_T\dif\phi - e^K\int_{-\infty}^\infty \int_K^\infty e^{i\phi x_T}f(-\phi)\dif x_T\dif\phi.$$

Denote by $\mathbb{I}_1$ and $\mathbb{I}_2$ the first and second terms, respectively, of $\mathbb{I}$. Integrating with respect to $x_T$, we have
\begin{align*}
\mathbb{I}_1
	& = \int_{-\infty}^\infty f(-\phi)\int_K^\infty e^{i(\phi-i)x_T}\dif x_T\dif\phi = \lim_{b\to\infty} \int_{-\infty}^\infty f(-\phi)\left[\frac{e^{i(\phi-i)b}-e^{i(\phi-i)K}}{i(\phi-i)}\right]\dif\phi\\
	& = \lim_{b\to\infty}\int_{-\infty}^\infty f(-u-i)\left[\frac{e^{iub}-e^{iuK}}{iu}\right]\dif u = \lim_{b\to\infty}\int_{-\infty}^\infty f(\phi-i)\left[\frac{e^{-i\phi K}-e^{-i\phi b}}{i\phi}\right]\dif\phi\\
	& = \int_0^\infty\frac{f(\phi-i)e^{-i\phi K}-f(-\phi-i)e^{i\phi K}}{i\phi}\dif\phi\\
	& \qquad -\lim_{b\to\infty}\int_0^\infty\frac{f(\phi-i)e^{-i\phi b}-f(-\phi-i)e^{i\phi b}}{i\phi}\dif\phi,
\end{align*}
where the last equality is a result of splitting the domain of integration on $\phi$. Observe that
{\small\begin{align*}
f(\phi-i)
	& = f(T-t,\tilde{s}e^{-(\tilde{\lambda}_1\tilde{\kappa}_1+\tilde{\lambda}_2\tilde{\kappa}_2^-)(T-t)}e^{\Upsilon_{1,m}-\Upsilon_{2,n}},v;\phi-i)\\
	& = \tilde{s}e^{-(\tilde{\lambda}_1\tilde{\kappa}_1+\tilde{\lambda}_2\tilde{\kappa}_2^-)(T-t)}e^{\Upsilon_{1,m}-\Upsilon_{2,n}} f_1\left(T-t,\tilde{s}e^{-(\tilde{\lambda}_1\tilde{\kappa}_1+\tilde{\lambda}_2\tilde{\kappa}_2^-)(T-t)}e^{\Upsilon_{1,m}-\Upsilon_{2,n}},v;\phi\right),
\end{align*}}
where $f_1$ and its components are defined in equation \eqref{eqn-PutCall-DiscEuExcOpPrice-Parameters2}. Thus, we may write
{\small\begin{align*}
\mathbb{I}_1
	& = \tilde{s}e^{-(\tilde{\lambda}_1\tilde{\kappa}_1+\tilde{\lambda}_2\tilde{\kappa}_2^-)(T-t)}e^{\Upsilon_{1,m}-\Upsilon_{2,n}}\\
	& \qquad \times \Bigg[\int_0^\infty\frac{f_1(\phi)e^{-i\phi K}-f_1(-\phi)e^{i\phi K}}{i\phi}\dif\phi - \lim_{b\to\infty}\int_0^\infty\frac{f_1(\phi)e^{-i\phi b}-f_1(-\phi)e^{i\phi b}}{i\phi}\dif\phi\Bigg],
\end{align*}}
where $f_1(\phi) \equiv f_1(T-t,\tilde{s}e^{-(\tilde{\lambda}_1\tilde{\kappa}_1+\tilde{\lambda}_2\tilde{\kappa}_2^-)(T-t)}e^{\Upsilon_{1,m}-\Upsilon_{2,n}},v;\phi)$. 

The limit above may be evaluated using the results in \citet{Shephard-1991b}. Let $F_1(x)$ be the cumulative distribution function of a random variable whose mean exists and whose characteristic function is $f_1(\phi)$. Then by \citet[Theorem 3]{Shephard-1991b}, we have $$F_1(b) = \frac{1}{2}-\frac{1}{2\pi}\int_0^\infty \frac{f_1(\phi)e^{-i\phi b}-f_1(-\phi)e^{i\phi b}}{i\phi}\dif \phi.$$ Since $F_1(b)\to 1$ as $b\to\infty$, it follows that $$\mathbb{I}_1 = \tilde{s}e^{-(\tilde{\lambda}_1\tilde{\kappa}_1+\tilde{\lambda}_2\tilde{\kappa}_2^-)(T-t)}e^{\Upsilon_{1,m}-\Upsilon_{2,n}}\left[\pi+\int_0^\infty\frac{f_1(\phi)e^{-i\phi K}-f_1(-\phi)e^{i\phi K}}{i\phi}\dif\phi\right].$$

The same calculations apply for evaluating $\mathbb{I}_2$. Evaluating the integral with respect to $x_T$ and splitting the domain of integration in $\phi$, we have
\begin{align*}
\mathbb{I}_2
	& = \lim_{b\to\infty}\int_{-\infty}^\infty f(-\phi)\left[\frac{e^{i\phi b}-e^{i\phi K}}{i\phi}\right]\dif\phi\\
	& = \int_0^\infty\frac{f(\phi)e^{-i\phi K}-f(-\phi)e^{i\phi K}}{i\phi}\dif\phi - \lim_{b\to\infty}\int_0^\infty\frac{f(\phi)e^{-i\phi b}-f(-\phi)e^{i\phi b}}{i\phi}\dif\phi.
\end{align*}
Using \citet[Theorem 3]{Shephard-1991b} to evaluate the limit, we find that $$\mathbb{I}_2 = \pi+\int_0^\infty\frac{f(\phi)e^{-i\phi K}-f(-\phi)e^{i\phi K}}{i\phi}\dif\phi.$$

Using these results to rewrite the double integral in $\tilde{V}(t,\tilde{s},v)$, putting back $(q_1-q_2)T$ in place of $K$, and writing the resulting expression in terms of $P_1^E$ and $P_2^E$ yield the expression presented in the proposition.

\section{Integral Representation of the Early Exercise Premium}
\label{app-prop-PutCall-EarlyExercisePremium-IntegralRepresentation-Proof}

Equation \eqref{eqn-PutCall-earlyExercisePremium4} for the early exercise premium is written such that $\tilde{V}^P_D$, $\tilde{V}^P_{J_1}$, and $\tilde{V}^P_{J_2}$ denote the first, second, and third expectations, respectively, that appear in equation \eqref{eqn-PutCall-EarlyExercisePremium3}. These terms may be interpreted as the diffusion and jump components, respectively, of the early exercise premium. In each term, the indicator function $\vm{1}(\calA(u))$ appears, where $u\in[t,T]$. In terms of $x_u$ and $v_u$, the event $\calA(u)$ may be written as $\calA(u) = \{x_u \geq \ln A(u,v_u)\}$.

\textbf{\underline{Evaluating $\tilde{V}^P_D$:}} From equation \eqref{eqn-PutCall-EarlyExercisePremium3}, $\tilde{V}^P_D$ is given by $$\tilde{V}^P_D(t,\tilde{s},v) = \E_{\hat{Q}}\Bigg[\left.\int_t^T e^{-q_1 u}\left(q_1 \tilde{s}_u-q_2 e^{(q_1-q_2)u}\right)\vm{1}(\calA(u))\right|\tilde{S}(t) = \tilde{s}, v(t) = v\Bigg],$$ and in terms of the transition density function $H$ in equation \eqref{eqn-PutCall-TransitionDensityFunction2-LogProcess}, we have $$\tilde{V}^P_D(t,\tilde{s},v) = \int_t^T \int_0^\infty \int_{\ln A(u,v_u)}^\infty e^{-q_1 u}\left(q_1 e^{x_u}-q_2 e^{K(u)}\right)H(u,x_u,v_u;t,\ln\tilde{s},v)\dif x_u\dif v_u \dif t,$$ where $K(u)\equiv (q_1-q_2)u$. Using equation \eqref{eqn-PutCall-TransitionDensityFunction2-LogProcess}, we find that
\begin{align*}
\tilde{V}^P_D(t,\tilde{s},v)
& = \sum_{m=0}^\infty \sum_{n=0}^\infty \frac{\tilde{\lambda}_1^m\tilde{\lambda}_2^n}{m!n!}\E_{\hat{\Q}}^{(m,n)}\Bigg[\int_t^T \int_0^\infty (u-t)^{m+n} e^{-(\tilde{\lambda}_1+\tilde{\lambda}_2)(u-t)}e^{-q_1 u}\\
& \qquad \times \frac{1}{2\pi}\int_{\ln A(u,v_u)}^\infty \left(q_1 e^{x_u}-q_2 e^{K(u)}\right)\int_{-\infty}^\infty e^{i\phi x_u}\\
& \qquad \times \exp\left\{-i\phi\ln\left(\tilde{s}e^{-(\tilde{\lambda}_1\tilde{\kappa}_1+\tilde{\lambda}_2\tilde{\kappa}_2^-)(u-t)}e^{\Upsilon_{1,m}-\Upsilon_{2,n}}\right)\right\}\\
& \qquad \times h(u-t,\phi,v;v_u)\dif\phi \dif x_u \dif v_u \dif u \Bigg].
\end{align*}
In terms of the function $f_2$ introduced in the proposition, we can write
\begin{align}
\begin{split}
\label{eqn-PutCall-EarlyExercisePremium-Diffusion2}
\tilde{V}^P_D(t,\tilde{s},v)
& = \sum_{m=0}^\infty \sum_{n=0}^\infty \frac{\tilde{\lambda}_1^m\tilde{\lambda}_2^n}{m!n!}\E_{\hat{\Q}}^{(m,n)}\Bigg[\int_t^T \int_0^\infty (u-t)^{m+n} e^{-(\tilde{\lambda}_1+\tilde{\lambda}_2)(u-t)}e^{-q_1 u}\\
& \qquad \times \frac{1}{2\pi}\int_{-\infty}^\infty f_2(-\phi)\int_{\ln A(u,v_u)}^\infty \left(q_1 e^{x_u}-q_2 e^{K(u)}\right) e^{i\phi x_u}\dif x_u \dif \phi \dif v_u \dif u\Bigg],
\end{split}
\end{align}
where we set $f_2(-\phi) \equiv f_2(u-t,\tilde{s}e^{-(\tilde{\lambda}_1\tilde{\kappa}_1+\tilde{\lambda}_2\tilde{\kappa}_2^-)(u-t)}e^{\Upsilon_{1,m}-\Upsilon_{2,n}},y;-\phi,v_u)$.

We next integrate the innermost double integral, which we denote by $\mathbb{I}^*$, using the techniques employed in the proof of Proposition \ref{prop-PutCall-DiscEuExcOpPrice}. To this end, we make a secondary representation $\mathbb{I}^* = q_1 \mathbb{I}^*_1-q_2 e^{K(u)}\mathbb{I}_2^*$, where
\begin{align*}
\mathbb{I}_1^* & = \int_{-\infty}^\infty f_2(-\phi) \int_{\ln A(u,v_u)}^\infty e^{i(\phi-i)x_u}\dif x_u \dif\phi\\
\mathbb{I}_2^* & = \int_{-\infty}^\infty f_2(-\phi) \int_{\ln A(u,v_u)}^\infty e^{i\phi x_u}\dif x_u \dif\phi.
\end{align*}

Evaluating the integral with respect to $x_u$, $\mathbb{I}_1^*$ can be written as
\begin{align*}
\mathbb{I}_1^*
	& = \lim_{b\to\infty}\int_{-\infty}^\infty f_2(-\phi)\left[\frac{e^{i(\phi-i)b}-e^{i(\phi-i)\ln A(u,v_u)}}{i(\phi-i)}\right]\dif\phi\\
	& = \lim_{b\to\infty}\int_{-\infty}^\infty f_2(\phi-i)\left[\frac{e^{-i\phi\ln A(u,v_u)}-e^{-i\phi b}}{i\phi}\right]\dif \phi\\
	& = \int_0^\infty \frac{f_2(\phi-i)e^{-i\phi\ln A(u,v_u)}-f_2(-\phi-i)e^{i\phi\ln A(u,v_u)}}{i\phi}\dif\phi\\
	& \qquad - \lim_{b\to\infty}\int_0^\infty \frac{f_2(\phi-i)e^{-i\phi b}-f_2(-\phi-i)e^{i\phi b}}{i\phi}\dif\phi.
\end{align*}
Note that
\begin{align*}
f_2(\phi-i)
	& \equiv f_2\left(u-t,\tilde{s}e^{-(\tilde{\lambda}_1\tilde{\kappa}_1+\tilde{\lambda}_2\tilde{\kappa}_2^-)(u-t)}e^{\Upsilon_{1,m}-\Upsilon_{2,n}},y;\phi-i,v_u\right)\\
	& = \tilde{s}e^{-(\tilde{\lambda}_1\tilde{\kappa}_1+\tilde{\lambda}_2\tilde{\kappa}_2^-)(u-t)}e^{\Upsilon_{1,m}-\Upsilon_{2,n}} \\
	& \qquad \times f_{21}\left(u-t,\tilde{s}e^{-(\tilde{\lambda}_1\tilde{\kappa}_1+\tilde{\lambda}_2\tilde{\kappa}_2^-)(u-t)}e^{\Upsilon_{1,m}-\Upsilon_{2,n}},v;\phi,v_u\right),
\end{align*}
where $f_{21}$ is defined in the proposition. For ease of notation, let $f_{21}(\phi)$ be shorthand for $f_{21}(u-t,\tilde{s}e^{-(\tilde{\lambda}_1\tilde{\kappa}_1+\tilde{\lambda}_2\tilde{\kappa}_2^-)(u-t)}e^{\Upsilon_{1,m}-\Upsilon_{2,n}},v;\phi,v_u)$. With this representation, $\mathbb{I}_1^*$ can be written as
\begin{align*}
\mathbb{I}_1^*
	& = \tilde{s}e^{-(\tilde{\lambda}_1\tilde{\kappa}_1+\tilde{\lambda}_2\tilde{\kappa}_2^-)(u-t)}e^{\Upsilon_{1,m}-\Upsilon_{2,n}}\\
	& \qquad \times \Bigg[\int_0^\infty \frac{f_{21}(\phi)e^{-i\phi\ln A(u,v_u)}-f_{21}(-\phi)e^{i\phi\ln A(u,v_u)}}{i\phi}\dif\phi\\
	& \qquad \qquad -\lim_{b\to\infty}\int_0^\infty\frac{f_{21}(\phi)e^{-i\phi b}-f_{21}(-\phi)e^{i\phi b}}{i\phi}\dif\phi\Bigg]\\
	& = \tilde{s}e^{-(\tilde{\lambda}_1\tilde{\kappa}_1+\tilde{\lambda}_2\tilde{\kappa}_2^-)(u-t)}e^{\Upsilon_{1,m}-\Upsilon_{2,n}}\\
	& \qquad \times \Bigg[\pi+\int_0^\infty \frac{f_{21}(\phi)e^{-i\phi\ln A(u,v_u)}-f_{21}(-\phi)e^{i\phi\ln A(u,v_u)}}{i\phi}\dif\phi\Bigg],
\end{align*}
where the limit was evaluated using Theorem 3 of \citet{Shephard-1991b}.

Similar calculations apply in simplifying $\mathbb{I}_2^*$. Integration with respect to $x_u$ yields
\begin{align*}
\mathbb{I}_2^*
	& = \lim_{b\to\infty}\int_{-\infty}^\infty f_2(-\phi)\left[\frac{e^{i\phi b}-e^{i\phi\ln A(u,v_u)}}{i\phi}\right]\dif\phi\\
	& = \int_0^\infty \frac{f_2(\phi)e^{-i\phi\ln A(u,v_u)}-f_2(-\phi)e^{i\phi\ln A(u,v_u)}}{i\phi}\dif\phi\\
	& \qquad - \lim_{b\to\infty}\int_0^\infty \frac{f_2(\phi)e^{-i\phi b}-f_2(-\phi)e^{i\phi b}}{i\phi}\dif\phi.
\end{align*}
By Theorem 3 of \citet{Shephard-1991b}, the limit reduces to $-\pi$, giving us $$\mathbb{I}_2^* = \pi+\int_0^\infty \frac{f_2(\phi)e^{-i\phi\ln A(u,v_u)}-f_2(-\phi)e^{i\phi\ln A(u,v_u)}}{i\phi}\dif\phi.$$

From our calculations, we now find that $\mathbb{I}^*$ is given by
\begin{align*}
\mathbb{I^*}
	& = q_1\tilde{s}e^{-(\tilde{\lambda}_1\tilde{\kappa}_1+\tilde{\lambda}_2\tilde{\kappa}_2^-)(u-t)}e^{\Upsilon_{1,m}-\Upsilon_{2,n}}\\
	& \qquad \times \Bigg(\pi+\int_0^\infty \frac{f_{21}(\phi)e^{-i\phi\ln A(u,v_u)}-f_{21}(-\phi)e^{i\phi\ln A(u,v_u)}}{i\phi}\dif\phi\Bigg)\\
	& \qquad - q_2 e^{(q_1-q_2)u}\Bigg(\pi+\int_0^\infty \frac{f_2(\phi)e^{-i\phi\ln A(u,v_u)}-f_2(-\phi)e^{i\phi\ln A(u,v_u)}}{i\phi}\dif\phi\Bigg).
\end{align*}
Substituting this expression in place of the innermost double integral in equation \eqref{eqn-PutCall-EarlyExercisePremium-Diffusion2}, multiplying by $e^{-q_1 u}/(2\pi)$, and expressing the resulting expression in terms of $P_1^A$ and $P_2^A$ (as defined in the proposition) yields equation \eqref{eqn-PutCall-EarlyExercisePremium-Diffusion} for the diffusion component of the early exercise premium.

\textbf{\underline{Evaluating the jump components:}} From equation \eqref{eqn-PutCall-EarlyExercisePremium3} and with a change of variable from $\tilde{s}_u$ to $x_u$, $\tilde{V}^P_{J_1}$ can be expressed as
\begin{align*}
\tilde{V}^P_{J_1}(t,\tilde{s},v)
	& = \E_{\hat{\Q}}\Bigg[\int_t^T\int_{-\infty}^{\ln A(u,v_u)-x_u}\left(\tilde{V}^A(u,e^{x_u+y},v_u)-\left(e^{-q_1 u}e^{x_u+y}-e^{-q_2 u}\right)\right)\\
	& \qquad \times G_1(y)\vm{1}(\calA_u))\dif y\dif u\bigg| X(t) = \ln\tilde{s}, v(t) = v\Bigg],
\end{align*}
and in terms of the transition density function $H$, we have
{\small\begin{align*}
\tilde{V}^P_{J_1}(t,\tilde{s},v) 
& = \int_t^T\int_0^\infty\int_{\ln A(u,v_u)}^\infty \int_{-\infty}^{\ln A(u,v_u)-x_u}\left(\tilde{V}^A(u,e^{x_u+y},v_u)-\left(e^{-q_1 u}e^{x_u+y}-e^{-q_2 u}\right)\right)\\
& \qquad \times G_1(y) H(u,x_u,v_u;t,\ln\tilde{s},v)\dif y\dif x_u\dif v_u \dif u\\
& = \int_t^T \int_0^\infty \int_{-\infty}^0 \int_{\ln A(u,v_u)}^{\ln A(u,v_u)-y}\left(\tilde{V}^A(u,e^{x_u+y},v_u)-\left(e^{-q_1 u}e^{x_u+y}-e^{-q_2 u}\right)\right)\\
& \qquad \times G_1(y) H(u,x_u,v_u;t,\ln\tilde{s},v)\dif x_u \dif y \dif v_u \dif u,
\end{align*}}
where the last expression was obtained by changing the order of integration with respect to $x_u$ and $y$. Substitution of equation \eqref{eqn-PutCall-TransitionDensityFunction2-LogProcess} for $H$ in the equation above yields equation \eqref{eqn-PutCall-EarlyExercisePremium-J1} in the proposition.

We follow similar steps in evaluating the second jump component. From equation \eqref{eqn-PutCall-EarlyExercisePremium3} and in terms of $x_u$, $\tilde{V}^P_{J_1}$ can be written as
\begin{align*}
\tilde{V}^P_{J_2}(t,\tilde{s},v)
	& = \E_{\hat{\Q}}\Bigg[\int_t^T\int_{x_u-\ln A(u,v_u)}^\infty\left(\tilde{V}^A(u,e^{x_u+y},v_u)-\left(e^{-q_1 u}e^{x_u+y}-e^{-q_2 u}\right)\right)\\
	& \qquad \times G_2(y)\vm{1}(\calA_u))\dif y\dif u\bigg| X(t) = \ln\tilde{s}, v(t) = v\Bigg].
\end{align*}
We may equivalently write this in terms of the transition density function as
{\small\begin{align*}
\tilde{V}^P_{J_2}(t,\tilde{s},v)
& = \int_t^T\int_0^\infty\int_{\ln A(u,v_u)}^\infty \int_{x_u-\ln A(u,v_u)}^\infty\left(\tilde{V}^A(u,e^{x_u+y},v_u)-\left(e^{-q_1 u}e^{x_u+y}-e^{-q_2 u}\right)\right)\\
& \qquad \times G_2(y) H(u,x_u,v_u;t,\ln\tilde{s},v)\dif y\dif x_u\dif v_u \dif u\\
& = \int_t^T \int_0^\infty \int_0^\infty \int_{\ln A(u,v_u)}^{\ln A(u,v_u)+y}\left(\tilde{V}^A(u,e^{x_u+y},v_u)-\left(e^{-q_1 u}e^{x_u+y}-e^{-q_2 u}\right)\right)\\
& \qquad \times G_2(y) H(u,x_u,v_u;t,\ln\tilde{s},v) \dif x_u \dif y\dif v_u \dif u
\end{align*}}
Here, we changed the order of integration with respect to $x_u$ and $y$ to obtain the last expression. Replacing $H$ with its expression in equation \eqref{eqn-PutCall-TransitionDensityFunction2-LogProcess} gives us equation \eqref{eqn-PutCall-EarlyExercisePremium-J2} for the second jump component.

\end{document}